\newcommand{\N}{\mathbb{N}}
\newcommand{\Z}{\mathbb{Z}}
\newcommand{\R}{\mathbb{R}}
\newcommand{\C}{\mathbb{C}}
\newcommand{\F}{\mathcal{F}}
\newcommand{\M}{\mathcal{M}}
\newcommand{\intd}{\,\mathrm{d}}
\newcommand{\e}{\mathrm{e}}
\newcommand{\Tcal}{\mathcal{T}}
\newcommand{\Ucal}{\mathcal{U}}
\DeclareMathOperator{\sinc}{\mathrm{sinc}}
\DeclareMathOperator{\supp}{\mathrm{supp}}
\DeclareMathOperator{\grad}{\mathrm{grad}}
\DeclareMathOperator{\diag}{\mathrm{diag}}
\DeclareMathOperator*{\argmin}{\mathrm{argmin}}
\DeclareMathOperator*{\argmax}{\mathrm{argmax}}
\newtheorem{theorem}{Theorem}
\newtheorem{remark}{Remark}
\newtheorem{lemma}{Lemma}
\let\uppercasenonmath\@gobble%
\let\scshape\relax%
\def\specialsection{\@startsection{section}{1}%
  \z@{\linespacing\@plus\linespacing}{.5\linespacing}%
  {\normalfont}}%
\def\section{\@startsection{section}{1}%
  \z@{.7\linespacing\@plus\linespacing}{.5\linespacing}%
  {\normalfont\scshape\bfseries}}
\begin{document}

\title{\bf\vspace{-39pt}Super-Resolution for Doubly-Dispersive Channel Estimation}
\author{
  \normalsize Robert Beinert\\
  \small Institute of Mathematics, Technische Universität Berlin\\
  \small Stra\ss{}e des 17.~Juni~136, Berlin, 10623, Germany\\
  \small beinert@math.tu-berlin.de\\
  \\
  \normalsize Peter Jung\\
  \small Communications and Information Theory Group,
  Technische Universität Berlin\\
  \small Einsteinufer~25, Berlin, 10587, Germany\\
  \small peter.jung@tu-berlin.de\\
  \\
  \normalsize Gabriele Steidl\\
  \small Institute of Mathematics, Technische Universität Berlin\\
  \small Stra\ss{}e des 17.~Juni~136, Berlin, 10623, Germany\\
  \small steidl@math.tu-berlin.de\\
  \\
  \normalsize Tom Szollmann\\
  \small Einsteinufer~25, Berlin, 10587, Germany\\
  \small t.szollmann@campus.tu-berlin.de}
\date{27. January 2021}
\maketitle
\thispagestyle{fancy}

\markboth{\footnotesize \rm \hfill R. BEINERT, P. JUNG, G. STEIDL, T. SZOLLMANN \hfill}
{\footnotesize \rm \hfill SUPERRESOLUTION FOR DOUBLY-DISPERSIVE CHANNEL ESTIMATION \hfill}

\begin{abstract} 
  In this work we consider the problem of identification and
  reconstruction of doubly-dispersive channel operators which are
  given by finite linear combinations of time-frequency shifts. Such
  operators arise as time-varying linear systems for example in radar
  and wireless communications. In particular, for information
  transmission in highly non-stationary environments the channel needs
  to be estimated quickly with identification signals of short
  duration and for vehicular application simultaneous high-resolution
  radar is desired as well. We consider the time-continuous setting
  and prove an exact resampling reformulation of the involved channel
  operator when applied to a trigonometric polynomial as identifier in
  terms of sparse linear combinations of real-valued atoms. Motivated
  by recent works of Heckel et al.  we present an exact approach for
  off-the-grid superresolution which allows to perform the
  identification with realizable signals having compact support.  Then
  we show how an alternating descent conditional gradient algorithm
  can be adapted to solve the reformulated problem.  Numerical
  examples demonstrate the performance of this algorithm, in
  particular in comparison with a simple adaptive grid refinement
  strategy and an orthogonal matching pursuit algorithm.
  \vspace{5mm} \\
\noindent {\it Key words and phrases}:
super-resolution, channel estimation, doubly-dispersive,
time-frequency, sampling
\vspace{3mm}\\
\noindent {\it 2010 AMS Mathematics Subject Classification} --- 47A62,
65R30, 65T99, 94A20
\end{abstract}

\section{Introduction}\label{sec:intro}
\noindent
Sensing and information retrieval in highly non-stationary
environments are challenging inverse problems in radar and sonar
applications, and their fundamental understanding is also required for
future wireless communication in very rapidly time-varying mobile
scenarios.  In such problems, the task is to identify or estimate
channel parameters in a robust manner by probing the channel with a
particular identifier signal $w$ of finite duration, also called pilot
signal.  In radar, for example, a known radar waveform is transmitted
and from the received reflections, distance and relative velocity of a
target can be obtained by estimating delay and Doppler shifts.
Several reflections superimpose at the receiver, hence the core task
consists in estimating the multiple time-frequency shifts from
finitely many samples of the received signal:
\begin{equation}
  y(t) = \sum_{s=1}^S \eta_s w(t - \tau_s) \e^{2 \pi i \nu_s t}
\end{equation}
taken within a finite observation interval.  Here each triplet
$(\eta_s, \tau_s, \nu_s)$ can be interpreted as a particular
transmission path with a delay $\tau_s$ and Doppler-shift $\nu_s$ due
to relative distance and velocity, respectively, with a complex-valued
attenuation factor $\eta_s$. This so called tapped delay-line model,
is a special case of a doubly-dispersive (or linear time-variant)
channel, where the spreading function is a (finite) point measure.
For more details on this terminology, see for example classical works
\cite{Bello69,Kailath62}.  Intuitively, it is clear that simultaneous
accuracy in time and frequency are governed by the uncertainty
relation and that the shape of the waveform should fit time and
frequency dispersion of the channel. However, often only few
scatterers are affecting the wave propagation and therefore the number
of time-frequency shifts is rather small compared to the number of
samples one may acquire at the receiver.

In so-called coherent communication the wireless channel needs to be
estimated to equalize unknown data signals consecutively or
simultaneously transmitted with the pilot signal. This principle is
used for example in orthogonal frequency-division multiplexing (OFDM)
modulation scheme \cite{Chang66} which is implemented in many of
today's communication technologies like WiFi, LTE and 5G standards, as
well as broadcasting systems like DAB and certain DVB standards
\cite{Kumar15}.  Thus, the first goal here is to estimate the action
of the channel operator on a particular restricted class of data
signals. A channel which is exclusively time- or frequency-selective,
reduces to convolutions or multiplication operators and equalization
(inverting the action of the channel) is then often possible via
conventional deconvolution techniques. In the doubly-selective case
however, more advanced equalization approaches are necessary to deal
with self-interference effects. For this purpose the delay-Doppler
shifts are usually approximated to lie on a-priori fixed lattices
leading to leakage effects \cite{CSPC11}.  In essence, the intrinsic
sparsity of channel does not carry over to the approximated model,
rendering compressed sensing methods like \cite{THER10,Pfander2010}
much less effective.

In radar instead it is important to achieve high resolution on the
time-frequency shift parameters itself.  However, in future high
mobility vehicular communication \cite{KCGH18} and automotive
applications both aspects will become relevant, i.e., discover the
instantaneous neighborhood using radar and simultaneous communicating
with other vehicles or road side units. In particular, combined radar
and communication transceivers which simultaneously shall use the same
hardware and frequency band for both tasks, are recently proposed and
investigated in the literature, see exemplary
\cite{Liu:TCOMM20}. However, since the propagation environment may
change in such vehicular applications as well on a short time-scale
and usually in an almost unpredictable manner, it is also important to
perform channel estimation and radar in short time cycles with short
signals. The traffic type in automotive applications also enforces to
ensure strict latency requirements in communication for decoding the
equalized data signals.

Beside the practical needs for advanced signal processing algorithms
in this challenging engineering field, the estimation problem itself
has been attracted researchers working in harmonic analysis.  First
works in this field and from the perspective of channel identification
are due Pfander et al. \cite{Pfander06:opid}.  Identifiying a linear
operator with restricted spreading, i.e., with bandlimited symbol has
been investigated in \cite{KrahmerPfander14}.

Finally, we like to mention that there exist other methods for
superresolution as Prony-like methods
\cite{CH2015,KMPO2018,LF2016,PT2013,SP2020}.  These are spectral
methods which perform spike localization from low frequency
measurements.  They not need any discretization and recover the
initial signal as long as there are enough observations.  So far we
have not examined if and how such methods could be applied for our
specific modulation-translation setting.
\\

\noindent
{\bfseries Main contribution.}
The main contribution of this paper is twofold. First, we establish an
exact sampling formula for operators which are sparse complex linear
combinations of modulation and translation operators
$$
H = \sum_{s=1}^S \eta_s M_{\nu_s}T_{\tau_s}
$$
applied to (truncated) trigonometric polynomials $w$ as identifiers.
The basic resampling idea goes back to the work of Heckel et
al. \cite{HMS14}, where the problem to identify the parameters
$\eta_s$, $\nu_s$, $\tau_s$ of the unknown operator $H$ is
approximated by a discrete formulation without explicitly accounting
for the employed function spaces and by applying an approximate
sampling formula.  Using trigonometric polynomials as identifiers, we
derive an explicit resampling formula for the continuous problem such
that we can completely avoid the approximation errors in \cite{HMS14}.
By this, we also overcome particular parameter limitations in the
original proof since we not directly couple time-bandwidth limitation
of operator and the identifier.

As a second main result we provide explicit algorithmic reconstruction
approaches.  Our sampling reformulation allows the straightforward
application of standard modifications of the conditional gradient
method, also known as Frank-Wolfe algorithm, to determine the
amplitudes $\eta_s \in \C$ and the two-dimensional positions
$(\tau_s,\nu_s)$.  Here we focus on the alternating direction
conditional gradient (ADCG) algorithm propose by Boyd et
al. \cite{BSR17}.  The corresponding optimization problem takes noise
into account and penalizes the sparsity of the above linear
combination by the $\ell_1$-norm of the amplitudes.  The optimization
problem can be rephrased in terms of atomic measures, where the
$\ell_1$-norm is directly related to the total variation norm of the
measure, resp. to the atomic norm of a certain set of atoms.  Such
problems are known as BLASSO \cite{DGHL2017}.  Besides Frank-Wolfe
like algorithms that minimize the location parameters over a
continuous domain, a common approach consists in constraining the
locations to lie on a grid.  This leads to a finite dimensional convex
optimization problems, known as LASSO \cite{Tib1996} or basis pursuit
\cite{CDS1998} , for which there exist numerous solvers
\cite{CW2005,DDM2004,EHJT2004,TL2008}.  We will compare the ADCG
applied to our resampled problem with a grid method, where we
incorporate an adaptive grid refinement.  As a third group of methods,
we like to mention the reformulation of the optimization problem via
its dual into an equivalent finite dimensional semi-definite program
(SDP).  This technique was first proposed in \cite{CandesGranda12} and
then adapted by many other authors.  However, the equivalence of
formulations are only true in the one-dimensional setting and in
higher dimensions one needs to use e.g. the so-called Lassere
hierarchy \cite{DGHL2017}.  An SDP approach for our two-dimensional
setting based on a results of \cite{Dumitrescu07} was also proposed in
the paper of Heckel et al. \cite{HMS14}.  Since this approach appears
to be highly expansive both in time and memory requirement and has
moreover to fight with many non specific local maxima related to the
so-called dual certificate, it is not appropriate for our setting.

This paper is organized as follows: In Section \ref{sec:prelim}, we
collect the basic notation and results from Fourier analysis and
measure theory which are needed in the following sections. At the end
of the section we establish a theorem which relates trigonometric
polynomials with periodic functions arising from the Fourier transform
of compactly supported measures.  The proof of the theorem is given in
Appendix \ref{app:identifier}.  In Section \ref{sec:main}, we
formulate our superresolution problem for doubly-dispersive channel
estimations.  More precisely, we are interested in the two-dimensional
parameter detection of sparse linear combinations of
translation-modulation operators.  Instead of treating the original
problem, we give a sampling reformulation of the involved
translation-modulation operators for identifiers which are
trigonometric polynomials.  Here the relation between these
polynomials and Fourier transforms of measures will play a role.
Since the identifiers have only to evaluated at points lying in a
compact interval, our choice implies no restriction for practical
purposes.  In Section \ref{sec:main}, we prove the sampling theorem
for translation-modulation operators applied to trigonometric
polynomials. Then, in Section \ref{sec:algs}, we show how an
alternating descent conditional gradient algorithm can be applied to
solve the reformulated problem.  Finally, we demonstrate the
performance of this algorithm in comparison with simple adaptive grid
refinement algorithm and an orthogonal matching pursuit method in
Section \ref{sec:numerics}.

\section{Preliminaries}\label{sec:prelim}
\noindent
{\bfseries Function spaces.}
Let $I$ be an open finite interval of $\mathbb R$ or $\mathbb R$
itself.  By $C(I)$ we denote the space of complex-valued continuous
functions on $I$, by $C_b(I)$ the Banach space of bounded functions
endowed with the norm $\|f\|_\infty = \sup_{x \in I} |f(x)|$.
Further, let $C_0(\R) \subset C_b(\R)$ be the closed subspace of
functions vanishing at infinity.  Let $L^p(I)$, $p \in[1,\infty]$ be
the Banach space of (equivalence classes) of complex-valued Borel
measurable functions with finite norm
\begin{equation} \label{L_p-norm}
  \|f\|_p = \begin{cases}
    \left( \int_{I} |f(x)|^p \intd x \right)^\frac{1}{p}, 
    & 1 \leq p < \infty,\\
    \mathrm{ess}	\sup_{x \in I} |f(x)|, 
    & p = \infty.
  \end{cases}
\end{equation}
For compact $I$, it holds
$L^1(I) \supset L^r(I) \supset L^s(I) \supset L^\infty (I)$, $r < s$.

An \emph{entire} (holomorphic) \emph{function} $f : \C \to \C$ is of
\emph{exponential type} if there exist positive constants $A, B > 0$
such that
\begin{equation}
  | f(z) | \leq A \e^{B |z|}
  \quad \text{for all $z \in \C$.}
\end{equation}
The \emph{exponential type} of $f$ is then defined as the number
\begin{equation}
  \sigma \coloneqq \limsup_{r \to \infty} \frac{\log M(r)}{r},
  \qquad
  M(r) \coloneqq \sup_{|z| = r} |f(z)|.
\end{equation}
The \emph{Bernstein space} $B_\sigma^p$, $p \in [1,\infty]$ consist of
all functions $f$ of exponential type $\sigma$ whose restriction to
$\R$ belongs to $L^p(\R)$.  Endowed with the $L^p$ norm, $B_\sigma^p$
becomes a Banach space, too.  We will need the following sampling
result of Nikol'skii \cite{Nikol'skii1975}.

\begin{theorem}[{Nikol'skii's Inequality \cite[Thm~6.8]{Higgins99}}]
  \label{thm:nikolskiis_inequality}
  Let $p \in [1, \infty]$. Then, for every $f \in B_\sigma^p$ and
  $a > 0$, we have
  \begin{equation}
    \|f\|_p^p
    \leq \sup_{x \in \R} \left\{ a \sum_{k \in \Z} | f(x - a
      k)|^p \right\}
    \leq (1 + a \sigma)^p \|f\|_p^p.
  \end{equation}
\end{theorem}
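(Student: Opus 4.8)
\noindent
The plan is to treat the two inequalities separately; the left one is elementary and needs no bandlimitation, while the right one is the genuine Nikol'skii estimate and rests on Bernstein's inequality for Bernstein spaces. For the left inequality, fix $a>0$ and observe that $g(x)\coloneqq a\sum_{k\in\Z}|f(x-ak)|^p$ is $a$-periodic. Tiling $\R$ by the half-open intervals $[aj,a(j+1))$, $j\in\Z$, and using $\sum_{k\in\Z}|f(x+ak)|^p=\sum_{k\in\Z}|f(x-ak)|^p=g(x)/a$, one obtains $\|f\|_p^p=\int_\R|f(x)|^p\intd x=\int_0^a\tfrac1a g(x)\intd x\le\sup_{x\in\R}g(x)$, which is the claim. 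The endpoint $p=\infty$ is degenerate: with the natural reading of the middle term one has $\sup_{x\in\R}\sup_{k\in\Z}|f(x-ak)|=\|f\|_\infty$ (since $x-ak$ ranges over all of $\R$), so both inequalities are trivial; hence I assume $p<\infty$ from now on.

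For the right inequality I would first reduce to an estimate at the origin. If $g(t)\coloneqq f(x_0-t)$ for a fixed $x_0\in\R$, then $g\in B_\sigma^p$ (reflection and translation preserve the exponential type and the $L^p(\R)$ norm) with $\|g\|_p=\|f\|_p$, and $a\sum_{k\in\Z}|f(x_0-ak)|^p=a\sum_{k\in\Z}|g(ak)|^p$; so it suffices to prove $a\sum_{k\in\Z}|f(ak)|^p\le(1+a\sigma)^p\|f\|_p^p$ for every $f\in B_\sigma^p$ and then pass to the supremum over $x_0$. In fact I would establish the slightly sharper bound $a\sum_{k\in\Z}|f(ak)|^p\le(1+ap\sigma)\|f\|_p^p$ and then conclude with Bernoulli's inequality $(1+u)^p\ge1+pu$, valid for $u\ge0$ and $p\ge1$.

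To obtain the sharper bound, set $I_k\coloneqq[ak-\tfrac a2,ak+\tfrac a2]$, so the $I_k$ tile $\R$. Since $f$ is entire, $s\mapsto|f(s)|^p$ is locally absolutely continuous with $\bigl|(|f|^p)'\bigr|\le p\,|f|^{p-1}\,|f'|$ almost everywhere, so the fundamental theorem of calculus gives, for every $t\in I_k$, the estimate $|f(ak)|^p\le|f(t)|^p+p\int_{I_k}|f(s)|^{p-1}|f'(s)|\intd s$. Integrating this over $t\in I_k$ (an interval of length $a$) and summing over $k\in\Z$ yields $a\sum_{k\in\Z}|f(ak)|^p\le\|f\|_p^p+ap\int_\R|f(s)|^{p-1}|f'(s)|\intd s$. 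By Hölder's inequality with exponents $\tfrac{p}{p-1}$ and $p$ the last integral is at most $\|f\|_p^{p-1}\|f'\|_p$, and Bernstein's inequality on $B_\sigma^p$ gives $\|f'\|_p\le\sigma\|f\|_p$. Hence $a\sum_{k\in\Z}|f(ak)|^p\le(1+ap\sigma)\|f\|_p^p\le(1+a\sigma)^p\|f\|_p^p$, and taking the supremum over $x_0$ finishes the proof.

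The only point that needs care is the regularity of $|f|^p$ at the zeros of $f$, which is what legitimizes the differentiation step: if $f\not\equiv0$ its zero set is discrete, hence of Lebesgue measure zero, and for $p\ge1$ the almost-everywhere derivative $p\,|f|^{p-2}\operatorname{Re}\!\bigl(\overline f\,f'\bigr)$ remains bounded on compact sets, so $|f|^p$ is indeed locally absolutely continuous with $\bigl|(|f|^p)'\bigr|\le p\,|f|^{p-1}\,|f'|$ a.e.\ (the case $f\equiv0$ being trivial). Everything else is routine once Bernstein's inequality for Bernstein spaces is taken as known, and that is classical.
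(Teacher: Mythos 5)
Your proof is correct. The paper does not prove this statement at all --- it is quoted verbatim from Higgins \cite[Thm~6.8]{Higgins99} --- and your argument (periodization and tiling for the left inequality; translation to reduce to the lattice $a\Z$, the fundamental theorem of calculus for $|f|^p$ on the tiles $I_k$, H\"older, and Bernstein's inequality $\|f'\|_p\le\sigma\|f\|_p$ for the right inequality, sharpened to $1+ap\sigma$ and relaxed via Bernoulli) is precisely the classical Plancherel--P\'olya/Nikol'skii argument found in that reference. Your side remarks on the degenerate $p=\infty$ reading and on the local absolute continuity of $|f|^p$ near zeros of $f$ correctly address the only delicate points.
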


{\bfseries Fourier transform of functions.}  The \emph{Fourier
  transform} $\F : L^1(\R) \to C_0 (\R) \subset L^\infty(\mathbb R)$
defined by
$$\F f(\xi) 
:= \int_\R f(x) \e^{-2 \pi i \xi x} \intd x = \lim_{R \rightarrow
  \infty} \int_{-R}^R f(x) \e^{-2 \pi i \xi x} \intd x $$ is a bounded
linear operator.  For $1 < p \le 2$, this operator can be extended as
$\F : L^p(\R) \to L^q (\R)$, $\frac1p + \frac1q = 1$ via the limit in
the norm of $L^q(\mathbb R)$ of
$$
\F f (\xi) = \hat f(\xi) = \lim_{R \rightarrow \infty} \int_{-R}^R  f(x) \e^{-2 \pi i \xi x} \intd x.
$$
By Plancherel's equality, the Fourier transform is an isometry on
$L^2(\R)$.  Note that the Fourier transform of a function
$f \in L^p(\R)$ with $p > 2$ can be defined in terms of tempered
distributions. However, the distributional Fourier transform $\hat f$
does in general not correspond to a function.  A special role plays
the \emph{sinus cardinalis} defined as
$$
\sinc(x) := 
\left\{
\begin{array}{ll}
\frac{\sin(\pi x)}{\pi x}& \mathrm{for} \;  x \neq 0 ,\\
1& \mathrm{for} \;  x = 0.
\end{array}
\right.
$$
The sinc function is in $L^2(\R)$ but not in $L^1(\R)$.  Further, we have
$$
\hat \chi_{[-L,L]} (\xi) = 2L \sinc(2L \xi),
$$
where $\chi_{S}$ denotes the \emph{characteristic function} of a set
$S \subseteq \mathbb R$, i.e., $\chi_{S}(x) = 1$ if $x \in S$ and
$\chi_{S}(x) = 0$ if $x \notin S$.  The counterpart of scaled sinc
functions in the periodic setting are the $N$th Dirichlet kernels
given by
\begin{equation}
D_N(x) = \sum_{k = -N}^N \e^{2 \pi i k x} 
= \frac{\sin\left( (2N+1) \pi x\right)}{\sin (\pi x)}, \qquad x \in \R.
\end{equation}

For arbitrary $f\in L^1(\R)$ with $\hat f\in L^1(\R)$, the
\emph{Fourier inversion formula}
$$
f(x) = (\hat f)^\vee (x) := \int_\R \hat f(\xi) \e^{2 \pi i \xi x} \intd \xi
$$
holds true almost everywhere and, moreover, pointwise if the function
$f$ is continuous.  For two functions $f \in L^1(\mathbb R)$ and
$w \in L^p(\mathbb R)$, $p\in [1,\infty]$, the convolution $f*w$ is
defined almost everywhere by
$$
(f*w)(x) = \int_{\R} f(y) w(x-y) \, \intd y
$$
and is contained in $L^p(\R)$. 
For $p \in [1,2]$, the relation between convolution and Fourier transform 
is given by  $\widehat{f*w} = \hat f \, \hat w$.

For $\sigma >0$ and $p \in [1,\infty]$, we denote by
$\mathrm{PW}_\sigma^p$ the \emph{Paley-Wiener class of functions}
$f : \mathbb C \to \mathbb C$ of the form
$$
f(z) = \int_{-\sigma}^\sigma g(\xi) \e^{2 \pi i z \xi} \intd \xi, \quad z \in \mathbb C,
$$
for some $g \in L^p(-\sigma,\sigma)$.  We have the inclusion
$\mathrm{PW}_\sigma^r \subset \mathrm{PW}_\sigma^s$ for $1 \le s < r$.
Functions of the class $\mathrm{PW}_\sigma^p$ are holomorphic and of
exponential type $2 \pi \sigma$ by
\begin{equation}
  \label{eq:holomorphic_extension:2}
  |f(z)|
  \leq 	
  \int_{-\sigma}^\sigma 
  |\hat f(\xi)| 
  \e^{2 \pi |\xi| |z|} 
  \intd \xi
  \leq 
  \|\hat f\|_1
  \e^{2 \pi \sigma |z|},
  \qquad z \in \C.
\end{equation}
For $p \in [1,2]$, we further have
$\mathrm{PW}_{\sigma}^p\subset B^q_{2 \pi \sigma}$, 
see \cite{Higgins99}. 

{\bfseries Measure spaces.}  Let $X$ be a compact subset of $\R^d$ or
$\R^d$ itself.  By $\mathcal M(X)$ we denote all regular, finite,
complex-valued measures, i.e., all mappings
$\mu: \mathcal{B}(X) \rightarrow \mathbb C$ from the Borel
$\sigma$-algebra of $\mathbb R^n$ to $\mathbb C$ with
$|\mu(X)| < \infty$ and
$$\mu \left(\bigcup_{k=1}^\infty B_k \right) = \sum_{k=1}^\infty \mu(B_k)$$
for any sequence $\{B_k\}_{k \in \N} \subset \mathcal{B}(X)$ of
pairwise disjoint sets.  We suppose that the series on the right-hand
side converges absolutely, so that the indices of the sets $B_k$ can
be arbitrarily reordered.  The \emph{support of a complex measure}
$\mu\in \mathcal M(X)$ is defined by
\begin{equation}
  \supp(\mu) = \supp(\rho^+) \cup \supp(\rho^-) \cup \supp(\iota^+)
  \cup \supp(\iota^-),
\end{equation}
where $\rho^+ - \rho^- = \Re(\mu)$ and $\iota^+ - \iota^- = \Im(\mu)$
are the Hahn decompositions of the real and imaginary part into
non-negative measures.  The \emph{support of a non-negative measure} $\nu$
is the closed set
\[\supp(\nu) \coloneqq \bigl\{ x \in X: B \subset X \text{ open, } x \in B  \implies \nu(B) >0\bigr\}.\]
The \emph{total variation of a measure} $\mu \in \mathcal
M(X)$ is defined by
\[
|\mu|(B) \coloneqq \sup \Bigl\{ \sum_{k=1}^\infty |\mu(B_k)|:
\bigcup\limits_{k=1}^\infty B_k = B, \, B_k \; \mbox{pairwise disjoint}\Bigr\}.
\]
With the norm $\| \mu\|_{\mathcal M} \coloneqq |\mu|(X)$ the space
$\mathcal M(X)$ becomes a Banach space.  The space
$\mathcal M(X)$ can be identified via Riesz's representation
theorem with the dual space of $C_0(X)$ and the weak-$\ast$
topology on $\mathcal M(X)$ gives rise to the \emph{weak
  convergence of measures}.  

We will need that, for a bounded
Borel-measurable function $g$, the measure $g \mu$ defined by
$g \mu(B) := \int_B g(x) \intd \mu(x)$ for open $B \subset \R^d$ is again in
$\M(\R^d)$ and
$\| g  \mu \|_{\mathcal M} \leq \|g\|_\infty
\|\mu\|_\mathcal{M}$.

{\bfseries Fourier transform of measures.}
For our purposes, it is enough to consider the Fourier transform of measures on $X = \mathbb R$.
If we consider the open balls
$B_R \coloneqq \{x : |x| < R\}$ of radius $R > 0$, then
\begin{equation}   \label{eq:mono-total-var}
  |\mu|(B_R) \to \| \mu\|_{\mathcal M(\mathbb R)}
  \qquad\text{and}\qquad
  |\mu|(\mathbb R \setminus B_R) \to 0
  \qquad
  \mathrm{as} \quad R\to \infty.
\end{equation}
Indeed, the integral with respect to a measure $\mu\in \mathcal M(\mathbb R)$ is
also well defined for every 
$\varphi \in C_b(\mathbb R)$ and 
$$
  \langle \mu,\varphi\rangle  \le \| \mu \|_{\mathcal M(\mathbb R)} \,
  \|\phi\|_{\infty}.
$$
Consequently, we can define
the \emph{Fourier transform} 
  $\mathcal{F} \colon \mathcal M(\mathbb R) \to C_{b}(\mathbb R)$ by
 $$
    \mathcal{F}\mu( \xi)
    \coloneqq
    \hat{\mu} ( \xi)
    \coloneqq
    \langle \mu, \e^{-2 \pi i x \xi}\rangle 
		=    \int_{\mathbb R} \e^{-2 \pi i  x \xi} d\mu(x).
 $$
The Fourier transform is a linear, bounded
operator from $\mathcal M(\mathbb R)$ into $C_b(\mathbb R)$ with operator norm one.
Moreover, it is unique in the sense that $\mu \in \mathcal M(\mathbb R)$ with 
$\hat \mu \equiv  0$ implies that $\mu$ is the zero measure.
We are especially interested in the Fourier transform of \emph{atomic measures}
$\mu \coloneqq \sum_{k\in \mathbb Z} \alpha_k \delta(\cdot - t_k)$ with
$\alpha_k \in\mathbb C$, $t_k \in \mathbb R$ given by
$$\hat \mu(\xi) = \sum_{k \in \mathbb Z} \alpha_k\, \e^{-2\pi i \xi
  t_k}.$$
If the point masses are equispaced located at $t_k = \frac kn$ with $n
\in \mathbb N$, the
Fourier transform becomes an $n$-periodic Fourier series.  Moreover,
restricting the support of $\mu$ to $[-\sigma, \sigma]$, we obtain the
$n$-periodic trigonometric polynomial
$$\hat \mu(\xi) = \sum_{k=-N}^N \alpha_k\, \e^{-2\pi i \frac{\xi
    k}{n}},$$ where $N = \lfloor \sigma n \rfloor$ and
$t_k = \frac{k}{n}$, $k=-N,\ldots,N$.  The following theorem shows
that also the reverse direction is true, i.e., every periodic function
given as the Fourier transform of a compact measure is a finite
trigonometric polynomials.

\begin{theorem}\label{lem:bandlimit_periodic_stieltjes}
Let $f = \hat \mu_f$ with $\mu_f \in \M(\R)$ 
fulfill $\supp \mu_f \subseteq [-\sigma, \sigma]$ for some $\sigma > 0$.
Suppose that $f$ is $n$-periodic for $n > 0$.
Then $f$ is a trigonometric polynomial of the form
\begin{equation}\label{eq:bandlimit_periodic_stieltjes:1}
f(\xi) 
=
	\sum_{k = -N}^N 
		\hat f(k)
		\e^{2 \pi i \frac{k \xi}{n}},
\qquad \hat f(k) \coloneqq \frac 1n \int_0^n f(\xi) \e^{-2 \pi i \frac{k t}{n}} \intd t,
\end{equation}
where $N = \lfloor \sigma n \rfloor$.
\end{theorem}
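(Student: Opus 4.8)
The plan is to exploit the uniqueness of the Fourier transform on $\M(\R)$ stated above in order to pin down the support of $\mu_f$ on the lattice $\tfrac1n\Z$, after which the claimed representation is immediate.

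First I would translate the $n$-periodicity of $f$ into a statement about $\mu_f$. Since $\e^{-2\pi i n x}$ is a bounded Borel function, the measure $\nu \coloneqq (\e^{-2\pi i n x}-1)\,\mu_f$ again lies in $\M(\R)$, and a one-line computation yields, for every $\xi \in \R$,
$
\hat\nu(\xi) = \hat\mu_f(\xi+n) - \hat\mu_f(\xi) = f(\xi+n) - f(\xi) = 0 .
$
By the uniqueness statement for the Fourier transform of measures, $\nu$ is the zero measure, i.e. $(\e^{-2\pi i n x}-1)\,\mu_f = 0$.

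Next I would deduce that $\mu_f$ is concentrated on $\{x : \e^{-2\pi i n x} = 1\} = \tfrac1n\Z$. This is the only slightly technical step: from $(\e^{-2\pi i n x}-1)\mu_f = 0$ one wants $|\mu_f|(E) = 0$ for $E \coloneqq \R \setminus \tfrac1n\Z$, and since $(\e^{-2\pi i n x}-1)^{-1}$ is unbounded near the lattice, a truncation is needed. Writing the polar decomposition $\mu_f = \e^{i\theta}\,|\mu_f|$ and testing the zero measure $\nu$ against the bounded Borel functions $h_m(x) \coloneqq \e^{-i\theta(x)}(\e^{-2\pi i n x}-1)^{-1}\chi_{\{|\e^{-2\pi i n x}-1| \ge 1/m\}}(x)$ gives $|\mu_f|\bigl(\{|\e^{-2\pi i n x}-1| \ge 1/m\}\bigr) = \int h_m \intd\nu = 0$ for every $m \in \N$; letting $m \to \infty$ and using continuity of $|\mu_f|$ from below yields $|\mu_f|(E) = 0$. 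Combined with $\supp\mu_f \subseteq [-\sigma,\sigma]$, the measure is concentrated on the finite set $\tfrac1n\Z \cap [-\sigma,\sigma] = \{k/n : |k| \le N\}$ with $N = \lfloor \sigma n \rfloor$, so $\mu_f = \sum_{k=-N}^N \alpha_k\,\delta(\cdot - k/n)$ with $\alpha_k = \mu_f(\{k/n\}) \in \C$.

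Finally I would read off $f$. Taking Fourier transforms gives $f(\xi) = \hat\mu_f(\xi) = \sum_{k=-N}^N \alpha_k\,\e^{-2\pi i \xi k/n}$; relabelling $k \mapsto -k$ brings this into the form $\sum_{k=-N}^N c_k\,\e^{2\pi i k\xi/n}$ with $c_k = \alpha_{-k}$, and the integral formula for $\hat f(k)$ in \eqref{eq:bandlimit_periodic_stieltjes:1} then follows (in particular $c_k = \hat f(k)$) from the orthogonality relation $\tfrac1n\int_0^n \e^{2\pi i (j-k)\xi/n}\intd\xi = \delta_{j,k}$. The main obstacle is the measure-theoretic truncation in the third step; everything else is formal manipulation with the Fourier transform of measures together with elementary properties of trigonometric polynomials.
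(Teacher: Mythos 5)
Your proof is correct, and it takes a genuinely different route from the paper's. The paper argues ``forward'': it first shows (by differentiating under the integral against $\mu_f$) that $f$ is infinitely differentiable, invokes the rapid decay of Fourier coefficients of smooth periodic functions to get $(\hat f(k))_k \in \ell^1(\Z)$, builds the candidate atomic measure $\mu \coloneqq \frac1n\sum_{k}\hat f(k)\,\delta(\cdot + k/n)$ from the Fourier series, identifies $\mu = \mu_f$ by uniqueness of the Fourier transform, and only then uses $\supp\mu_f \subseteq [-\sigma,\sigma]$ to kill all but finitely many coefficients. You instead work directly on the measure side: periodicity of $f$ plus uniqueness forces $(\e^{-2\pi i n x}-1)\mu_f = 0$, the polar-decomposition/truncation argument shows $|\mu_f|$ vanishes off the lattice $\tfrac1n\Z$, and the compact support then immediately makes $\mu_f$ a finite sum of point masses, from which the trigonometric polynomial form is read off by orthogonality. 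Your version is shorter and structurally more transparent (it explains \emph{why} periodicity localizes the spectral measure on a lattice), and it dispenses with the two auxiliary lemmata on smoothness and Fourier decay; the price is that it leans on the polar decomposition $\intd\mu_f = \e^{i\theta}\intd|\mu_f|$ of a complex measure, a standard Radon--Nikodym fact that the paper never sets up, together with the careful truncation you correctly identify as the one delicate step. Both arguments hinge on the same key tool, the injectivity of $\mathcal F$ on $\M(\R)$.
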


The proof of the theorem is given in Appendix \ref{app:identifier}.

\section{Superresolution in Doubly-Dispersive Channel Estimation}
\label{sec:main}
\noindent
In doubly-dispersive channel estimation we are both interested in the
detection of shifts and modulations of signals.  Recall that the
\emph{shift operator} $T_\tau$ and the \emph{modulation operator}
$M_\nu$ are defined for $x, \tau, \nu \in \R$ by
$$T_\tau f(x) := f(x - \tau) \quad \mathrm{and} \quad M_\nu f(x) := f(x) \e^{2 \pi i \nu x},$$
respectively. Their concatenation is given by
$$
M_\nu T_\tau f(x) 
= 
\e^{2\pi i \nu x} f(x-\tau) \quad \text{and} \quad T_\tau M_\nu f(x) = \e^{2\pi i \nu (x-\tau)} f(x-\tau).
$$
Similarly, 
for $f \in L^p(\mathbb R)$ with $p \in [1,2]$, it holds
\begin{align}
\widehat{T_\tau f} = M_{-\tau} \hat f
\quad \text{and} \quad
\widehat{M_\nu f} = T_\nu \hat f.
\end{align}
Both operators are unitary on $L^2(\mathbb R)$.  Note that a similar
definition of shifts and modulations can be given for tempered
distributions, see, e.g., \cite[Section 4.3.1]{PPST2019}.  For
$S \in \N$ and $\mathcal T, \Omega >0$, we consider the operator
\begin{equation} \label{operator}
H := \sum_{s=1}^{S} \eta_s M_{\nu_s} T_{\tau_s}, 
\qquad 
\eta_s \in \C, 
\tau_s \in \left[-\tfrac{\mathcal T}{2},\tfrac{\mathcal T}{2} \right], 
\nu_s \left[-\tfrac{\Omega}{2},\tfrac{\Omega}{2} \right].
\end{equation}
We are interested in the following \textbf{superresolution problem}: 
for a known function $w \in C_b(\mathbb R)$, determine the amplitudes $\eta_s \in \mathbb C$
and positions $\tau_s, \nu_s \in X$, $s=1,\ldots,S$ 
from certain samples of 
\begin{equation} \label{operator_1}
H w = \sum_{s=1}^{S} \eta_s M_{\nu_s} T_{\tau_s} w.
\end{equation}
In this context, the function $w$ is often called \emph{identifier}.

Our solution will be based on an exact sampling formula of $Hw$ 
which contains sparse linear combination of certain real-valued ``atoms''.
The idea to use such a reformulation for later computations
originates from a paper of Heckel et al. \cite{HMS14}.
However, the approach of those authors uses only an approximate sampling formula 
without given error bound
and not an exact one, see Remark  \ref{rem:heckel}.
The main sampling result is given in the following theorem.

\begin{theorem}[Sampling Formula for Translation-Modulation Operators] \label{thm:heckel_formula}
Choose $\Tcal, \Omega > 0$, $N_1, N_2 \in \N$ and set 
$L_1 \coloneqq 2N_1 + 1$, $L_2 \coloneqq 2N_2 + 1$.
Let 
\begin{equation}\label{eq:heckel_formula:2}
w(x) = \sum_{k=-N_1}^{N_1} w_k \e^{2\pi i \frac{\Omega k x}{L_1}}.
\qquad x\in \R
\end{equation}
be an $\frac{L_1}{\Omega}$-periodic trigonometric polynomial.
Then, we have for $\tau, \nu \in \R$ and $x_j = \frac{\Tcal j}{L_2}$, $j = -N_2, \dots, N_2$ that
\begin{align}\label{eq:heckel_formula:1}
M_\nu T_\tau w \left(x_j \right)
&= 	\sum_{u = -N_1}^{N_1} \sum_{v = -N_2}^{N_2}
		\e^{2 \pi i \frac{x_j v}{\mathcal T}}
		w \left( x_j - \frac{u}{\Omega}\right)
		a(\tau,\nu)_{(u,v)},
\end{align}
with	so-called atoms
\begin{equation} \label{atom}	
		a(\tau,\nu)_{(u,v)}  \coloneqq \frac{1}{L_1 L_2} 
		D_{N_1} \left( \frac{u - \Omega \tau}{L_1} \right)
		D_{N_2} \left( \frac{v - \Tcal \nu}{L_2} \right)
\end{equation}
for $u = -N_1,\ldots,N_1$, $v = -N_2, \ldots,N_2$.
\end{theorem}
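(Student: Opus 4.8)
The plan is to prove \eqref{eq:heckel_formula:1} by a direct computation: expand the trigonometric polynomial $w$ from \eqref{eq:heckel_formula:2} and the two Dirichlet kernels appearing in \eqref{atom} into complex exponentials, exchange the resulting finite sums, and collapse the inner geometric sums by discrete orthogonality. The key structural observation is that the double sum on the right-hand side of \eqref{eq:heckel_formula:1} factorizes: the modulation factor $\e^{2\pi i x_j v/\Tcal}$ and the kernel $D_{N_2}$ depend only on $v$, whereas the translated sample $w(x_j - u/\Omega)$ and the kernel $D_{N_1}$ depend only on $u$. Writing the right-hand side as a product $A_1 A_2$ of a $u$-sum and a $v$-sum, it then suffices to show $A_1 = w(x_j-\tau)$ and $A_2 = \e^{2\pi i \nu x_j}$, because their product is precisely $\e^{2\pi i \nu x_j}\, w(x_j - \tau) = M_\nu T_\tau w(x_j)$, the left-hand side of \eqref{eq:heckel_formula:1}.

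For the translation factor $A_1 = \frac{1}{L_1} \sum_{u=-N_1}^{N_1} w(x_j - \tfrac{u}{\Omega})\, D_{N_1}(\tfrac{u-\Omega\tau}{L_1})$, I would insert \eqref{eq:heckel_formula:2}, which gives $w(x_j - u/\Omega) = \sum_{k=-N_1}^{N_1} w_k\, \e^{2\pi i \Omega k x_j/L_1}\, \e^{-2\pi i ku/L_1}$, together with $D_{N_1}(t) = \sum_{m=-N_1}^{N_1} \e^{2\pi i mt}$, and then exchange the finite sums over $u$, $k$, $m$. This isolates the inner sum $\frac{1}{L_1}\sum_{u=-N_1}^{N_1} \e^{2\pi i (m-k)u/L_1}$, which runs over a complete period of length $L_1 = 2N_1+1$ and hence equals $L_1$ when $L_1$ divides $m-k$ and $0$ otherwise; since $|m-k| \le 2N_1 < L_1$, only the term $m = k$ survives, and what remains is $\sum_{k=-N_1}^{N_1} w_k\, \e^{2\pi i \Omega k(x_j - \tau)/L_1} = w(x_j - \tau)$ by \eqref{eq:heckel_formula:2}.

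The modulation factor $A_2 = \frac{1}{L_2} \sum_{v=-N_2}^{N_2} \e^{2\pi i x_j v/\Tcal}\, D_{N_2}(\tfrac{v-\Tcal\nu}{L_2})$ is treated analogously, now exploiting the explicit sampling nodes: since $x_j = \Tcal j/L_2$ we have $\e^{2\pi i x_j v/\Tcal} = \e^{2\pi i jv/L_2}$, and with $D_{N_2}(t) = \sum_{m=-N_2}^{N_2} \e^{2\pi i mt}$, exchanging sums leaves the inner sum $\frac{1}{L_2}\sum_{v=-N_2}^{N_2} \e^{2\pi i (j+m)v/L_2}$, again over a complete period of length $L_2 = 2N_2+1$. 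It equals $L_2$ when $L_2$ divides $j+m$ and $0$ otherwise; because $j \in \{-N_2, \dots, N_2\}$ forces $|j+m| \le 2N_2 < L_2$, only $m = -j$ contributes (and $-j$ indeed lies in the summation range), yielding $A_2 = \e^{2\pi i j \Tcal \nu/L_2} = \e^{2\pi i \nu x_j}$. Multiplying $A_1$ and $A_2$ then completes the argument.

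I do not expect a genuine obstacle here: the computation is finite and elementary, and the only point requiring care is the index bookkeeping — in particular verifying that the integer combinations $m-k$ and $j+m$ never leave the open interval $(-L_1, L_1)$, respectively $(-L_2, L_2)$, so that the discrete-orthogonality sums really do collapse to a single term. This is exactly why the kernel degrees $N_1, N_2$ are matched to the node counts $L_1 = 2N_1+1$ and $L_2 = 2N_2+1$, and why the sampling index $j$ must be restricted to $\{-N_2,\dots,N_2\}$; once these range conditions are in place, the atoms $a(\tau,\nu)_{(u,v)}$ of \eqref{atom} are forced.
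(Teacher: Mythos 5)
Your computation is correct: the summand on the right-hand side of \eqref{eq:heckel_formula:1} factors into a $u$-part and a $v$-part, each inner sum runs over a complete residue system modulo $L_1$ resp.\ $L_2$, and the range conditions $|m-k|\le 2N_1 < L_1$ and $|j+m|\le 2N_2<L_2$ ensure that discrete orthogonality collapses each to a single term, giving $A_1 = w(x_j-\tau)$ and $A_2 = \e^{2\pi i \nu x_j}$. This is, however, a genuinely different route from the paper's. The paper does not verify the identity directly; it derives it as a limit of a family of continuous-domain resampling formulas: it first proves a sampling formula for $p\,H(\hat q * w)$ with compactly supported window functions $p,q$ and low-pass helpers $\phi,\psi$ (Lemma~\ref{thm:operator_resampling_l1}, resting on the $L^1$ sampling theorem and Nikol'skii's inequality), upgrades this to identifiers that are Fourier transforms of compactly supported measures (Theorem~\ref{thm:operator_resampling_stieltjes}), and then obtains Theorem~\ref{thm:heckel_formula} by choosing sequences of Schwartz cutoffs $p_n, q_n$ and dilation parameters $\lambda_n,\gamma_n \to 1$, invoking Poisson summation to evaluate the lattice sums, and passing to the limit. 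What that machinery buys is a collection of intermediate results valid for much more general identifiers $w = \hat\mu_w$ with $\mu_w \in \M(\R)$, and a clear explanation of why the formula is exact for periodic identifiers while only approximate for the sinc-based identifiers of Heckel et al.\ (Remark~\ref{rem:heckel}). What your argument buys is a short, self-contained, purely algebraic proof of the stated theorem — finite sums and discrete orthogonality, no measure theory, no limits — which is worth knowing exists; there is no gap in it.
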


The proof of Theorem \ref{thm:heckel_formula} is the content of the next section. 
\\

By Theorem \ref{thm:heckel_formula}, we can rewrite the superresolution problem \eqref{operator_1}
with an identifier of the form \eqref{eq:heckel_formula:2} 
for given samples
$y_j = H w(x_j)$, $ x_j = \frac{\Tcal j}{L_2}$, $j=-N_2,\ldots,N_2$ 
as
\begin{equation} \label{operator_2}
y_j  = H w (x_j) = 
	\sum_{u = -N_1}^{N_1} \sum_{v = -N_2}^{N_2} \e^{2 \pi i \frac{x_j v}{\mathcal T}}
		w \left( x_j - \frac{u}{\Omega}\right)
		\sum_{s=1}^{S} \eta_s a(\tau_s,\nu_s)_{(u,v)}.
\end{equation}
By periodicity of the atoms \eqref{atom}, it makes indeed sense  to restrict ourselves
to 
$$
(\tau, \nu) \in  X \coloneqq 
\left[-\tfrac{\mathcal T}{2},\tfrac{\mathcal T}{2} \right] \times 
\left[-\tfrac{\Omega}{2},\tfrac{\Omega}{2} \right],
$$
and to choose $L_i \ge \mathcal T \Omega$, $i=1,2$.  In this case, all
points $x_j - \frac{u}{\Omega}$ at which the periodic identifier $w$
in \eqref{operator_2} must be evaluated, belong to the interval
$I = (-\mathcal T,\mathcal T)$.  In practice, we therefore have not
really to work with a periodic identifier, but can restrict ourselves
to the compactly supported function $\chi_I w$.

Setting
$$
y \coloneqq (y_j)_{j=-N_2}^{N_2} \quad \mathrm{and} \quad
a(\tau,\nu) \coloneqq \left(a(\tau,\nu)_{(u,v)} \right)_{u=-N_1,v=-N_2}^{N_1,N_2},
$$
where we consider $(u,v)$ arranged as a vector, and introducing the
operator
$$G = \left( G_{j,(u,v)} \right)_{j,(u,v)}: \C^{L_1 L_2} \rightarrow
\C^{L_2}$$ with entries
$$
G_{j,(u,v)} \coloneqq \e^{2 \pi i \frac{x_j v}{\mathcal T}}
		w \left( x_j - \frac{u}{\Omega}\right)
$$
we can rewrite the superresulution problem \eqref{operator_2} as 
\begin{equation} \label{operator_3}
y = G \sum_{s=1}^{S} \eta_s a(\tau_s,\nu_s).
\end{equation}
In practical applications, the measurements $y$ are often corrupted by noise 
so that we finally intend to solve
the regularized problem
\begin{equation} \label{problem_2}
\argmin_{\eta \in \C^S, (\tau,\nu) \in X^S} \|G \sum_{s=1}^{S} \eta_s a(\tau_s,\nu_s) - y\|^2 + \lambda \|\eta\|_{\ell_1}, \qquad \lambda > 0,
\end{equation}
where $\eta = (\eta_s)_{s=1}^S$ and $\tau = (\tau_s)_{s=1}^S$, $\nu = (\nu_s)_{s=1}^S$.
Indeed, we may choose $S$ larger than the number of expected translation-modulations 
and hope that the regularization term enforces the sparsest solution.

\begin{remark}\label{rem:cont}
The above problem is closely related to an inverse problem in the space of measures.
To this end, we consider the linear, continuous operator 
$A: \C^{L_1L_2} \rightarrow \mathcal C(X)$ defined by
$A w := \{ (\tau,\nu) \mapsto \langle a(\tau,\nu) , w \rangle \}$ for $w \in \C^{L_1L_2}$.
Its adjoint $A^*: \mathcal M(X) \rightarrow \C^{L_1L_2}$ is given by
\begin{equation}
A^* \mu \coloneqq \int_X a(\tau,\nu) \, d  \mu(\tau,\nu).
\end{equation}
Then, we may consider the inverse problem
\begin{equation}   \label{eq:tv-tik}
  \min_{\mu \in \mathcal M(X)}
  \frac12 \| G A^* \mu - y \|_2^2
  + \lambda \| \mu \|_{\mathcal M(X)}.
\end{equation}
Problems of this kind are also known as BLASSO \cite{DGHL2017,CandesGranda12} and
were studied in several papers, e.g., by
Bredies and Pikkarainen \cite{BP13} and Denoyelle et al. \cite{DDPS20}.
In particular, it was shown that the problem has a solution.
Since $G A^*$ is not injective, the solution is in general not unique. 
Restricted to atomic measures in $\mathcal M(X)$, i.e.
$\mu = \sum_{s=1}^S \eta_s \delta\left( \cdot - (\tau_s,\nu_s \right))$,
problem \eqref{eq:tv-tik} takes the form \eqref{problem_2}.

The superresolution problem may be also seen from the point of view the so-called atomic norm
formulation addressed in a couple of papers \cite{CRPW12,CF2020,DDPS20,DP17,TBR2015}.
Since $\eta_s = |\eta_s| \e^{2\pi i \phi_s}$ is complex-valued, the set of atoms 
must be redefined as $\{ \e^{2\pi i \phi_s} a(\tau,\nu): \phi \in [0,1), (\tau,\nu) \in X\}$ 
to take real linear combinations
of atoms. 
\end{remark}

As already mentioned, superresolution problem \eqref{operator_1}
has been already considered by Heckel et al. \cite{HMS14}. However, these authors 
proposed to use a different identifier, an issue addressed in the next remark.

\begin{remark}[Relation to the work of Heckel et al. \cite{HMS14}] \label{rem:heckel}
The authors of \cite{HMS14} considered the case $N_1=N_2=N$ and $L_1 = L_2 = L := \Tcal \Omega$,
so that the resampling formula \eqref{operator_2} becomes
\begin{align} \label{eq:derive_heckels_formula:2}	
H w \left(\frac{j}{\Omega} \right)
&= \frac{1}{L^2} 
	\sum_{s=1}^S \eta_s
	\sum_{u = -N}^{N} \sum_{v = -N}^{N}
		\e^{2 \pi i \frac{jv}{L}}
		w \left( \frac{j - u}{\Omega}\right) \\
		& \quad\times
		D_{N} \left( \frac{u - \Omega \tau_s}{L} \right)
		D_{N} \left( \frac{v - \Tcal \nu_s}{L} \right).
\end{align}
However, as identifier they propose
$$
w(x) = \sum_{r=-R}^R \sum_{m=-N}^N w_m \sinc\left( \Omega \left(x- \frac{rL+m}{\Omega} \right)    \right)
$$
with some $R \in \N$. Actually, $R=1$ was applied in \cite{HMS14}.
Since the sinc function is not periodic, the resampling formula
\eqref{eq:derive_heckels_formula:2} does not hold exactly and only
gives an approximation.
\end{remark}

\section{Resampling Results for Translation-Modulation Operators}\label{sec:resampling}
In this section, we prove Theorem \ref{thm:heckel_formula}.  The basis
is the Sampling Theorem \ref{lem:sampling_theorem_l1} for $L^1$
functions.  Then we prove certain sampling formulas which are of
interest on their own.  First, in Lemma
\ref{thm:operator_resampling_l1}, we show a sampling formula for
$p\,H (\hat q * w)$, where $p,q$ are compactly supported functions
with Fourier transform in $L^1(\R)$, for general $w \in L^\infty(\R)$
using certain compactly supported helper functions $\phi$ and $\psi$.
Restricting to identifiers $w$ which are Fourier transforms of
measures, we will see in Theorem
\ref{thm:operator_resampling_stieltjes} that the helper functions can
be avoided.  Finally, we will use this theorem together with
approximation arguments involving sequences of compactly supported
Schwartz functions $\{p_n\}_n$ and $\{q_n\}_n$ to prove Theorem
\ref{thm:heckel_formula}.  We start by recalling a sampling theorem
for $L^1$ functions.

\begin{theorem}[Sampling Theorem for $L^1$-functions] \label{lem:sampling_theorem_l1}
Let $f \in L^1(\R) \cap C_0(\R)$ be a band-limited function with $\supp \hat f \subseteq [-\frac{\Omega}{2}, \frac{\Omega}{2}]$.
Choose $0 < a < 1/\Omega$.
Then for any low-pass kernel 
$\phi \in  L^1(\R) \cap C_0(\R)$ satisfying
\begin{equation}
\hat \phi(\xi) = \begin{cases}
	a, & |\xi| \leq \frac{\Omega}{2},\\
	0, & |\xi| \geq \frac{1}{2a},
\end{cases}
\end{equation}
we have 
\begin{equation}
\label{eq:sampling_theorem_l1:1}
f(x) = \sum_{k \in \Z} f(a k) \phi(x - ak)
\end{equation}
for all $x \in \R$ with absolute and uniform convergence on $\R$ and convergence in $L^1(\mathbb R)$.
\end{theorem}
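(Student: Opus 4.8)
The plan is to pass to the Fourier side and reduce the stated identity \eqref{eq:sampling_theorem_l1:1} to an equality of two $L^1(\R)$-functions. First I would record the regularity that comes for free: since $f \in L^1(\R)$ and $\supp\hat f$ is compact, $\hat f$ is continuous with compact support, hence $\hat f \in L^1(\R)$ and, by continuity, $\hat f$ vanishes at $\pm\tfrac{\Omega}{2}$; thus the Fourier inversion formula $f(x) = \int_{-\Omega/2}^{\Omega/2} \hat f(\xi)\e^{2\pi i \xi x}\intd\xi$ holds pointwise, and \eqref{eq:holomorphic_extension:2} shows $f \in \mathrm{PW}^1_{\Omega/2} \subset B^1_{\pi\Omega}$. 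I would then apply Nikol'skii's inequality (Theorem~\ref{thm:nikolskiis_inequality}) with $p = 1$ and step $a$: it gives $a\sum_{k\in\Z} |f(x - ak)| \leq (1 + a\pi\Omega)\|f\|_1$ uniformly in $x$, so in particular $(f(ak))_{k\in\Z} \in \ell^1$. Consequently the series $g := \sum_{k\in\Z} f(ak)\,\phi(\cdot - ak)$ converges absolutely and uniformly on $\R$ (as $\phi \in C_0(\R)$ is bounded) and in $L^1(\R)$ (as $\phi \in L^1(\R)$), and being a uniform limit of continuous functions, $g \in C(\R)$. This already yields the three convergence assertions; it remains to prove $g = f$.

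Next I would compute $\hat g$. Term-by-term Fourier transformation is legitimate by $L^1$-convergence and boundedness of $\F \colon L^1(\R) \to C_0(\R)$, together with $\widehat{\phi(\cdot - ak)}(\xi) = \e^{-2\pi i a k \xi}\hat\phi(\xi)$, giving $\hat g(\xi) = \hat\phi(\xi)\,F(\xi)$ where $F(\xi) := \sum_{k\in\Z} f(ak)\,\e^{-2\pi i ak\xi}$ is a $\tfrac1a$-periodic continuous function (the series converges uniformly since $(f(ak)) \in \ell^1$). Separately I would expand the restriction of $\hat f$ to $[-\tfrac1{2a},\tfrac1{2a}]$ into its Fourier series of period $\tfrac1a$: its $k$-th coefficient is $a\int_{-1/(2a)}^{1/(2a)} \hat f(\xi)\e^{-2\pi i ak\xi}\intd\xi = a\int_\R \hat f(\xi)\e^{-2\pi i ak\xi}\intd\xi = af(-ak)$ by the inversion formula, so this Fourier series is exactly $aF$. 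Since $a < \tfrac1\Omega$ forces $\supp\hat f \subseteq [-\tfrac\Omega2,\tfrac\Omega2] \subset (-\tfrac1{2a},\tfrac1{2a})$, the $\tfrac1a$-periodization $P$ of $\hat f$ is built from translates with pairwise disjoint supports, is continuous, coincides with $\hat f$ on $[-\tfrac1{2a},\tfrac1{2a}]$, and has $aF$ as its (uniformly convergent) Fourier series; hence $P = aF$.

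Finally I would multiply by $\hat\phi$, using that $\hat\phi$ is supported in $[-\tfrac1{2a},\tfrac1{2a}]$ and equals $a$ on $[-\tfrac\Omega2,\tfrac\Omega2] \supseteq \supp\hat f$. A case distinction over $|\xi| \leq \tfrac\Omega2$, over $\tfrac\Omega2 < |\xi| < \tfrac1{2a}$, and over $|\xi| \geq \tfrac1{2a}$ shows $\hat\phi(\xi)P(\xi) = a\hat f(\xi)$ for every $\xi$, whence $\hat g(\xi) = \hat\phi(\xi)F(\xi) = \tfrac1a\hat\phi(\xi)P(\xi) = \hat f(\xi)$. Since $f, g \in L^1(\R)$, injectivity of the Fourier transform gives $g = f$ almost everywhere, and as both are continuous, $g = f$ on all of $\R$, which is \eqref{eq:sampling_theorem_l1:1}. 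The one genuinely delicate point is identifying the discrete sum $F$ with the periodization $P$ of $\hat f$ — a controlled Poisson-type argument — and it is precisely here that the $\ell^1$-summability of the samples supplied by Nikol'skii's inequality does the work; the remaining steps are bookkeeping on supports.
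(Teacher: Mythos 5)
Your proof is correct, and it takes a genuinely more self-contained route than the paper. The paper's proof (Appendix~\ref{app:b}) handles the identity \eqref{eq:sampling_theorem_l1:1} itself only by reference, stating that it "can be proved exactly following the lines of the classical sampling theorem of Shannon, Whittaker, Kotelnikov," and then concentrates on the convergence claims: uniform convergence is obtained by applying Nikol'skii's inequality to $\phi$ (whose exponential type is $\pi/a$) combined with $f \in C_0(\R)$, and $L^1$-convergence from $(f(ak))_k \in \ell^1(\Z)$ via Nikol'skii applied to $f$. You instead derive all three convergence modes from the single fact $(f(ak))_k \in \ell^1(\Z)$ together with $\phi \in L^1(\R) \cap C_b(\R)$, which is arguably cleaner, and you then actually prove the identity: term-by-term Fourier transformation gives $\hat g = \hat\phi \, F$, the $\tfrac1a$-periodization $P$ of $\hat f$ has $aF$ as its uniformly convergent Fourier series (your Poisson-type step, which is the Fourier-side analogue of the paper's Lemma~\ref{thm:poisson_bandlimited} and could alternatively be cited from there), the support bookkeeping yields $\hat\phi P = a \hat f$, and injectivity of $\F$ on $L^1(\R)$ plus continuity finishes the argument. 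All the delicate points check out: the condition $a < 1/\Omega$ is exactly what makes the translates of $\hat f$ in $P$ disjoint and forces $P = \hat f$ on $[-\tfrac1{2a},\tfrac1{2a}]$, and the unspecified values of $\hat\phi$ on $\tfrac\Omega2 < |\xi| < \tfrac1{2a}$ are harmless since $P$ vanishes there. One cosmetic remark: the paper's stated inclusion $\mathrm{PW}^p_\sigma \subset B^q_{2\pi\sigma}$ with $p=1$ gives $B^\infty_{\pi\Omega}$, not $B^1_{\pi\Omega}$; the membership $f \in B^1_{\pi\Omega}$ that you need for Nikol'skii follows instead directly from $f|_\R \in L^1(\R)$ together with the exponential type bound \eqref{eq:holomorphic_extension:2}, which is what you implicitly use.
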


For convenience, the proof is given in Appendix \ref{app:b}.
In the following, we will further need the next auxiliary lemma.

\begin{lemma} \label{lem:operator_absorption_l1}
Let $w \in L^\infty(\R)$ and $p, q \in L^1(\R)$ with $\hat p, \hat q \in L^1(\R)$.
For $F \in L^1(\R^2)$, we define the linear operator $\mathcal{D}_w : L^1(\R^2) \to L^\infty(\R)$ by 
\begin{equation} \label{eq:operator_absorption_l1:1}
	(\mathcal{D}_w F)(x) 
	:= \iint_{\R^2}
		F(t, \xi) w(x - t) \e^{2 \pi i \xi t}
		\intd t \intd \xi.
\end{equation}
Then $\mathcal{D}_w$ is  continuous and for all $\tau, \nu \in \R$ we have
\begin{equation}
\label{eq:operator_absorption_l1:2}
	p(x) M_\nu T_\tau(\hat q \ast w)(x)
	= \mathcal{D}_w  (T_\tau \hat q \otimes T_\nu \hat p)(x) \qquad \text{for a.e.\ $x \in \R$}.
\end{equation}
\end{lemma}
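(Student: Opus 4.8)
The continuity of $\mathcal D_w$ I would dispatch first: choosing a representative of $w$ with $|w|\le\|w\|_\infty$ everywhere, the integrand in \eqref{eq:operator_absorption_l1:1} is dominated by $\|w\|_\infty\,|F|\in L^1(\R^2)$, so the integral defining $(\mathcal D_w F)(x)$ converges absolutely for every $x$, Fubini's theorem makes $x\mapsto(\mathcal D_w F)(x)$ measurable, and $\|\mathcal D_w F\|_\infty\le\|w\|_\infty\,\|F\|_{L^1(\R^2)}$; linearity is clear, so $\mathcal D_w$ is bounded and linear from $L^1(\R^2)$ into $L^\infty(\R)$.

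For the identity \eqref{eq:operator_absorption_l1:2} the plan is to insert $F=T_\tau\hat q\otimes T_\nu\hat p$, i.e.\ $F(t,\xi)=\hat q(t-\tau)\,\hat p(\xi-\nu)$, into the definition of $\mathcal D_w$ and to unwind the resulting double integral. Since both $\hat p$ and $\hat q$ lie in $L^1(\R)$ we have $F\in L^1(\R^2)$ with $\|F\|_{L^1(\R^2)}=\|\hat q\|_1\,\|\hat p\|_1<\infty$, and the bound from the first step shows the full integrand is in $L^1(\R^2)$, so Fubini's theorem permits me to integrate one variable at a time. I would integrate out the frequency variable first: after the substitution $\xi\mapsto\xi-\nu$ the inner integral is an inverse Fourier integral of $\hat p$, which by the Fourier inversion formula --- valid because both $p$ and $\hat p$ are in $L^1(\R)$, and here one fixes the continuous representative $p=(\hat p)^\vee$ --- reproduces the window $p$ together with the modulation contributed by the shift $\nu$. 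This is the one place where the hypothesis $\hat p\in L^1(\R)$ is genuinely used. The remaining integral in the time variable equals, after the substitution $y=t-\tau$, the convolution $\hat q\ast w$ evaluated at $x-\tau$; recombining the factor $p$ and the modulation via $M_\nu T_\tau g(x)=\e^{2\pi i\nu x}g(x-\tau)$ then produces the right-hand side $p(x)\,M_\nu T_\tau(\hat q\ast w)(x)$. Since $\hat q\ast w\in C_b(\R)$ as the convolution of an $L^1$- with an $L^\infty$-function, and $p$ has a continuous representative, both sides actually agree pointwise, which is slightly stronger than the stated almost-everywhere equality.

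I do not expect a real obstacle here; the lemma is essentially a reorganisation of a double integral. Its content is the observation that multiplying the (in general non-integrable) function $M_\nu T_\tau(\hat q\ast w)$ by the localizing window $p$ can be rewritten as the single operator $\mathcal D_w$ acting on the tensor product of two translated $L^1$-kernels --- a representation to which the Sampling Theorem~\ref{lem:sampling_theorem_l1} for $L^1$-functions can then be applied separately in each coordinate in the subsequent argument. The only points that need care are the Fubini interchange, which is covered by the crude $\|w\|_\infty$-bound, and the appeal to Fourier inversion on the continuous representative of $p$, so that the conclusion is a genuine equality of functions rather than merely of $L^\infty$-equivalence classes.
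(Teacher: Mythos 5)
Your argument is correct and is essentially the paper's computation run in the opposite direction: the paper starts from the left-hand side, substitutes the Fourier inversion $p(x)=\int_\R\hat p(\xi)\,\e^{2\pi i\xi x}\intd\xi$, and changes variables in the resulting double integral to arrive at $\mathcal D_w(T_\tau\hat q\otimes T_\nu\hat p)(x)$, whereas you start from the right-hand side, apply Fubini, and recognize the inner $\xi$-integral as that same inversion formula. The continuity estimate is identical in both. One point deserves attention, because your prose glides over it: with the phase factor $\e^{2\pi i\xi t}$ exactly as printed in \eqref{eq:operator_absorption_l1:1}, the inner integral is $\int_\R\hat p(\xi-\nu)\,\e^{2\pi i\xi t}\intd\xi=\e^{2\pi i\nu t}\,p(t)$, so the window emerges evaluated at the integration variable $t$, does not leave the $t$-integral, and the claimed identity \eqref{eq:operator_absorption_l1:2} is then false. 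For your computation (and for the paper's own derivation, which ends with the integrand $\hat q(t-\tau)\hat p(\xi-\nu)w(x-t)\,\e^{2\pi i\xi x}$ before identifying it with $\mathcal D_wF$) to go through, the kernel must read $\e^{2\pi i\xi x}$; this is evidently a typo in the definition of $\mathcal D_w$, and it is the $\e^{2\pi i\xi x}$ version that is needed so that the tensor product decouples and that the identity is applied correctly in \eqref{eq:operator_resampling_l1:7}. With that correction your Fubini-plus-inversion argument is complete, and your remark that the equality holds pointwise once one fixes the continuous representative $p=(\hat p)^\vee$ (using $\hat q\ast w\in C_b(\R)$) is accurate and slightly sharper than the stated almost-everywhere conclusion.
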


\begin{proof}
For any $x \in \R$, we have
\begin{equation}
	|(\mathcal D_w F)(x)| 
	\leq \iint_{\R^2} |F(t, \xi)| |w(x - t)| \intd t \intd \xi
	\leq \|w\|_{L^\infty(\R)} \|F\|_{L^1(\R^2)}.
\end{equation}
Thus $\|\mathcal D_w\|_{L^1(\R^2) \to L^\infty(\R)} \leq \|w\|_{L^\infty(\R)}$ and the first claim follows.

For the left-hand side of \eqref{eq:operator_absorption_l1:2} we have
by \emph{Young's convolution inequality}, see \cite{PPST2019}, that
\begin{equation}
  \|\hat q \ast w\|_\infty \leq \|\hat q\|_1 \|w\|_\infty.
  \vspace{5pt}
\end{equation}
Since $\hat p \in L^1(\R)$, we know that $p \in L^\infty(\mathbb R)$.
This implies $p \, M_\nu T_\tau (\hat q \ast w) \in L^\infty(\R)$.
Using that $p(x) = \int_\R \hat p(\xi) \e^{2 \pi i \xi x} \intd \xi$
a.e., we obtain for a.e. $x \in \mathbb R$ that
\begin{align}
p(x) M_\nu T_\tau (\hat q \ast w) (x)
&= p(x) \e^{2 \pi i x \nu} \int_{\mathbb R} \hat q(t) w(x-\tau - t) \intd t\\
&= \int_\R \hat p(\xi) \e^{2 \pi i \xi x} \intd \xi \, \e^{2 \pi i x \nu} \, \int_{\mathbb R} \hat q(t) w(x-\tau - t) \intd t\\
&= \int_\R \int_\R \hat p(\xi) \e^{2 \pi i x (\xi + \nu)} \hat q(t) w(x-\tau - t) \intd t \intd \xi\\
&= \int_\R \int_\R
	\hat q(t -  \tau) \hat p(\xi - \nu) w(x - t) \e^{2 \pi i \xi x}
\intd t \intd \xi \\
&= \mathcal D_w (T_\tau \hat q \otimes T_\nu \hat p) (x).
\label{eq:operator_absorption_l1:4}
\end{align}
\end{proof}

We use the above lemma to show the following intermediate sampling formula.

\begin{lemma} \label{thm:operator_resampling_l1} Let $H$ be given by
  \eqref{operator}.  Let $w \in L^\infty(\R)$ and
  $p, q \in L^1(\R) \cap C_0(\R)$ with $\hat p, \hat q \in L^1(\R)$
  and $\supp p \subseteq [-\frac{\Tcal_p}{2}, \frac{\Tcal_p}{2}]$ as
  well as
  $\supp q \subseteq [-\frac{\Omega_q}{2}, \frac{\Omega_q}{2}]$.
  Choose step-sizes $0 < a < 1 / \Omega_q$ and $0 < b < 1 / \Tcal_p$.
  Then for any $\phi, \psi \in L^1(\R) \cap C_0(\R)$ with
  $\hat \phi, \hat \psi \in L^1(\R)$ obeying
  \begin{equation}
    \psi(x) = \begin{cases}
      b, & \mathrm{for} \; |x| \leq \tfrac{\Tcal_p}{2},\\
      0, & \mathrm{for} \; |x| \geq \tfrac{1}{2b},
    \end{cases}
    \qquad
    \phi(x) = \begin{cases}
      a, & \mathrm{for} \; |x| \leq \tfrac{\Omega_q}{2},\\
      0, & \mathrm{for} \; |x| \geq \tfrac{1}{2a}.
    \end{cases}
  \end{equation}
  we have
  \begin{equation} 		\label{eq:operator_resampling_l1:1}
    p(x) H (\hat q \ast w)(x)
    = 	\psi (x)
    \sum_{k \in \Z} \sum_{\ell \in \Z}
    c_{k, \ell}
    M_{b \ell} T_{a k} (\hat \phi \ast w)(x)
  \end{equation}
  for all $x \in \R$, where 
  \begin{equation}
    c_{k, \ell} 
    := 	\sum_{s=1}^S 
    \eta_s 
    \hat q(a k - \tau_s) 
    \hat p(b \ell - \nu_s),
    \qquad k, \ell \in \Z.
  \end{equation}
  The series on the right side of
  \eqref{eq:operator_resampling_l1:1} converges uniformly on
  $\R$.
\end{lemma}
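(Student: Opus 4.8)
The plan is to split $M_{\nu_s}T_{\tau_s}(\hat q\ast w)$ into a ``delay part'' and a ``Doppler part'', resample each one-dimensionally via the Sampling Theorem \ref{lem:sampling_theorem_l1}, and reassemble. By linearity of $H$ it suffices to prove the identity for a single summand $p(x)\,M_{\nu_s}T_{\tau_s}(\hat q\ast w)(x)$ and afterwards sum against the finitely many coefficients $\eta_s$. Writing out the convolution and substituting $t=y+\tau_s$, that summand equals $p(x)\,\e^{2\pi i\nu_s x}\int_\R(T_{\tau_s}\hat q)(t)\,w(x-t)\intd t$; I will resample the integral in the delay variable $t$ and the prefactor $p(x)\,\e^{2\pi i\nu_s x}$ in the Doppler variable.

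For the integral, note that $q\in L^1(\R)$ with $\supp q\subseteq[-\tfrac{\Omega_q}{2},\tfrac{\Omega_q}{2}]$ gives $\hat q\in\mathrm{PW}_{\Omega_q/2}^1$, so $T_{\tau_s}\hat q\in L^1(\R)\cap C_0(\R)$ and $\widehat{T_{\tau_s}\hat q}$ is supported in $[-\tfrac{\Omega_q}{2},\tfrac{\Omega_q}{2}]$ (it is a modulate of $\widehat{\hat q}=q(-\,\cdot\,)$). The admissible kernel to feed into Theorem \ref{lem:sampling_theorem_l1} is $\hat\phi$ itself: it lies in $L^1(\R)\cap C_0(\R)$ (using the hypothesis $\hat\phi\in L^1(\R)$), and Fourier inversion gives $\widehat{\hat\phi}=\phi(-\,\cdot\,)$, which by the assumed box shape of $\phi$ equals $a$ on $[-\tfrac{\Omega_q}{2},\tfrac{\Omega_q}{2}]$ and vanishes off $[-\tfrac1{2a},\tfrac1{2a}]$ — precisely the requirement for step size $a<1/\Omega_q$. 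Hence $(T_{\tau_s}\hat q)(t)=\sum_{k\in\Z}\hat q(ak-\tau_s)\,\hat\phi(t-ak)$ with $L^1(\R)$-convergence in $t$, and since $g\mapsto\int_\R g(t)\,w(x-t)\intd t$ is a linear functional on $L^1(\R)$ of norm $\le\|w\|_\infty$, uniformly in $x$, it may be applied term by term:
\begin{equation}
\int_\R(T_{\tau_s}\hat q)(t)\,w(x-t)\intd t=\sum_{k\in\Z}\hat q(ak-\tau_s)\,(\hat\phi\ast w)(x-ak),
\end{equation}
uniformly in $x$.

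The Doppler factor is handled symmetrically. By Fourier inversion for $p$ one has $p(x)\,\e^{2\pi i\nu_s x}=\int_\R(T_{\nu_s}\hat p)(\xi)\,\e^{2\pi i\xi x}\intd\xi$, and $T_{\nu_s}\hat p\in L^1(\R)\cap C_0(\R)$ is band-limited to $[-\tfrac{\Tcal_p}{2},\tfrac{\Tcal_p}{2}]$; applying Theorem \ref{lem:sampling_theorem_l1} with kernel $\hat\psi$ (admissible since $\widehat{\hat\psi}=\psi(-\,\cdot\,)$ and $b<1/\Tcal_p$) and using $\int_\R\hat\psi(\xi-b\ell)\,\e^{2\pi i\xi x}\intd\xi=\e^{2\pi ib\ell x}(\hat\psi)^\vee(x)=\e^{2\pi ib\ell x}\psi(x)$ yields $p(x)\,\e^{2\pi i\nu_s x}=\psi(x)\sum_{\ell\in\Z}\hat p(b\ell-\nu_s)\,\e^{2\pi ib\ell x}$, uniformly in $x$. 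Multiplying the two expansions requires that both sampling sequences be absolutely summable, which is exactly where Nikol'skii's Inequality (Theorem \ref{thm:nikolskiis_inequality}) enters: from $\hat q\in\mathrm{PW}_{\Omega_q/2}^1\cap L^1(\R)\subseteq B_{\pi\Omega_q}^1$ one gets $\sum_{k\in\Z}|\hat q(ak-\tau_s)|\le a^{-1}(1+a\pi\Omega_q)\|\hat q\|_1<\infty$, and similarly $\sum_{\ell\in\Z}|\hat p(b\ell-\nu_s)|<\infty$. Since each factor $\e^{2\pi ib\ell x}(\hat\phi\ast w)(x-ak)$ is bounded by $\|\hat\phi\|_1\|w\|_\infty$, the product series converges absolutely and uniformly in $x$, so it may be multiplied out and summed in any order; recognising $\e^{2\pi ib\ell x}(\hat\phi\ast w)(x-ak)=M_{b\ell}T_{ak}(\hat\phi\ast w)(x)$, multiplying by $\eta_s$, summing over $s=1,\dots,S$, and collecting the $s$-sum into $c_{k,\ell}$ gives \eqref{eq:operator_resampling_l1:1} together with the asserted uniform convergence on $\R$.

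I expect the main obstacle to be precisely this convergence bookkeeping — making the sampling sequences $\ell^1$ via Nikol'skii so as to legitimise every interchange of sum and integral, the term-by-term multiplication of the two one-dimensional expansions, and the uniform convergence of the final double series — while the one ``creative'' observation is that the admissible Sampling-Theorem kernels here must be taken to be $\hat\phi$ and $\hat\psi$, so that the box hypotheses on $\phi$ and $\psi$ become exactly the box hypotheses on $\widehat{\hat\phi}$ and $\widehat{\hat\psi}$ demanded by Theorem \ref{lem:sampling_theorem_l1}. Alternatively the whole computation can be routed through Lemma \ref{lem:operator_absorption_l1}: the double series of translates of $\hat\phi\otimes\hat\psi$ converges to $T_{\tau_s}\hat q\otimes T_{\nu_s}\hat p$ absolutely in $L^1(\R^2)$ — again by the Nikol'skii bound — hence can be pulled through the bounded operator $\mathcal D_w$, with $\mathcal D_w(\hat\phi(\,\cdot\,-ak)\otimes\hat\psi(\,\cdot\,-b\ell))(x)=\psi(x)\,M_{b\ell}T_{ak}(\hat\phi\ast w)(x)$ by Fubini and Fourier inversion.
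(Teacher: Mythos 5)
Your argument is correct, and its main route is genuinely different from the paper's. The paper forms the two-dimensional tensor product $F = T_\tau \hat q \otimes T_\nu \hat p \in L^1(\R^2)$, samples it along each axis with kernels $\hat\phi,\hat\psi$, establishes $L^1(\R^2)$-convergence of the resulting double series, and then pushes the whole expansion through the bounded operator $\mathcal D_w$ of Lemma \ref{lem:operator_absorption_l1}, applying that lemma a second time to identify $\mathcal D_w(T_{ak}\hat\phi\otimes T_{b\ell}\hat\psi)(x)=\psi(x)M_{b\ell}T_{ak}(\hat\phi\ast w)(x)$. You instead never leave one dimension: you factor $p\,M_{\nu_s}T_{\tau_s}(\hat q\ast w)$ into the convolution integral in the delay variable and the prefactor $p(x)\e^{2\pi i \nu_s x}$, expand each factor separately by Theorem \ref{lem:sampling_theorem_l1} (pushing the $L^1$-convergent sampling series through the norm-bounded functionals $g\mapsto\int g(t)w(x-t)\intd t$ and $g\mapsto\int g(\xi)\e^{2\pi i\xi x}\intd\xi$), and multiply the two absolutely convergent series. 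Both proofs turn on the same key observation — that the admissible low-pass kernels are $\hat\phi$ and $\hat\psi$, so the box conditions on $\phi,\psi$ are exactly the frequency-domain conditions of the sampling theorem — and both need Nikol'skii for the $\ell^1$ summability of $(\hat q(ak-\tau_s))_k$ and $(\hat p(b\ell-\nu_s))_\ell$ and hence for the uniform convergence and the everywhere (rather than a.e.) validity of the identity. Your route is arguably more elementary, dispensing with Lemma \ref{lem:operator_absorption_l1} and with $L^1(\R^2)$ bookkeeping; the paper's route buys a single reusable continuity statement ($\mathcal D_w$) that it exploits again in Theorem \ref{thm:operator_resampling_stieltjes}. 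Your closing alternative sketch is essentially the paper's proof verbatim.
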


\begin{proof}
  By linearity it suffices to consider the case $H = M_\nu T_\tau$.
  Since $p,q \in L^1(\mathbb R)$, we have
  $\hat p,\hat q \in C_0(\mathbb R)$ so that
  $F \coloneqq T_\tau \hat q \otimes T_\nu \hat p \in L^1(\R^2) \cap
  C_0(\R^2)$.  Moreover, by the support properties of $p$ and $q$, we
  get
  $\supp \hat F \subset [-\tfrac{\Omega_q}{2}, \tfrac{\Omega_q}{2}]
  \times [-\tfrac{\Tcal_p}{2}, \tfrac{\Tcal_p}{2}]$.  Consequently, we
  can apply Theorem~\ref{lem:sampling_theorem_l1} to $F$ along each
  dimension w.r.t.\ the step-sizes $a$ and $b$ and low-pass kernels
  $\hat \phi$ and $ \hat\psi$ to obtain
\begin{equation}\label{eq:operator_resampling_l1:4}
  F(x, y) 
  = 	\sum_{k\in \Z} \sum_{\ell \in \Z} 
  F(ak, b \ell) 
  \hat \phi(x - ak) 
  \hat \psi(y - b \ell),
\end{equation}
which converges absolutely and uniformly.  
For the $L^1$-convergence,
we have to show that
\begin{align}
  &\int_{\mathbb R} \int_{\mathbb R}
    \biggl|
    \sum_{|k|\ge N} \sum_{|\ell| \ge N} 
    F(ak, b \ell) 
    \hat \phi(x - ak) 
    \hat \psi(y - b \ell)
    \biggr|
    \intd y \intd x
  \\
  &\qquad
    \le
    \int_{\mathbb R} 
    \biggl|
    \sum_{|k|\ge N}  
    \hat q(ak - \tau) 
    \hat \phi(x - ak) 
    \biggr|
    \intd x
    \cdot
    \int_{\mathbb R}
    \biggl|
    \sum_{|\ell| \ge N} 
    \hat p( b \ell - \nu)
    \hat \psi(y - b \ell)
    \biggr|
    \intd y
\end{align}
vanishes for $N \to \infty$, which follows for both integrals as
discussed in the proof of Theorem~\ref{lem:sampling_theorem_l1}.

As the operator $\mathcal D_w : L^1(\R^2) \to L^\infty(\R)$ 
defined in Lemma \ref{lem:operator_absorption_l1} is continuous we conclude
\begin{equation}\label{eq:operator_resampling_l1:5}
p \, H(\hat q \ast w)
= D_w F
= \sum_{k\in \Z} \sum_{\ell \in \Z}
	F(ak, b \ell) 
	D_w (
		T_{ak} \hat \phi 
		\otimes T_{b\ell} \hat \psi
	) \qquad \text{a.e.} 
\end{equation}
By applying Lemma \ref{lem:operator_absorption_l1} once again, we obtain
\begin{equation}\label{eq:operator_resampling_l1:7}
	\mathcal D_w \bigl(
		T_{ak} \hat \phi \otimes T_{b\ell} \hat \psi
	\bigr)(x)
	= \psi (x) \, M_{b\ell} T_{ak} (\hat \phi \ast w)(x)
	\qquad \text{for a.e. $x \in \mathbb R$}.
\end{equation}
Consequently we get for a.e. $x \in \mathbb R$ that
\begin{align}\label{eq:operator_resampling_l1:8}
p(x)\,  H (\hat q \ast w)(x)
&= \sum_{k\in \Z} \sum_{\ell \in \Z}
	F(ak, b \ell) 
	\psi (x)\, 
	M_{b\ell} T_{ak} (\hat \phi \ast w)(x)\\
&= \psi (x) \, \sum_{k\in \Z} \sum_{\ell \in \Z}
	\bigl[ \hat q(ak - \tau) \hat p(b \ell - \nu) \bigr] 
	M_{b\ell} T_{ak} (\hat \psi \ast w)(x).
\end{align}
Note that by Theorem \ref{thm:nikolskiis_inequality} the sequences
$\bigl(\hat q(ak - \tau)\bigr)_{k \in \Z}$ and
$\bigl(\hat p(b \ell - \nu)\bigr)_{\ell \in \Z}$ are absolutely
summable.  The functions $M_{b\ell}T_{ak} (\hat \psi \ast w)$ are
bounded by
\begin{equation}
\|M_{b\ell}T_{ak} (\hat \phi \ast w)\|_\infty
\leq \|\hat \phi \ast w \|_\infty
\leq \|\hat \phi\|_1 \|w\|_\infty.
\end{equation}
Thus, the series \eqref{eq:operator_resampling_l1:8} converges
uniformly on $\R$ and, since the partial sums in
\eqref{eq:operator_resampling_l1:8} are continuous functions, we
conclude that the series converges to a continuous bounded function.
As $p$ and $\hat q \ast w$ are also continuous and bounded, we see
that \eqref{eq:operator_resampling_l1:1} holds for all $x \in \R$.
\end{proof}

Although Theorem~\ref{thm:operator_resampling_l1} works on arbitrary
bounded identifiers $w \in L^\infty(\R)$, the fact that the left side
of \eqref{eq:operator_resampling_l1:1} does not depend on $\phi$ and
$\psi$ suggests that there might be a way to avoid the use of these
functions.  For this purpose, we restrict our attention to a subset of
$L^\infty(\R)$, namely functions $f = \hat \mu_f$ with
$\mu_f \in \mathcal M(\mathbb R)$.  Having the Fourier convolution
theorem in mind, for a Borel measurable, bounded function $\phi$, we
define the convolution
\begin{equation}\label{filter}
(\phi \star_{\mathcal F} f) (x) \coloneqq  \widehat{(\phi \mu_f)} (x) = \int_{\mathbb R} \phi(\xi) \e^{-2\pi i x \xi}  \intd \mu_f(\xi),
\end{equation}
which yields a continuous and bounded function.
If $\phi \in L^1(\mathbb R) \cap C_0(\mathbb R)$ such that $\hat \phi
\in L^1(\mathbb R)$, then our convolution may be expressed by the
Fourier convolution as
$$\phi \star_{\mathcal F} f =  \hat \phi * f.$$
We have the following convergence result.

\begin{lemma} \label{lem:freq_filter_convergence}
Let $f = \hat \mu_f$ with $\mu_f \in \mathcal{M}(\mathbb R)$ and let $g$ be a bounded Borel-measurable function.
Assume that the uniformly bounded and Borel measurable functions $g_n : \R \to \C$
converge pointwise to $g : \R \to \C$.
Then $g_n \star_{\mathcal F} f$ converges uniformly to $g \star_{\mathcal F} f$, i.e.,
\begin{equation}
\|g_n \star_{\mathcal F} f - g \star_{\mathcal F} f \|_\infty \to 0, \qquad \text{as } n \to \infty.
\label{eq:freq_filter_convergence:1}
\end{equation}
\end{lemma}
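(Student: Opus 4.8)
The plan is to estimate the sup-norm in \eqref{eq:freq_filter_convergence:1} directly from the definition \eqref{filter}. For any $x\in\R$ we have
\[
(g_n\star_{\mathcal F}f)(x)-(g\star_{\mathcal F}f)(x)
=\int_{\R}\bigl(g_n(\xi)-g(\xi)\bigr)\e^{-2\pi i x\xi}\intd\mu_f(\xi),
\]
so that
\[
\bigl|(g_n\star_{\mathcal F}f)(x)-(g\star_{\mathcal F}f)(x)\bigr|
\le\int_{\R}\bigl|g_n(\xi)-g(\xi)\bigr|\intd|\mu_f|(\xi).
\]
The right-hand side is independent of $x$, so it bounds the sup-norm, and the whole task reduces to showing that $\int_{\R}|g_n-g|\intd|\mu_f|\to0$. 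First I would note that $|\mu_f|$ is a finite (nonnegative) measure by the definition of $\M(\R)$, that $g_n\to g$ pointwise, and that the integrands $|g_n-g|$ are uniformly bounded, say by $2C$ where $C$ is the common bound on the $g_n$ and on $g$ (the latter being a pointwise limit of functions bounded by $C$). Then the dominated convergence theorem with dominating function the constant $2C\in L^1(|\mu_f|)$ yields $\int_{\R}|g_n-g|\intd|\mu_f|\to0$, which is exactly the claim.

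The one point that needs a little care — and the closest thing to an obstacle — is the interchange of integration against the complex measure $\mu_f$ with the $|\mu_f|$-integral bound, i.e. the estimate $\bigl|\int h\intd\mu_f\bigr|\le\int|h|\intd|\mu_f|$ for bounded Borel $h$. This is the standard inequality for integration against a complex measure and follows from the polar decomposition $\mu_f=e^{i\theta}|\mu_f|$; it is already implicitly used in the excerpt when bounding $\langle\mu,\varphi\rangle\le\|\mu\|_{\M(\R)}\|\varphi\|_\infty$, so I would simply invoke it. One should also confirm that $g_n\star_{\mathcal F}f$ and $g\star_{\mathcal F}f$ are well-defined bounded functions in the first place, but this is granted by the remark following \eqref{filter} that $\phi\star_{\mathcal F}f=\widehat{(\phi\mu_f)}$ is continuous and bounded whenever $\phi$ is bounded Borel measurable. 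With these two observations in hand the proof is essentially the two displayed estimates above followed by one application of dominated convergence.
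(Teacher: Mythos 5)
Your proof is correct and follows essentially the same route as the paper: bound the sup-norm by $\int_\R |g_n-g|\intd|\mu_f|$ and then pass the limit inside using the uniform bound and the finiteness of $|\mu_f|$. The only cosmetic difference is that you invoke dominated convergence where the paper uses the (reverse) Fatou lemma; with a constant dominating function on a finite measure these are interchangeable.
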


\begin{proof}
Applying Fatou's lemma, we obtain
\begin{align}
\limsup_{n \to \infty} 
	\|g_n \star_{\mathcal F} f - g \star_{\mathcal F} f \|_\infty
&= \limsup_{n \to \infty} \left\{ \sup_{x \in \R} 
	\bigr| \int_\R 
		(g_n(\xi) - g(\xi)) 
		\e^{-2\pi i x \xi} 
	\intd \mu_f(\xi) \bigr|
\right\}\\
&\leq \limsup_{n \to \infty} \left\{
	\int_\R
		|g_n(\xi) - g(\xi)|
	\intd |\mu_f|(\xi)
\right\}\\
&\leq \int_\R
	\underbrace{\limsup_{n \to \infty} |g_n(\xi) - g(\xi)|}_{= 0}
\intd |\mu_f|(\xi)
= 0.
\end{align}
The lemma of Fatou is applicable since $\|g - g_n\|_\infty \leq 2M$ for some $M > 0$ 
and constant functions are integrable w.r.t. $\mu_f \in \M(\R)$.
\end{proof}

\begin{theorem}\label{thm:operator_resampling_stieltjes}
Let $H$ be given by \eqref{operator}.
Let $w = \hat \mu_w$ with $\mu_w \in \M(\R)$ and 
$p, q \in L^1(\R) \cap C_0(\R)$ with $\hat p, \hat q \in L^1(\R)$ and 
$\supp p \subseteq [-\frac{\Tcal_p}{2}, \frac{\Tcal_p}{2}]$ and $\supp q \subseteq [-\frac{\Omega_q}{2}, \frac{\Omega_q}{2}]$.
Choose $0 < a < 1 /\Omega_q$ and $0 < b < 1 / \Tcal_p$. Then, for all $x \in \R$, we have
\begin{equation} \label{eq:operator_resampling_stieltjes:1}
p(x) H(\hat q \ast w)(x)
= 	ab\, \chi_{(-\frac{1}{2b}, \frac{1}{2b})}(x)\,
	\sum_{k \in \Z} \sum_{\ell \in \Z}
		c_{k, \ell} M_{b \ell} T_{a k} \bigl(\chi_{(-\frac{1}{2a}, \frac{1}{2a})} \star_{\mathcal F} w \bigr)(x),
\end{equation}
where
\begin{equation}
c_{k, \ell}
=	\sum_{s=1}^S 
		\eta_s
		\hat q(a k - \tau_s) 
		\hat p(b \ell - \nu_s), \qquad k, \ell \in \Z.
\end{equation}
The series on the right-hand side of \eqref{eq:operator_resampling_stieltjes:1} converges uniformly on $\R$.
\end{theorem}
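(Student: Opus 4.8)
The plan is to obtain \eqref{eq:operator_resampling_stieltjes:1} as the limit of the intermediate formula \eqref{eq:operator_resampling_l1:1} from Lemma~\ref{thm:operator_resampling_l1} by letting the helper functions $\phi$ and $\psi$ degenerate to the (non-integrable, discontinuous) rescaled characteristic functions that appear in the target identity. Concretely, since $w = \hat\mu_w$ lies in the restricted class to which Lemma~\ref{lem:freq_filter_convergence} applies, I would choose a sequence of admissible low-pass kernels $\phi_n, \psi_n \in L^1(\R)\cap C_0(\R)$ with $\hat\phi_n,\hat\psi_n \in L^1(\R)$ satisfying the box constraints of Lemma~\ref{thm:operator_resampling_l1} (value $a$ on $[-\tfrac{\Omega_q}{2},\tfrac{\Omega_q}{2}]$, zero outside $[-\tfrac{1}{2a},\tfrac{1}{2a}]$, and analogously for $\psi_n$ with $b$ and $\Tcal_p$), chosen so that $\phi_n \to a\,\chi_{(-\frac{1}{2a},\frac{1}{2a})}$ and $\psi_n \to b\,\chi_{(-\frac{1}{2b},\frac{1}{2b})}$ pointwise while remaining uniformly bounded (say by $2a$ and $2b$). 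Such sequences exist by a routine mollification argument: smooth the indicator slightly inside the gap between $\Omega_q/2$ and $1/(2a)$.

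**Main steps.**
First, apply Lemma~\ref{thm:operator_resampling_l1} with $\phi_n,\psi_n$ to get, for every $n$ and every $x\in\R$,
\begin{equation}
p(x)\,H(\hat q\ast w)(x) = \psi_n(x)\sum_{k\in\Z}\sum_{\ell\in\Z} c_{k,\ell}\, M_{b\ell}T_{ak}(\hat\phi_n\ast w)(x),
\end{equation}
where the left-hand side is independent of $n$. Second, use the identity $\hat\phi_n\ast w = \phi_n\star_{\mathcal F} w$ valid for $\phi_n\in L^1(\R)\cap C_0(\R)$ with $\hat\phi_n\in L^1(\R)$, so that $M_{b\ell}T_{ak}(\hat\phi_n\ast w) = M_{b\ell}T_{ak}(\phi_n\star_{\mathcal F} w)$. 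Third, pass to the limit $n\to\infty$ on the right-hand side. Since $\phi_n \to a\,\chi_{(-\frac{1}{2a},\frac{1}{2a})}$ pointwise and uniformly bounded, Lemma~\ref{lem:freq_filter_convergence} gives $\phi_n\star_{\mathcal F} w \to (a\,\chi_{(-\frac{1}{2a},\frac{1}{2a})})\star_{\mathcal F} w = a\,(\chi_{(-\frac{1}{2a},\frac{1}{2a})}\star_{\mathcal F} w)$ uniformly on $\R$; the modulation-translation operators are isometries on $C_b(\R)$, so $M_{b\ell}T_{ak}(\phi_n\star_{\mathcal F} w) \to a\,M_{b\ell}T_{ak}(\chi_{(-\frac{1}{2a},\frac{1}{2a})}\star_{\mathcal F} w)$ uniformly, with a bound $\|\phi_n\star_{\mathcal F} w\|_\infty \le \|\phi_n\|_\infty\|\mu_w\|_{\mathcal M} \le 2a\|\mu_w\|_{\mathcal M}$ uniform in $n$ and in $(k,\ell)$. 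Combined with the absolute summability of $(c_{k,\ell})$ — which follows from Nikol'skii's inequality (Theorem~\ref{thm:nikolskiis_inequality}) applied to $\hat p$ and $\hat q$, exactly as in the proof of Lemma~\ref{thm:operator_resampling_l1} — dominated convergence for the double series lets me exchange limit and summation, yielding
\begin{equation}
\sum_{k,\ell} c_{k,\ell}\, M_{b\ell}T_{ak}(\phi_n\star_{\mathcal F} w)(x) \;\longrightarrow\; a\sum_{k,\ell} c_{k,\ell}\, M_{b\ell}T_{ak}(\chi_{(-\frac{1}{2a},\frac{1}{2a})}\star_{\mathcal F} w)(x)
\end{equation}
uniformly in $x$. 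Fourth, handle the prefactor: $\psi_n(x)\to b\,\chi_{(-\frac{1}{2b},\frac{1}{2b})}(x)$ pointwise, so multiplying the (uniformly convergent, uniformly bounded) series by $\psi_n$ and taking $n\to\infty$ produces $b\,\chi_{(-\frac{1}{2b},\frac{1}{2b})}(x)$ times the limiting series, giving the claimed factor $ab$. Since the left-hand side does not depend on $n$, the identity \eqref{eq:operator_resampling_stieltjes:1} holds for all $x\in\R$, and the uniform convergence of the limiting series is inherited from the uniform bounds above.

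**Main obstacle.**
The delicate point is the interchange of the limit $n\to\infty$ with the infinite double sum and with multiplication by $\psi_n$. The summability of $(c_{k,\ell})$ is not an issue (Nikol'skii), and the per-term convergence is clean (Lemma~\ref{lem:freq_filter_convergence}); what needs care is producing the dominating bound $\|M_{b\ell}T_{ak}(\phi_n\star_{\mathcal F} w)\|_\infty \le 2a\|\mu_w\|_{\mathcal M}$ \emph{uniformly} in $n$, which hinges on choosing the mollified kernels $\phi_n$ with a uniform sup-norm bound (not merely $\hat\phi_n\in L^1$ for each $n$). Once that uniform bound is in hand, dominated convergence applies term-by-term in the series and the product with $\psi_n$ (itself uniformly bounded by $2b$) causes no trouble. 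A secondary subtlety is verifying that admissible $\phi_n$ actually exist with both the box constraints \emph{and} a uniform $L^\infty$ bound and pointwise convergence to $a\,\chi_{(-\frac{1}{2a},\frac{1}{2a})}$; this is a standard construction exploiting the slack $0<a<1/\Omega_q$ (hence $\Omega_q/2 < 1/(2a)$), but should be stated explicitly.
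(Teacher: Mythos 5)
Your proposal is correct and follows essentially the same route as the paper: apply Lemma~\ref{thm:operator_resampling_l1} with uniformly bounded approximating kernels $\phi_n,\psi_n$ satisfying the box constraints and converging pointwise to $a\,\chi_{(-\frac{1}{2a},\frac{1}{2a})}$ and $b\,\chi_{(-\frac{1}{2b},\frac{1}{2b})}$, then pass to the limit using Lemma~\ref{lem:freq_filter_convergence} together with the $\ell^1$-summability of $(c_{k,\ell})$ (the paper bounds the series difference by $\|(c_{k,\ell})\|_{1}\,\|\phi_n\star_{\mathcal F}w-\phi\star_{\mathcal F}w\|_\infty$, which is exactly your dominated-convergence step). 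The only cosmetic difference is that the paper takes the $\phi_n,\psi_n$ to be Schwartz functions rather than generic mollifications, which changes nothing in the argument.
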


\begin{proof}
Let $(\psi_n)_{n \in \Z}$ and $(\phi_n)_{n\in \N}$ be uniformly bounded sequences of Schwartz functions with 
\begin{equation}
	\psi_n(x) = \begin{cases}
		b, & \text{for $|x| \leq \tfrac{\Tcal_p}{2}$},\\
		0, & \text{for $|x| \geq \tfrac{1}{2b}$},
	\end{cases}
	\qquad
	\phi_n(x) = \begin{cases}
		a, & \text{for $|x| \leq \tfrac{\Omega_q}{2}$},\\
		0, & \text{for $|x| \geq \tfrac{1}{2a}$}
	\end{cases}
\end{equation}
for all $n \in \N$ which converge for $n\to \infty$ pointwise as
\begin{equation}
	\psi_n(x) \to \begin{cases}
		b, & \text{for $|x| < \tfrac{1}{2b}$},\\
		0, & \text{for $|x| \geq \tfrac{1}{2b}$},
	\end{cases}
	\qquad
	\phi_n(x) \to \begin{cases}
		a, & \text{for $|x| < \tfrac{1}{2a}$},\\
		0, & \text{for $|x| \geq \tfrac{1}{2a}$}.
	\end{cases}
	\end{equation}
Abbreviating $y \coloneqq p H (\hat q \ast w)$, we obtain by Theorem~\ref{thm:operator_resampling_l1} that
\begin{equation}
\label{eq:operator_resampling_stieltjes:3}
y(x)
= 	\psi_n (x)
	\sum_{k \in \Z} \sum_{\ell \in \Z}
	c_{k, \ell}
	M_{b \ell} T_{a k} (\hat \phi_m \ast w)(x)
\qquad \text{for all $x\in \R$ and $n,m \in \N$.}
\end{equation}
Note that neither $y(x)$ nor $c_{k,\ell}$ depend on $n$ or $m$.
Letting $n \to \infty$, we immediately obtain the pointwise limit
\begin{equation}
\label{eq:operator_resampling_stieltjes:4}
y(x)
= 	b\, \chi_{(-\frac{1}{2b}, \frac{1}{2b})} (x)
	\sum_{k \in \Z} \sum_{\ell \in \Z}
	c_{k, \ell}
	M_{b \ell} T_{a k} (\hat \phi_m \ast w)(x)
\qquad \text{for all $x\in \R$ and $m \in \N$.}
\end{equation}
Now consider the series:
We already used in the proof of Theorem~\ref{thm:operator_resampling_l1} 
that by Theorem \ref{thm:nikolskiis_inequality} 
the coefficients $(c_{k,\ell})_{k,\ell \in \Z} \in \ell^1(\Z^2)$ are absolutely summable.
Moreover, writing $\phi \coloneqq a \chi_{(-\frac{1}{2a}, \frac{1}{2a})}$ we
know by construction that $\phi_m(x) \to \phi(x)$ as $m \to \infty$
for every $x \in \R$
and $(\phi_m)_{m \in \Z}$ is uniformly bounded.
We can therefore apply Lemma~\ref{lem:freq_filter_convergence} to obtain
\begin{equation}
\| \phi_m \star_{\mathcal F} w - \phi\star_{\mathcal F} w\|_\infty \to 0, \qquad \text{as } m \to \infty.
\end{equation}
Since we have $\hat \phi_m \ast w = \phi_m \star_{\mathcal F} w$ for all $m \in \N$ we estimate
\begin{align}
\bigr\| \sum_{k \in \Z} \sum_{\ell \in \Z} 
	c_{k, \ell} 
	M_{b\ell} T_{ak} ( \hat \phi_m \ast w - \phi \star_{\mathcal F} w )
\bigr\|_\infty
&\leq \sum_{k \in \Z} \sum_{\ell \in \Z}
	|c_{k,\ell}|
	\|\hat \phi_m \ast w - \phi \star_{\mathcal F} w \|_\infty\\
&= \|(c_{k,\ell})_{k,\ell\in \Z} \|_1
	\| \phi_m \star_{\mathcal F} w - \phi \star_{\mathcal F} w \|_\infty.
\end{align}
Letting $m \to \infty$ the right side converges to $0$ which proves that
\begin{equation} \label{eq:operator_resampling_stieltjes:5}
y(x)
= 	b\, \chi_{(-\frac{1}{2b}, \frac{1}{2b})} (x)
	\sum_{k \in \Z} \sum_{\ell \in \Z}
	c_{k, \ell}
	M_{b \ell} T_{a k} (\phi \star_{\mathcal F} w)(x)
\end{equation}
for all $x\in \R$, which is equivalent to \eqref{eq:operator_resampling_stieltjes:1}.

The uniform convergence of the series follows immediately from $(c_{k,\ell})_{k,\ell \in \Z} \in \ell^1(\Z^2)$ 
and $\chi_{(-\frac{1}{2a}, \frac{1}{2a})} \star_{\mathcal F} w \in C_b(\R)$.
\end{proof}

Now we can prove our main theorem.

\begin{proof}[Proof of {Theorem~\ref{thm:heckel_formula}}]
  1. Since $\frac{|k|\Omega}{L_1} \leq \frac{(L_1-1)\Omega}{2L_1}$ in
  the reprsentation \eqref{eq:heckel_formula:2} of the identifier $w$,
  we see that
  $\supp \mu_w \subset [-\frac{L_1-1}{2L_1}\Omega,
  \frac{L_1-1}{2L_1}\Omega]$.  Choose
  $\max \{\frac{L_1-1}{L_1}, \frac{L_2-1}{L_2} \} < \beta < 1$ and let
  $(\gamma_n)_{n\in \N}$ and $(\lambda_n)_{n \in \N}$ be sequences of
  positive numbers such that $1 < \gamma_n$ and
  $\beta < \lambda_n < 1$ and $\gamma_n \lambda_n < 1$ for all
  $n \in \N$ which converge to $1$ as $n \to \infty$.  Then, for
  $n \in \N$, define
\begin{equation}
\Tcal_n \coloneqq \gamma_n \Tcal, \qquad
\Omega_n \coloneqq \gamma_n \Omega, \qquad
a_n \coloneqq \frac{\lambda_n}{\Omega}, \qquad
b_n \coloneqq \frac{\lambda_n}{\Tcal},
\end{equation}
as well as the functions
\begin{equation}
w_n(x) 
:= w \left(\frac{x}{\lambda_n}\right) 
\qquad  x \in \R.
\end{equation}
Clearly, we have for all $n \in \N$ that 
$w_n= \hat \mu_{w_n}$, 
where $\mu_{w_n} \in \M(\mathbb R)$ fulfills 
$$
\supp \mu_{w_n} \subset \left[- \frac{(L_1-1)\Omega}{2L_1 \lambda_n}, \frac{(L_1-1)\Omega}{2L_1 \lambda_n} \right] 
\subset 
\left(-\frac{\Omega}{2}, \frac{\Omega}{2}\right).
$$
Further, the function $w_n$ is $a_n L_1$-periodic.
Let $(p_n)_{n \in \N}, (q_n)_{n \in \N}$ be sequences of Schwartz functions with 
\begin{equation}
	p_n(x) = \begin{cases}
		1, & \text{for $|x| \leq \tfrac{\Tcal}{2}$},\\
		0, & \text{for $|x| \geq \tfrac{\Tcal_n}{2}$},
	\end{cases}
	\qquad
	q_n(x) = \begin{cases}
		1, & \text{for $|x| \leq \tfrac{\Omega}{2}$},\\
		0, & \text{for $|x| \geq \tfrac{\Omega_n}{2}$}.
	\end{cases}
\end{equation}
We consider the signal
\begin{equation}
y_n(x) := p_n(x) M_\nu T_\tau (\hat q_n \ast w_n) (x), \qquad x\in \R.
\label{eq:heckel_formula:3}
\end{equation}
Now $p_n, q_n$ as well as
$a_n = \frac{ \lambda_n}{\Omega} < \frac{1}{\Omega_n}$ and
$b_n = \frac{ \lambda_n}{ \Tcal} < \frac{1}{\Tcal_n}$ satisfy the
assumptions of Theorem \ref{thm:operator_resampling_stieltjes}. Hence
we get
\begin{equation}
y_n(x)
= 	a_n b_n 
	\chi_{(-\frac{1}{2b_n}, \frac{1}{2b_n})}(x)
	\sum_{k \in \Z} \sum_{\ell \in \Z}
		c_{n, k, \ell}
		M_{b_n \ell} T_{a_n k} 
			\left( \chi_{(-\frac{1}{2a_n}, \frac{1}{2a_n})} \star_{\mathcal F} w_n \right)(x).
\label{eq:heckel_formula:4}
\end{equation}
with $c_{n, k, \ell} \coloneqq \hat q_n(a_n k - \tau) \hat p_n(b_n \ell - \nu)$ for $k, \ell \in \Z$.

Since $\frac{1}{a_n} = \frac{\Omega}{\lambda_n} > \Omega$ it follows that 
$\supp \mu_{w_n} \subset (-\frac{\Omega}{2}, \frac{\Omega}{2}) \subset (-\frac{1}{2a_n}, \frac{1}{2a_n})$.
Therefore we have for all $x \in \R$ and $n \in \N$ that
\begin{equation}
\chi_{(-\frac{1}{2a_n}, \frac{1}{2a_n})} \star_{\mathcal F} w_n(x) 
= \int_{(-\frac{1}{2a_n}, \frac{1}{2a_n})} \e^{-2\pi i\xi x} \intd \mu_{w_n}(\xi)
= w_n (x).
\end{equation}
Thus for $|x| < \frac{1}{2b_n}$ we can simplify \eqref{eq:heckel_formula:4} to
\begin{equation}
y_n(x)
= a_n b_n \sum_{k \in \Z} \sum_{\ell \in \Z}
		c_{n, k, \ell}
		M_{b_n \ell} T_{a_n k} w_n (x).
\end{equation}
2.
For $j = -N_2, \dots, N_2$, we consider  
\begin{equation}
y_n\left(\frac{j}{b_n L_2}\right) 
= 	a_n b_n 
	\sum_{k \in \Z} \sum_{\ell \in \Z}
		c_{n, k, \ell}
		w_n\left(\frac{j}{b_n L_2} - a_n k\right)
		\e^{2 \pi i \frac{\ell j}{L_2}}.
\label{eq:heckel_formula:5}
\end{equation}
Since $\hat p_n, \hat q_n$ are Schwartz functions, 
we know that $(c_{n,k,\ell})_{k,\ell\in\Z} \in \ell^1(\Z^2)$.
Further $w_n$ is bounded, so that
the series in \eqref{eq:heckel_formula:5} converges absolutely.
Consequently we can rearrange the summation and use the substitution
$k = rL_1 + u$ and $\ell = t L_2 + v$ for $r, t \in \Z$ and 
$u = -N_1, \dots, N_1$ as well as $v = -N_2, \dots, N_2$ to obtain
\begin{align}
  y_n \Bigl(\frac{j}{b_n L_2}\Bigr) 
  &= 	a_n b_n 
    \sum_{r \in \Z} \sum_{t \in \Z} \sum_{u = -N_1}^{N_1} \sum_{v= -N_2}^{N_2}
    c_{n, rL_1+u, t L_2+v} \,
        w_n \Bigl( \frac{j}{b_n L_2} - a_n u - a_n L_1 r \Bigr)
    \e^{2 \pi i (tj + \frac{v j}{L_2})}\\[-10pt]
  &=	a_n b_n
    \sum_{u = -N_1}^{N_1} \sum_{v= -N_2}^{N_2}
    w_n \Bigl(\frac{j}{b_n L_2} - a_n u \Bigr)
    \e^{2 \pi i  \frac{v j}{L_2}}
    \sum_{r \in \Z} \sum_{t \in \Z} 
    c_{n, rL_1+u, t L_2+v} \\
  &=	a_n b_n
    \sum_{u = -N_1}^{N_1} \sum_{v= -N_2}^{N_2}
    w_n \Bigl(\frac{j}{b_n L_2} - a_n u \Bigr)
    \e^{2 \pi i  \frac{v j}{L_2}}
    Q_{n, u}(\nu) P_{n, v}(\tau),
    \label{eq:disc-data-y}
\end{align}
where in the last line we abbreviate
\begin{equation}
\sum_{r \in \Z} \sum_{t \in \Z} 
	c_{n, rL_1+u, tL_2+v}
= \underbrace{\sum_{r \in \Z} 
	\hat q_n \bigl(a_n (r L_1 + u) - \tau \bigr)
}_{\eqqcolon Q_{n, u} (\tau)}
\underbrace{\sum_{t \in \Z} 
	\hat p_n \bigl(b_n(t L_2 + v) - \nu \bigr)
}_{\eqqcolon P_{n, v} (\nu)}.
\label{eq:heckel_formula:6}
\end{equation}
We can significantly simplify \eqref{eq:heckel_formula:6} via
Poisson's summation formula: Indeed, $\hat q_n, \hat p_n$ are
band-limited, integrable functions, so by
Lemma~\ref{thm:poisson_bandlimited} we obtain
\begin{equation}
Q_{n, u}(\tau)
= \sum_{r \in \Z} 
	\hat q_n \bigl(a_n (r L_1 + u) - \tau \bigr)
= \frac{1}{a_n L_1} \sum_{k = -N_1}^{N_1}
	 q_n \Bigl(\frac{-k}{a_nL_1} \Bigr) \e^{2 \pi i \frac{k (a_n  u - \tau)}{a_n L_1} }
\end{equation}
and 
\begin{equation}
P_{n, v}(\nu)
= \sum_{t \in \Z} 
	\hat p_n \bigl(b_n(tL_2 + v) - \nu \bigr)
= \frac{1}{b_n L_2} \sum_{\ell =-N_2}^{N_2}
	 {p}_n \Bigl(\frac{-\ell}{b_nL_2} \Bigr) \e^{2 \pi i \frac{\ell(b_n  v - \tau)}{b_n L_2}}.
\end{equation}
We used that $q_n(\frac{-k}{a_n L_1}) = 0$ if $|k| \geq \frac{L_1}{2}$
since this implies
$\frac{|k|}{a_nL_1} \geq \frac{1}{2a_n} > \frac{\Omega_n}{2}$ and also
$p_n(\frac{-\ell}{b_nL_2}) = 0$ if $|\ell| \geq \frac{L_2}{2}$ because
then $\frac{|\ell|}{b_nL_2} \geq \frac{1}{2b_n} > \frac{\Tcal_n}{2}$.
\\[1em]
3.  Finally, we take limits.  By continuity of $w$ it is easy to
compute
\begin{equation}
\lim_{n \to \infty} w_n \Bigl(\frac{j}{b_n L_2} - a_n u \Bigr)
= \lim_{n \to \infty} w \Bigl(\frac{j}{\lambda_n b_n L_2} - \frac{a_n}{\lambda_n} u \Bigr)
= w \Bigl(\frac{\Tcal j}{L_2} - \frac{u}{\Omega} \Bigr).
\end{equation}
Now consider the limits of $Q_{n, u}(\tau)$ and $P_{n, v}(\nu)$.
It follows from $a_n \Omega = \lambda_n > \beta > \frac{2N_1}{L_1}$ that 
$\frac{|k|}{a_nL_1} 
\leq \frac{N_1}{a_nL_1} < \frac{\Omega}{2}$ for $k = -N_1, \dots, N_1$, 
which in turn implies $q_n(\frac{-k}{a_nL_1}) = 1$.
Similarly, since $b_n \Tcal = \lambda_n > \beta > \frac{2N_2}{L_2}$ 
we have $\frac{|\ell|}{b_nL_2} \leq \frac{N_2}{a_nL_2} 
< \frac{\Tcal}{2}$ and thus $p(\frac{-\ell}{b_nL_2}) = 1$ 
for all $\ell = -N_2, \dots, N_2$.
Consequently, it follows 
\begin{align}
  \lim_{n\to \infty} 
  Q_{n, u}(\tau)
  &= \lim_{n\to\infty}
    \frac{1}{a_n L_1} \sum_{k = -N_1}^{N_1}
    \e^{2 \pi i \frac{k (a_n  u - \tau)}{a_n L_1} }
    = \frac{\Omega}{L_1} \sum_{k = -N_1}^{N_1}
    \e^{2 \pi i \frac{k (u - \Omega \tau)}{L_1}}
    = \frac{\Omega}{L_1}D_{N_1}\left(\frac{u - \Omega \tau}{L_1}\right)
\end{align}
and by the an analogous computation,
\begin{align}
\lim_{n \to \infty} 
	P_{n, v}(\nu)
= \frac{\Tcal}{L_2} 
	D_{N_2}\left(\frac{v - \Tcal \nu}{ L_2}\right).
\end{align}
Therefore taking the limit of \eqref{eq:disc-data-y} yields
\begin{align}
&\lim_{n \to \infty} y_n \Bigl(\frac{j}{b_n L_2} \Bigr)\\
  &= \frac{1}{\Tcal \Omega}
    \sum_{u = -N_1}^{N_1} \sum_{v= -N_2}^{N_2}
    w \Bigl(\frac{\Tcal j}{L_2} - \frac{u}{\Omega} \Bigr)
    \e^{2 \pi i  \frac{v j}{L_2}}
    \frac{\Omega}{L_1} D_{N_1} \Bigl(\frac{u - \Omega \tau}{L_1} \Bigr)
    \frac{\Tcal}{L_2} D_{N_2} \Bigl(\frac{v - \Tcal \nu}{ L_2} \Bigr)\\
  &= \frac{1}{L_1 L_2}
    \sum_{u = -N_1}^{N_1} \sum_{v= -N_2}^{N_2}
    w \Bigl(\frac{\Tcal \Omega j - u L_2}{\Omega L_2} \Bigr)
    \e^{2 \pi i  \frac{v j}{L_2}}
    D_{N_1} \Bigl(\frac{u - \Omega \tau}{L_1} \Bigr)
    D_{N_2} \Bigl(\frac{v - \Tcal \nu}{ L_2} \Bigr).
\label{eq:heckel_formula:7}
\end{align}
Next we consider the limit of the definition of $y_n(\frac{j}{b_n L_2})$, i.e.,
\begin{equation}
\lim_{n \to \infty}
	y_n \Bigl(\frac{j}{b_n L_2} \Bigr)
= \lim_{n \to \infty}
		p_n \Bigl(\frac{j}{b_n L_2} \Bigr) 
		M_\nu T_\tau (\hat q_n*w_n) \Bigl(\frac{j}{b_nL_2} \Bigr).
	\label{eq:y-def-limit}
\end{equation}		
Using the assumptions on $q_n$ we obtain
\begin{align}
  \hat q_n*w_n (x)
  &= \int_\R w_n(t) \hat q_n(x - t) \intd t
    = \int_\R \int_\R \e^{-2\pi i \xi t} \intd \mu_{w_n}(\xi) \, \hat
    q_n(x - t) \intd t
  \\
  &= \int_\R \e^{-2\pi i \xi x}
    \underbrace{
    \int_\R \e^{2\pi i \xi t} \hat q_n(t) \intd t
    }_{\mathclap{=q_n(\xi) = 1 \, \text{on}\,\supp \mu_{w_n}}}
    \intd \mu_{w_n}(\xi)
    = w_n(x)
\end{align}
for all $n\in\N$, so that \eqref{eq:y-def-limit} can be written as
\begin{align}
\lim_{n \to \infty}
	y_n \Bigl(\frac{j}{b_n L_2} \Bigr)
&= \lim_{n \to \infty}
		p_n \Bigl(\frac{j}{b_n L_2} \Bigr) 
		w_n \Bigl(\frac{j}{ b_n L_2} - \tau \Bigr)
		\e^{2 \pi i \frac{j \nu}{b_n L_2}}.
\end{align}
We already showed in a previous argument that
$p_n(\frac{j}{b_n L_2}) = 1$ for $j = -N_2, \dots, N_2$ for all
$n\in\N$.  Then it follows from continuity that
\begin{align}
\lim_{n \to \infty}
	y_n \Bigl(\frac{j}{b_n L_2} \Bigr)
&= \lim_{n \to \infty}
		w  \Bigl(\frac{j}{\lambda_n b_n L_2} - \frac \tau{\lambda_n}  \Bigr)
		\e^{2 \pi i \frac{j \nu}{b_n L_2}}\\
  &= w  \Bigl(\frac{\Tcal j}{L_2} - \tau  \Bigr)
    \e^{2 \pi i \frac{\Tcal j \nu}{L_2}}
    = M_\nu T_\tau w \Bigl(\frac{\Tcal j}{L_2} \Bigr).
    \label{eq:heckel_formula:8}
\end{align}
Combining \eqref{eq:heckel_formula:7} with \eqref{eq:heckel_formula:8} then proves \eqref{eq:heckel_formula:1}.
\end{proof}

\section{Numerical Algorithms}
\label{sec:algs}
\noindent
In this section, we propose to solve problem \eqref{problem_2}, i.e.,
\begin{equation}
\argmin_{\eta \in \C^S, (\tau,\nu) \in X^S} \|G \sum_{s=1}^{S} \eta_s a(\tau_s,\nu_s) - y\|^2 
+ \lambda \|\eta\|_{\ell_1}, \qquad \lambda > 0
\end{equation}
by two kind of algorithms.  We adapt the alternating descent
conditional gradient method from \cite{BSR17} to our setting in
Subsection \ref{alg:adcg}. We will address the theoretical convergence
behaviour in a forthcoming and refer only to the literature here.  For
numerical comparisons, we start with a simple grid refinement
algorithm in the next subsection \ref{sec:alg:grid}.

\subsection{Multi-Level Time-Frequency Refinement Algorithm} 
\label{sec:alg:grid}

Instead of solving the optimization problem over the continuous set 
$X= [-\Tcal/2, \Tcal/2] \times [-\Omega/2, \Omega/2]$, 
we may discretize $X$ on a grid $\mathcal J$ of cardinality $J$. 
For instance we could choose an equidistant grid.
Then we consider the atoms on the grid points $(\tau_j, \nu_j)$, $j \in \mathcal J$.
Setting
$$Z_{\mathcal J} \coloneqq [a(\tau_1,\nu_1), \dots, a(\tau_J, \nu_J)] \in \mathbb C^{L_1  L_2 \times J}$$ 
and 
$\eta \in \mathbb C^J$, 
we reduce 
\eqref{eq:tv-tik}
to the convex minimization problem
\begin{equation} \label{eq:anp_bp}
\min_{\eta \in \C^{J}} \| G Z_{\mathcal J} \eta - y \|^2 + \lambda \, \|\eta\|_1.
\end{equation}
The sparsity of the discrete measure is here promoted by the 1-norm.
In other words, we hope that $\eta$ has only $S \ll J$ entries which
are not near zero.  For one-dimensional problems on the torus, Duval
and Peyr\'e \cite{DP17} showed that the discretized problem
$\Gamma$-converges to the continuous problem in the sense of
Remark~\ref{rem:cont} if the regular grid gets finer and finer under
certain assumptions; so if the grid is fine enough, we should obtain a
sufficient precise solution.  On the contrary, a fine grid blows up
the problem dimension and make its numerically intractable.  Further,
as described in \cite{DDPS20} and references therein for general total
variation minimization problems, the true point masses are usually
approximated by several point masses of the grid in a small
neighbourhood.  These clusters may be detected and replaced by an
averaged point mass.  Further, the minimization problem
\eqref{eq:anp_bp} is a basis pursuit often encountered in compressed
sensing and can be solved using toolboxes like \texttt{CVX} \cite{cvx}
or, approximately, by greedy methods like matching pursuits
\cite{DMA97,Tro04,CW11}.

\begin{algorithm}\caption{Orthogonal matching pursuit}\label{alg:gmp}
  \begin{algorithmic}[1]
    \Require $y$, ${\mathcal J}$.
    \State Set $r := y$,
    $\tau := [\,]$, $\nu := [\,]$, $Z := [\,]$.
    \For{$k = 0, 1, 2, \dots$}
    \State Expand $\tau \in \mathbb R^k$,
    $\nu \in \mathbb R^k$ by
    $$
    (\tau_{k+1}, \nu_{k+1}) := \argmax_{(\tau, \nu) \in {\mathcal J}}
    \tfrac{|\langle r, G a(\tau, \nu)\rangle|}{\|G a
      (\tau, \nu) \|_2}.
    $$
    \State Expand $Z \in \mathbb C^{L_1L_2 \times k}$ by
    $a(\tau_{k+1}, \nu_{k+1})$.
    \State Compute the least-square solution of
    $$\min_{\eta \in
      \mathbb C^{k+1}} \|GZ \eta - y\|_2^2.$$
    \State Set $r := y - GZ \eta$.
    \EndFor
    \State \Return $(\tau, \nu, \eta)$.
  \end{algorithmic}
\end{algorithm}

Instead of choosing a fine grid on the entire domain, we would like to
solve the $\ell^1$ minimization problem \eqref{eq:anp_bp} on a small
set $\mathcal J$ that, in the ideal case, only covers the
neighbourhoods of the unknown true parameters in $X$ to reduce the
numerical effort.  For this purpose, we initially apply the orthogonal
matching pursuit in Algorithm~\ref{alg:gmp} on a fine regular grid
until the residuum $r$ gets small or a certain number of atoms is
determined.  Although the performance of the greedy method strongly
depends on the current instance, the computed atoms are usually
located near the true point masses.  Surrounding the computed atoms
with a fine local grid, we obtain a good starting set ${\mathcal J}_0$
for \eqref{eq:anp_bp}.  Next, we would like to let the local grid
become finer and finer to improve the solution and to let the number
of atoms be nearly the same.  Having an optimal $\eta^*$ of
\eqref{eq:anp_bp} for ${\mathcal J}_r$, we may chose a new finer grid
${\mathcal J}_{r+1}$ around the interesting features by one of the
following refinement strategies:
\begin{enumerate}
\item Determine the dominant atoms corresponding to
  $(\tau_j, \nu_j) \in {\mathcal J}_r$ with $|\eta_j^*| \ge \epsilon$.  Discretize
  the neighbourhood around these atoms by a finer grid.  Chose
  ${\mathcal J}_{r+1}$ as the union of these finer grids.
\item Determine the importance $\gamma_j$ of the atom corresponding to
  $(\tau_j, \nu_j) \in {\mathcal J}_r$ by
  \begin{equation}
    \gamma_j := \sum_{(\tau_k, \nu_k) \in {\mathcal J}_r \cap U_j} |\eta_k^*|,    
  \end{equation}
  where the coefficients of all atoms with parameters in a
  neighbourhood $U_j$ around $(\tau_j, \nu_j)$ are summed up.  For the
  most important neighbourhood $U_j$, compute the barycenter by
  \begin{equation}
    (\tilde\tau_j, \tilde\nu_j) :=
    \sum_{(\tau_k, \nu_k) \in {\mathcal J}_r \cap U_j} \tfrac{|\eta_k^*|}{\gamma_j} \,
    (\tau_k, \nu_k).
  \end{equation}
  Add a finer grid around $(\tilde\tau_j, \tilde\nu_j)$ to ${\mathcal J}_{r+1}$,
  remove the atoms in $U_j$ from ${\mathcal J}_r$, and repeat the procedure as
  long as there are important points with $\gamma_j \ge \epsilon$.
\end{enumerate}
The new local grids should cover a smaller neighbourhood.  For
instance, these grids could again be regular with decreasing step size
according to $r$.  Notice that the numerical effort of the first
refinement strategy is less than for the second one.  On the other
hand, the second strategy can leave the local grids due to the
barycenters.  After determining a final atomic set ${\mathcal J}^*$ containing
the most dominant atoms or barycenters, the corresponding coefficients
can be computed by solving the least square problem
\begin{equation}
  \label{eq:least-square}
  \min_{\eta \in \C^{|{\mathcal J}^*|}} \| G Z_{{\mathcal J}^*} \eta - y \|^2.  
\end{equation}
In summary, we obtain Algorithm~\ref{alg:mp_ml}.

\begin{algorithm}\caption{Multi-level time-frequency
    refinement}\label{alg:mp_ml} 
  \begin{algorithmic}[1]
    \Require $y$. 
    \State Construct an initial grid ${\mathcal J}_0$ using Algorithm~\ref{alg:gmp}.
    \State Compute the minimizer $\eta^*$ of \eqref{eq:anp_bp}.
    \For{$r = 0, 1, 2, \dots$} 
    \State Determine a new atomic set ${\mathcal J}_{r+1}$ using strategy 1 or 2.
    \State Compute the minimizer $\eta^*$ of \eqref{eq:anp_bp}.
    \EndFor
    \State Determine the dominant atoms or centers as in strategy 1 or
    2.
    \State Compute $\eta$ by solving \eqref{eq:least-square}.
    \State \Return $(\tau, \nu, \eta)$.
  \end{algorithmic}
\end{algorithm}

\subsection{Alternating Descent Conditional Gradient Algorithm}
\label{alg:adcg}
Next, we adapt the\linebreak ADCG from \cite{BSR17} to our setting.
This algorithm minimizes over the continuous domain $X$.  The ADCG is
a modification of the conditional gradient method (CGM) -- also known
as the Frank-Wolfe algorithm introduced in \cite{FW56} -- for total
variation regularization.  The original Frank-Wolfe algorithm on
$\mathbb R^d$ solves optimization problems of the form
$\argmin_{x \in \mathcal V} f(x)$, where the feasible set
$\mathcal V \subset \R^d$ is compact and convex and the function $f$
is a differentiable and convex.  Given the $k$th iterate $x_k$ each
iteration consists basically of two steps, namely
\begin{itemize}
\item[i)] minimizing a linearized version of $f$ in $x_k$ over the feasible set
$$v_k = \argmin_{v \in \mathcal V} f(x_k) + \langle \nabla f(x_k), v - x_k\rangle,$$
\item[ii)] updating with 
$$x_{k+1} = x_k +  \gamma (v_k - x_k)$$.
\end{itemize}
In superresolution, the first step always consists in an update of the support of the measure
as it is also done in the first step of our Algorithm \ref{alg:mp_gd}.

Concerning the second step, all important convergence guarantees of the algorithms are still valid, if
we replace $x_{k+1}$ in the second step by any feasible $\tilde x_{k+1}$ that fulfills
$f(\tilde x_{k+1}) \le f(x_{k+1})$. 
This flexibility has led to several successful variations of the classical Frank-Wolfe algorithm. 
ADCG related algorithms which differ in the second step are for example 
the algorithm in \cite{BP13} 
and the so-called sliding Frank-Wolfe in \cite{DDPS20}. 
While the first one uses soft shrinkage to update the amplitudes and a discrete gradient flow over the locations,
the second one uses a non-convex solver to jointly minimize over the amplitudes and positions
with a suitable starting values for the  amplitudes.
 
Adapting the ADCG to our setting results in Algorithm~\ref{alg:mp_gd}, whose
details are discussed in the following.  For convergence results we refer to \cite{BSR17}.
\begin{algorithm}\caption{ADCG}\label{alg:mp_gd} 
  \begin{algorithmic}[1]
    \Require $y$. 
    \State Set $r := y$, $\tau := [\,]$, $\nu := [\,]$.
    \For{$k = 0,1,2, \dots$}
    \State Expand $\tau \in \R^{J_k}$, $\nu \in \R^{J_k}$ by
    $$
    (\tau_{{J_k}+1}, \nu_{{J_k}+1}) := \argmax_{(\tau, \nu) \in \Ucal}
    |\langle r, G a(\tau, \nu)\rangle|.
    $$
    \For{$\ell = 0,1,2,\dots$}
    \State Compute a minimizer
    $$
    \eta := \argmin_{\eta \in \mathbb C^{J_k+1}} \| GZ(\tau, \nu) \eta -
    y \|_2^2 + \lambda \| \eta \|_1.
    $$
    \State Compute a minimizer
    $$
    (\tau, \nu) := \argmin_{(\tau,\nu) \in \Ucal^{J_k+1}}
    \|GZ(\tau,\nu) \eta - y \|_2^2
    $$
    \EndFor
    \State Remove point masses with zero coefficients.
    \State Set $r := y - G Z(\tau,\nu) \eta$.
    \EndFor
    \State \Return $(\eta, \tau, \nu)$.
  \end{algorithmic}
\end{algorithm}
The expansion step of the ADCG algorithm is very
similar to the greedy matching pursuit in Algorithm~\ref{alg:gmp}
without normalization of the atoms.  To find a solution 
\begin{equation}
  (\tau_{{J_k}+1}, \nu_{{J_k}+1}) := \argmax_{(\tau, \nu) \in X}
  |\langle r, G a(\tau, \nu)\rangle|,
\end{equation}
the objective can first be evaluated on a fine regular grid of
$X$.  The obtained $(\tau_{{J_k}+1}, \nu_{{J_k}+1})$ may then be
improved using a gradient descent method.  In our numerical
simulations, we however notice that this improvement step has no
crucial impact on the recovered measure for our problem and can be
skipped.

The second step consists in the update of the parameters by
\begin{equation}
  (\eta, \tau, \nu) := \argmin_{\eta \in \mathbb C^{J_k+1},
    (\tau,\nu) \in X^{J_k+1}} \| GZ(\tau, \nu) \eta -
  y \|^2 + \lambda \| \eta \|_1
\end{equation}
with 
$Z (\tau, \nu) \coloneqq [a(\tau_1,\nu_1),\dots, a(\tau_S, \nu_S)]$.  
In
difference to the general algorithm in \cite{DDPS20}, the coefficient
of the point masses $\eta$ are complex numbers such that the above
update consists in the minimization of a non-smooth objective.
Therefore, we use the alternating minimization proposed in
\cite{BSR17}, which splits up the minimization into the basis pursuit
or LASSO problem
\begin{equation}  
  \eta := \argmin_{\eta \in \mathbb C^{J_k+1}} \| GZ(\tau, \nu) \eta -
  y \|^2 + \lambda \| \eta \|_1.
\end{equation}
and the smooth minimization problem
\begin{equation}
  (\tau, \nu) := \argmin_{(\tau,\nu) \in X^{J_k+1}}
  \underbrace{\|GZ(\tau,\nu) \eta - y \|_2^2}_{=:F(\tau,\nu)}.    
\end{equation}

The $\ell^1$ regularized problem can be solved as discussed above and the second
one by a gradient descent or quasi Newton method like BFGS.  A short
computation shows that the gradients of the objective $F$ are just
given by
\begin{align}
    \grad_\tau F(\eta,\tau,\nu)
  &=
    2 \real \bigl\{ \left(G Z^\tau(\tau,\nu) \diag(\eta) \right)^*
    (G Z(\tau,\nu) \eta - y) \bigr\},
  \\
  \grad_\nu F(\eta,\tau,\nu)
  &=
    2 \real \bigl\{ \left(G Z^\nu(\tau,\nu) \diag(\eta) \right)^*
    (G Z(\tau,\nu) \eta - y) \bigr\},
\end{align}
where $\cdot^*$ denotes the conjugation and transposition of a matrix.
The partial derivatives of the atoms $a(\tau_j,\nu_j)$ with respect to
$\tau_j$ and $\nu_j$ are collected in the matrices
\begin{align}
  Z^\tau(\tau,\nu)
  &:= [\tfrac{\mathrm{d}}{\mathrm{d} \tau} a(\tau_1, \nu_1),
    \dots,
    \tfrac{\mathrm{d}}{\mathrm{d} \tau} a(\tau_s, \nu_s)],
  \\
  Z^\nu(\tau,\nu)
  &:= [\tfrac{\mathrm{d}}{\mathrm{d} \nu} a(\tau_1, \nu_1),
    \dots,
    \tfrac{\mathrm{d}}{\mathrm{d} \nu} a(\tau_s, \nu_s)]
\end{align}
with
\begin{align}
  [\tfrac{\mathrm{d}}{\mathrm{d} \tau} a(\tau, \nu)]_{u,v}
  &= - \tfrac{\Omega}{L_1^2 L_2} \, D'_{N_1} \Bigl(
  \frac{u-\Omega\tau}{L_1} \Bigr) \, D_{N_2} \Bigl( \frac{v - \Tcal
    \nu}{L_2} \Bigr),
  \\
  [\tfrac{\mathrm{d}}{\mathrm{d} \nu} a(\tau, \nu)]_{u,v}
  &= - \tfrac{\Tcal}{L_1 L_2^2} \, D_{N_1} \Bigl(
  \frac{u-\Omega\tau}{L_1} \Bigr) \, D'_{N_2} \Bigl( \frac{v - \Tcal
    \nu}{L_2} \Bigr).
\end{align}
The derivative of the $N$-th  Dirichlet kernel $D_N$ is given by
\begin{equation}
  D'_N(x) = - 4 \pi \sum_{k=1}^N k \sin (2 \pi k x)
  = \biggl( \frac{\sin((2N+1) \pi x)}{\sin(\pi x)} \biggr)'.  
\end{equation}
Finally, we like to mention that the numerical effort of ADCG algorithm
is much higher compared with the multi-level refinement in
Algorithm~\ref{alg:mp_ml} since several optimization problems have to
be solved for each added point mass.

\section{Numerical Results}
\label{sec:numerics} 
\noindent
In the following experiments, we compare the orthogonal matching
pursuit, the multi-level time-frequency refinement, and the ADCG.  
First, we consider the performance for a
specific synthetic instance.  Then we study the general
performance with respect to the noise level and how many measurements
are needed to estimate the unknown channel.  Finally, the influence of
the identifier model is discussed.

{\bfseries Channel estimation from synthetic measurements.}
For this experiment, we assume that the unknown channel or operator
$H$ in \eqref{operator} has exactly $s=10$ features and that this
number is known in advance.  The shifts and modulations
$(\tau_j, \nu_j)$ are independently generated with respect to the
uniform distribution on
$[-\Tcal / 2, \Tcal / 2] \times [-\Omega/2, \Omega/2] = [-1.5, 1.5]
\times [15.5, 15.5]$.  The coefficients $\eta_j$ are independently and
uniformly drawn from the complex unit circle.  The employed identifier
$w$ is a trigonometric polynomial of degree $N_1 = 50$, i.e.\
$L_1 = 101$, whose coefficients are independently drawn from the
complex unit circle too.  The true samples
$y_j = H w (\tfrac{\Tcal j}{L_2})$ with $j = -N_2, \dots, N_2$ and
$L_2 = 101$ are corrupted by additive complex Gaussian noise such that
$|| y - y^\delta || / || y || = 0.1$, which corresponds to
$-10~\mathrm{db}$\footnote{The unit decibel henceforth refers to the
  scale $10 \log_{10}(\| \cdot - p\| / \| p \|)$ for a reference point
  $p$ -- usually the true measurements or operator.  Depending on the
  context, the norm refers to the Euclidean or operator norm.}  white
noise -- the noisy data are again denoted by $y^\delta$.

To recover the unknown channel parameters, we apply the orthogonal
matching pursuit (Algorithm~\ref{alg:gmp}) with the regular grid $S$
of $[-\Tcal/2, \Tcal/2] \times [-\Omega/2, \Omega/2]$ consisting of
1\,024 points in each direction.  The same grid is used to compute the
location of the new point masses in the ADCG (Algorithm~\ref{alg:mp_gd}).  
Both methods are stopped after computing
exactly 10 features.  The multi-level refinement in
Algorithm~\ref{alg:mp_ml} is initialized by applying the orthogonal
matching pursuit to a coarser grid with 256 points in each direction.
The local $5\times 5$ grids are then refined 15 times by reducing the
stepsize by a factor of $0.75$.  
We always use the second refinement strategy. 
The multi-level refinement and the
ADCG are applied to the Tikhonov regularization
\eqref{eq:tv-tik} with $\lambda = 500$.  The recovered shifts and
modulations of all three methods are shown in
Figure~\ref{fig:exact-atom-num}.  The true parameters are denoted with
an additional $\dagger$.  The absolute errors of the estimation are
recorded in Table~\ref{tab:exact-atom-num}, where the experiment has
been repeated 50 times and the errors are averaged.  For this instance, all
three methods yield comparable results, where the shifts $\tau_j$ and
modulations $\nu_j$ are quite accurate.
The multi-level refinement and
the ADCG method achieve slightly higher accuracies
than the orthogonal matching pursuit, but, on the downside, the
ADCG method is much more time-consuming than the others.
Considering the
noise level, the results are nevertheless satisfying and show that in
particular the shifts and modulations are recoverable from highly
noisy measurements.

\begin{figure}
  \centering
  \begin{subfigure}[c]{0.3\linewidth}
    \includegraphics[width=\linewidth]{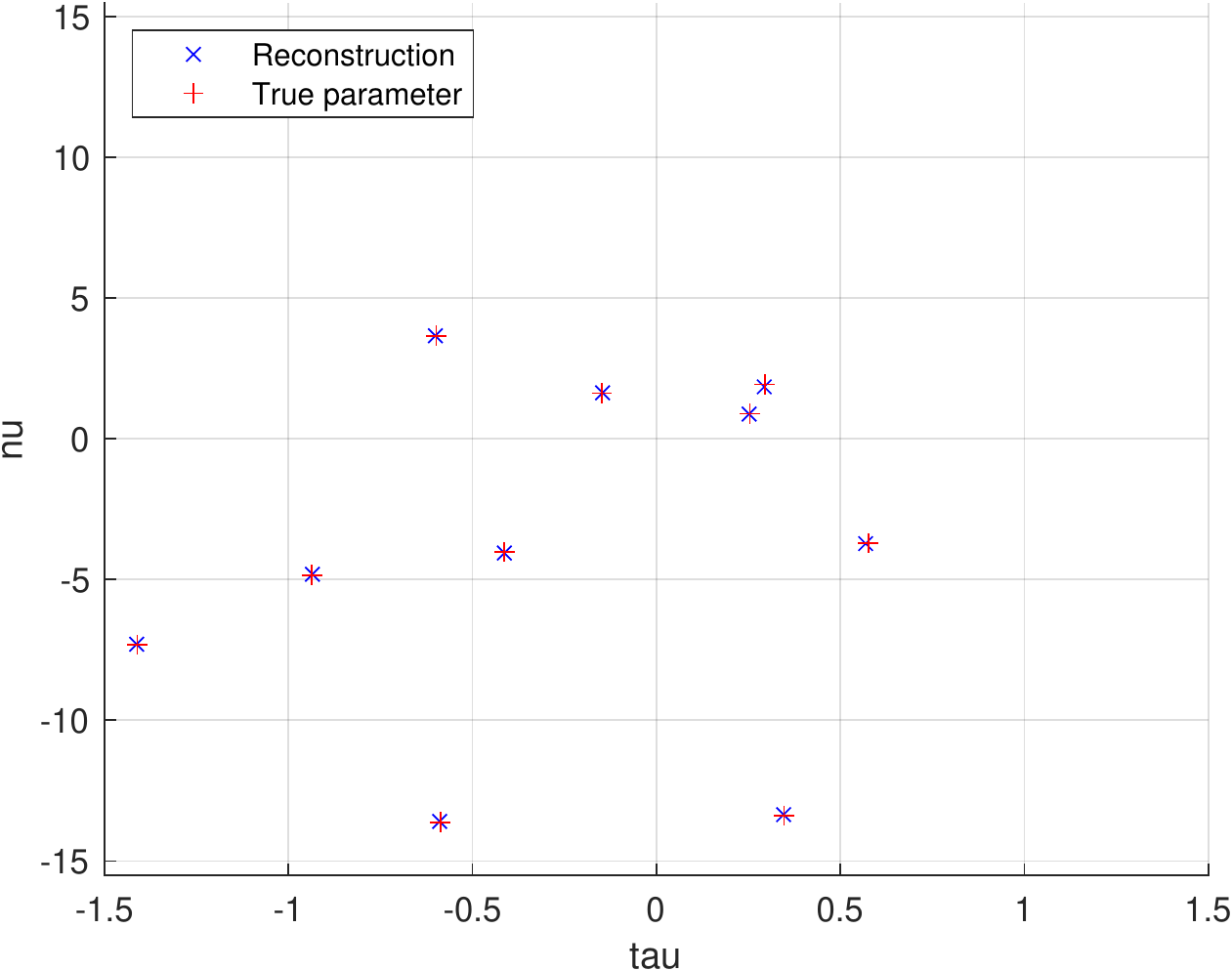}
    \subcaption{Orthogonal matching pursuit (Algorithm~\ref{alg:gmp}).}
  \end{subfigure}
  \quad
  \begin{subfigure}[c]{0.3\linewidth}
    \includegraphics[width=\linewidth]{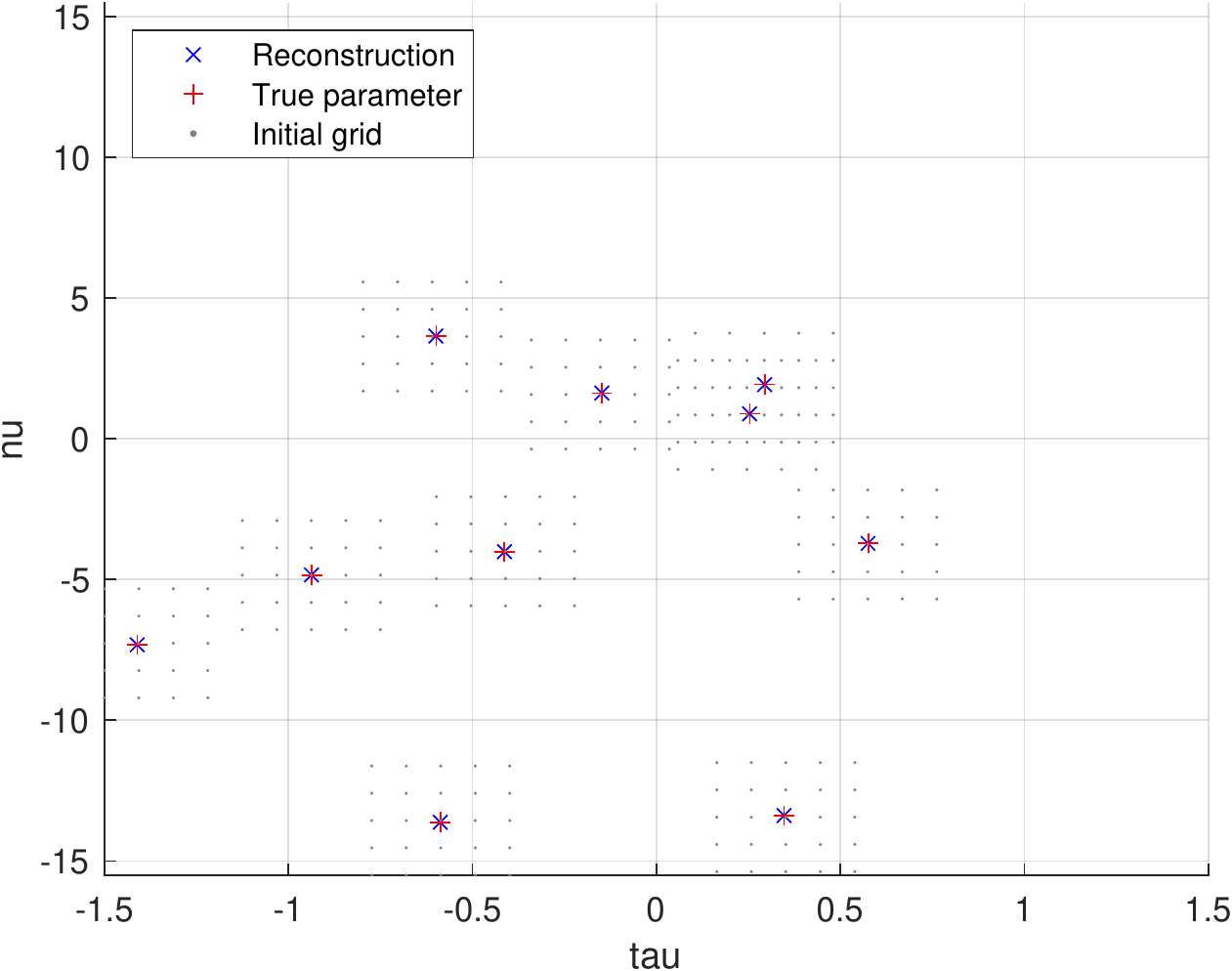}
    \subcaption{Multi-level grid refinement (Algorithm~\ref{alg:mp_ml}).}
  \end{subfigure}
  \quad
  \begin{subfigure}[c]{0.3\linewidth}
    \includegraphics[width=\linewidth]{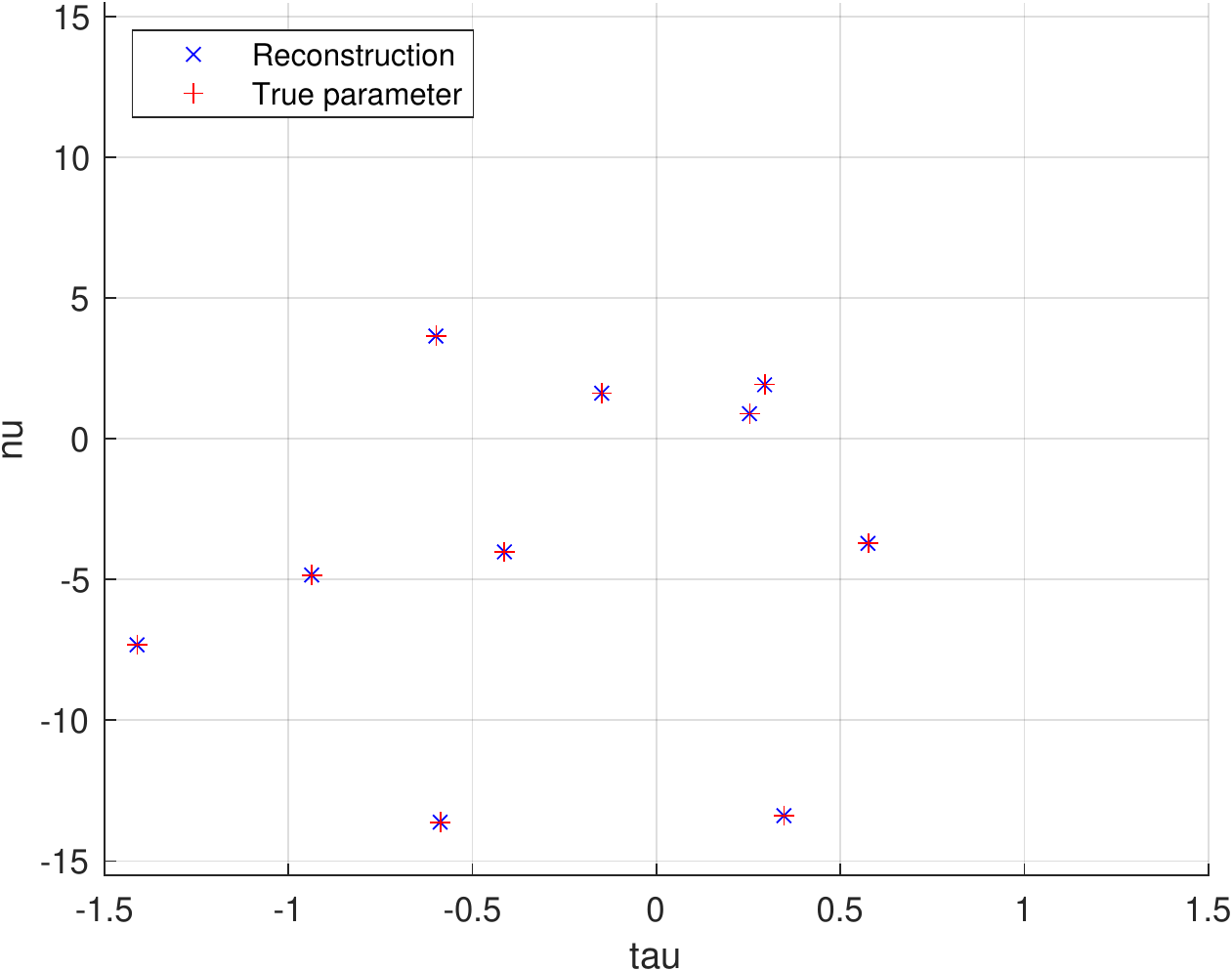}
    \subcaption{ADCG (Algorithm~\ref{alg:mp_gd}).\\ {\ }}
  \end{subfigure}
  \caption{Estimated shifts and modulations of a channel with $s=10$
    features, where $s$ is exactly known in advance.  The degree of
    the identifier and the number of samples are $L_1 = L_2 = 101$.
    The additive Gaussian noise corresponds to $-10~\mathrm{db}$.}
  \label{fig:exact-atom-num}
\end{figure}

\begin{table}
  \centering
  \begin{tabular}[c]{|l|rrr|}
    \hline
    Absolute error
    & Algorithm~\ref{alg:gmp}
    & Algorithm~\ref{alg:mp_ml}
    & Algorithm~\ref{alg:mp_gd}
    \\
    \hline
    $\max |\tau_j^\dagger - \tau_j |$
    & $4.778 \cdot 10^{-2}$
    & $2.809 \cdot 10^{-2}$
    & $3.336 \cdot 10^{-2}$
    \\
    $\max |\nu_j^\dagger - \nu_j|$
    & $1.995 \cdot 10^{-1}$
    & $9.476 \cdot 10^{-2}$
    & $1.331 \cdot 10^{-1}$
    \\
    $\max |\eta_j^\dagger - \eta_j|$
    & $2.322 \cdot 10^{-1}$
    & $1.491 \cdot 10^{-1}$
    & $1.508 \cdot 10^{-1}$
    \\
    \hline
    Mean run time
    & 194 seconds
    & 45 seconds
    & 404 seconds
    \\
    \hline
  \end{tabular}
  \caption{Mean absolute reconstruction errors over 50 experiments for
    channels with $s=10$ features, where $s$ is exactly known in
    advance.  The degree of the identifier and the number of samples
    are $L_1 = L_2 = 101$.  The additive Gaussian noise corresponds to
    $-10~\mathrm{db}$.}
  \label{tab:exact-atom-num}
\end{table}

{\bfseries Influence of Noise.}
Next, we study the influence of the noise on the recovery quality of
the algorithms in more details.  Therefore, the unknown channel is
again randomly generated with respect to 10 coefficients on the
complex unit circle.  In contrast to the first numerical example, the
algorithms are henceforth stopped if the residuum becomes small or if
the objective stagnates; in other words, the algorithms have no
knowledge of the true sparsity $S$.  The degree of the random identifier
with unimodular coefficients and the number samples is
$L_1 = L_2 = 101$ once more.  The remaining parameters are $T = 1$ and
$\Omega = 101$.  The parameter $\lambda$ is chosen with respect to the
noise level and goes to zero for vanishing noise.  Differently from
the experiment before, we want to measure how well the estimated
channel approximates the true one.  Since we are only interested on
the behavior of the true channel on the sampled interval
$[-\Tcal/2, \Tcal/2]$, we interpret the restriction of $H$ as an
operator from the space of $L_1/\Omega$-periodic trigonometric
polynomials
$\mathcal P_{N_1} \subset L^2([- L_1/2\Omega, L_1/2\Omega))$ of degree
$N_1$ at the most to the square-integrable functions
$L^2([-\Tcal/2, \Tcal/2])$, i.e.
\begin{equation}
  H : \mathcal P_{N_1}
  \to L^2([-\tfrac{\Tcal}2, \tfrac{\Tcal}2]).
\end{equation}
The difference between the true operator $H^\dagger$ and the estimated
operator $H$ is henceforth measured by the operator norm
\begin{equation}
  \|H^\dagger - H\| :=
  \sup_{w \in \mathcal P_{N_1}\setminus\{0\}}
  \frac{\|H^\dagger w - H w\|_{L^2([-\Tcal/2,
      \Tcal/2])}}{\|w\|_{L^2([- L_1/2\Omega,
      L_1/2\Omega))}} .
\end{equation}
Due to Parseval's identity, the considered subspace is isometrically
isomorph to the coefficient space $\C^{L_1}$.  After discretizing
$[-\Tcal/2, \Tcal/2]$ and employing the midpoint rule, the
operator norm may be computed numerically using the singular value
decomposition.

The mean performance of the discussed algorithms is shown in
Figure~\ref{fig:comp-alg}, where for every noise level the experiment
has been repeated 50 times.  During the multi-level refinement, the
step size of the local grids is decreased 25 times by a factor of 2/3.
For the ADCG method the $\ell^1$ and least-square
minimization is alternated 25 times. The observation of the first
experiment for $-10\,\mathrm{dB}$ noise carry over.  Notice that
already small parameter errors lead to large relative errors in the
operator norm.  The reconstruction error for the multi-level method
and the ADCG method corresponds nearly one-to-one to
the noise level of the measurements.  The reconstruction by the
orthogonal matching pursuit does not improve if the noise is
decreasing. Although the orthogonal matching pursuit yields sufficient
results as starting point for the refinement method, the problem
cannot be solved sufficiently accurate by applying only this greedy
method.

\begin{figure}
  \centering
  \includegraphics[width=0.5\linewidth]{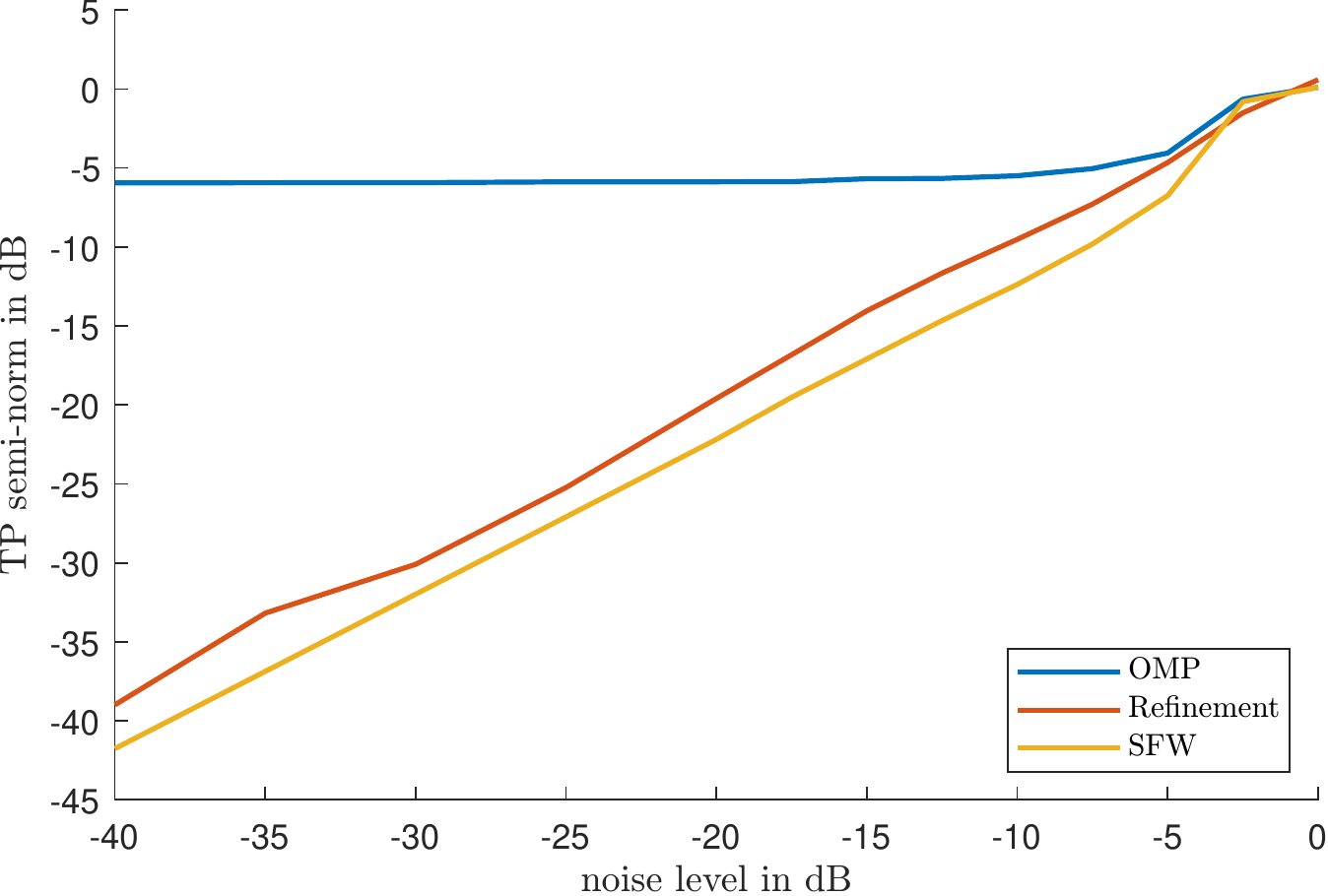}
  \caption{The recovery error of the orthogonal matching pursuit
    (Algorithm~\ref{alg:gmp}), the refinement strategy
    (Algorithm~\ref{alg:mp_ml}), and the Frank--Wolfe method
    (Algorithm~\ref{alg:mp_gd}) over varying levels of complex
    Gaussian noise to recover a channel with 10 features from 101
    samples.  The regularization parameter $\lambda$ has been chosen
    with respect to the current noise level.}
  \label{fig:comp-alg}
\end{figure}

\begin{figure}
  \centering
  \begin{subfigure}[c]{0.4\linewidth}
    \includegraphics[width=\linewidth]{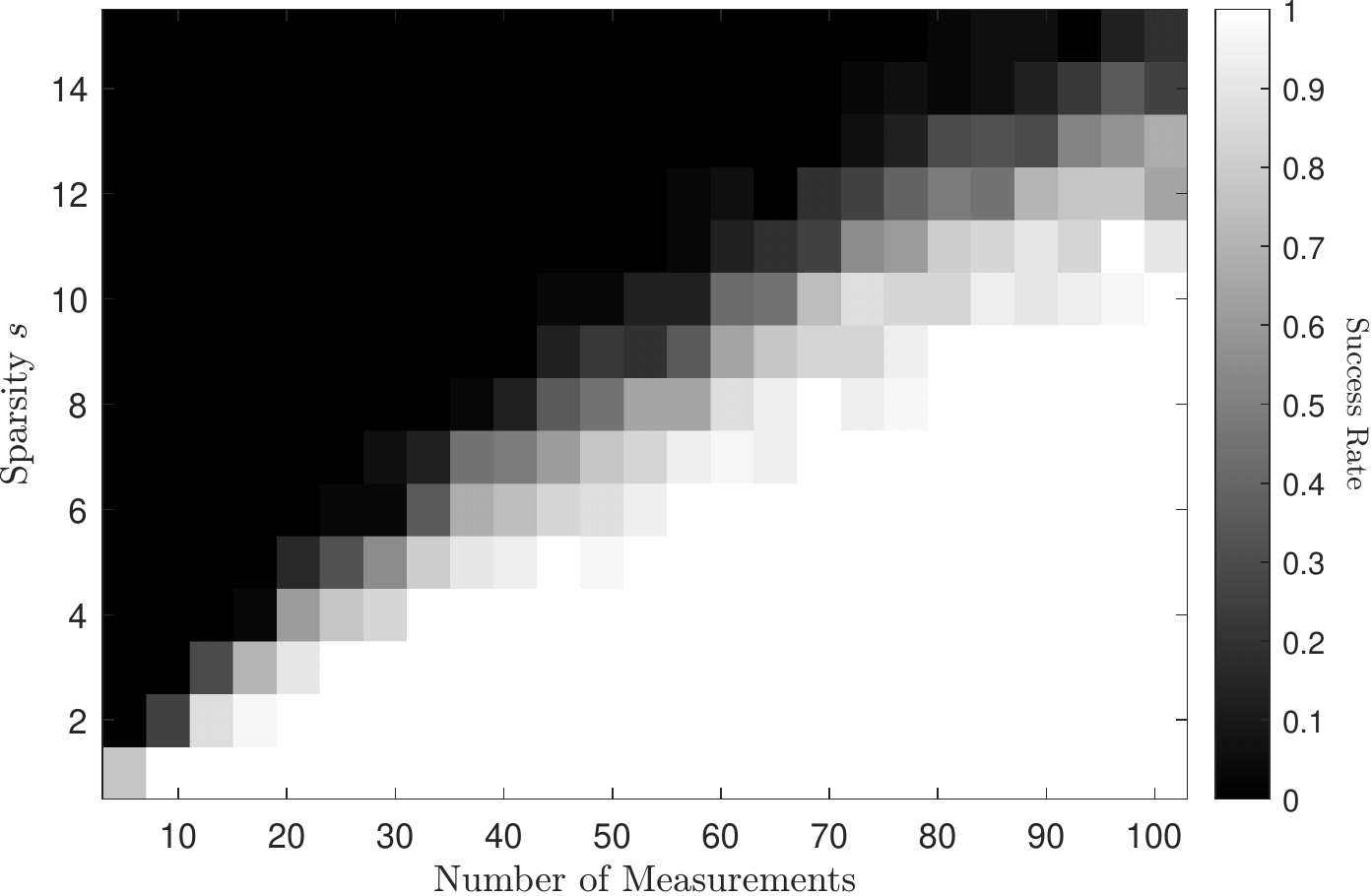}
    \subcaption{Empirical success rate.}
  \end{subfigure}
  \qquad
  \begin{subfigure}[c]{0.4\linewidth}
    \includegraphics[width=\linewidth]{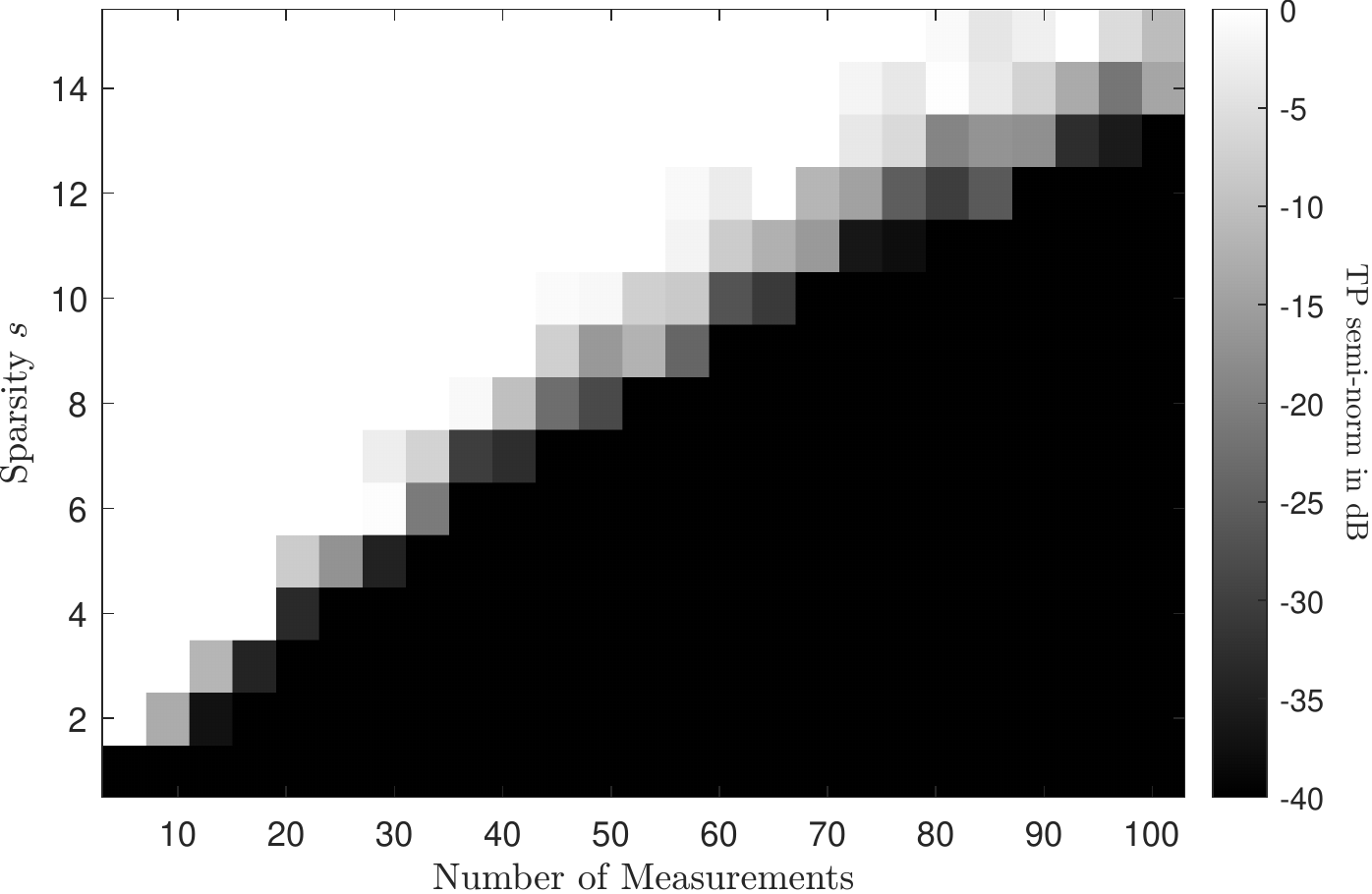}
    \subcaption{Mean error in operator norm.}
  \end{subfigure}
  \caption{Empirical probability that the operator norm satisfies
    $||H^\dagger - H|| / ||H^\dagger|| \le -40 \, \mathrm{dB}$
    depending on the number of measurements and features.  The
    coefficients of the unknown operators have been chosen unimodular.}
  \label{fig:phase-trans}
\end{figure}

{\bfseries Number of required measurements}
During our numerical experiments, we have noticed that around 10 times
more samples than unknown features are required to estimate the
parameters of the channel sufficiently well.  In the following, we
explore the question how many measurements are needed in more details.
For this, we consider the solution of Algorithm~\ref{alg:mp_gd} for
different numbers of features and numbers of measurements.  The
remaining parameters of the setting are $\Omega = L_1 = L_2 = 101$ and
$\Tcal = 1$.  The coefficient of the unknown channel are unimodular,
and the measurements are exact.  We declare a reconstruction as
success if the relative error $||H^\dagger - H||/||H^\dagger$ is less than
$-40 \, \mathrm{dB}$, and repeat the experiment 50 times for each data
point.  The success rate and the mean relative error in the operator
norm are shown in Figure~\ref{fig:phase-trans} and sustain our
observation.

This experiment is the numerical analogon to the theoretical recovery
guarantee in \cite[Thm~1]{HMS14}, where the unknown parameters
$(\eta_s, \tau_s, \nu_s)$ of \eqref{eq:derive_heckels_formula:2} in
Remark~\ref{rem:heckel} are determined by solving an atomic norm
problem.  More precisely, the minimizer of the atomic norm problem
yields the wanted parameters with high probability under certain
assumptions.  For the theoretical statement, at least $L \ge 1024$
measurements are required.  Considering the phase transition in
Figure~\ref{fig:phase-trans}, we see that, from a numerical point of
view, much less measurements are needed to recover the unknown
channel.  In particular for higher sparsity levels, the transition
between failure and success becomes non-linear, which corresponds to
the theoretical results.

{\bfseries Influence of the minimal separation.}
Continuing the discussion of the theoretical guarantees, we recall
that one of the crucial assumptions is a lower bound for the minimal
separation
\begin{equation}
  \min \bigl\{ \tfrac{|\tau_j - \tau_k|}{\Tcal},
  \tfrac{|\nu_j - \nu_k|}{\Omega} \bigr\}.
\end{equation}
If the distance between two or more features in the parameter space
become to close, they cannot be resolved numerically and are often
combined into one feature.  This well-known effect may heavily lower
the quality of the reconstruction and also occur in our setting.  To
study this behaviour numerically, we again consider random channels
with 10 unimodular features for $L_1 = L_2 = 101$, $\Tcal = 1$,
$\Omega = 101$.  The shifts and modulations are generated such that
the parameter set exactly possesses a certain minimal separation.  The
results with respect to the operator norm on $\mathcal P_{N_1}$ are
shown in Figure~\ref{fig:min-sep}, where the experiments have been
repeated 50 times without noise.  If the separation falls below
$0.01$, then the error increases rapidly.  Note that this transition
point depends on the problem dimension $L_1$, $L_2$ and on the number
of unknown features.

\begin{figure}
  \centering
  \includegraphics[width=0.5\linewidth]{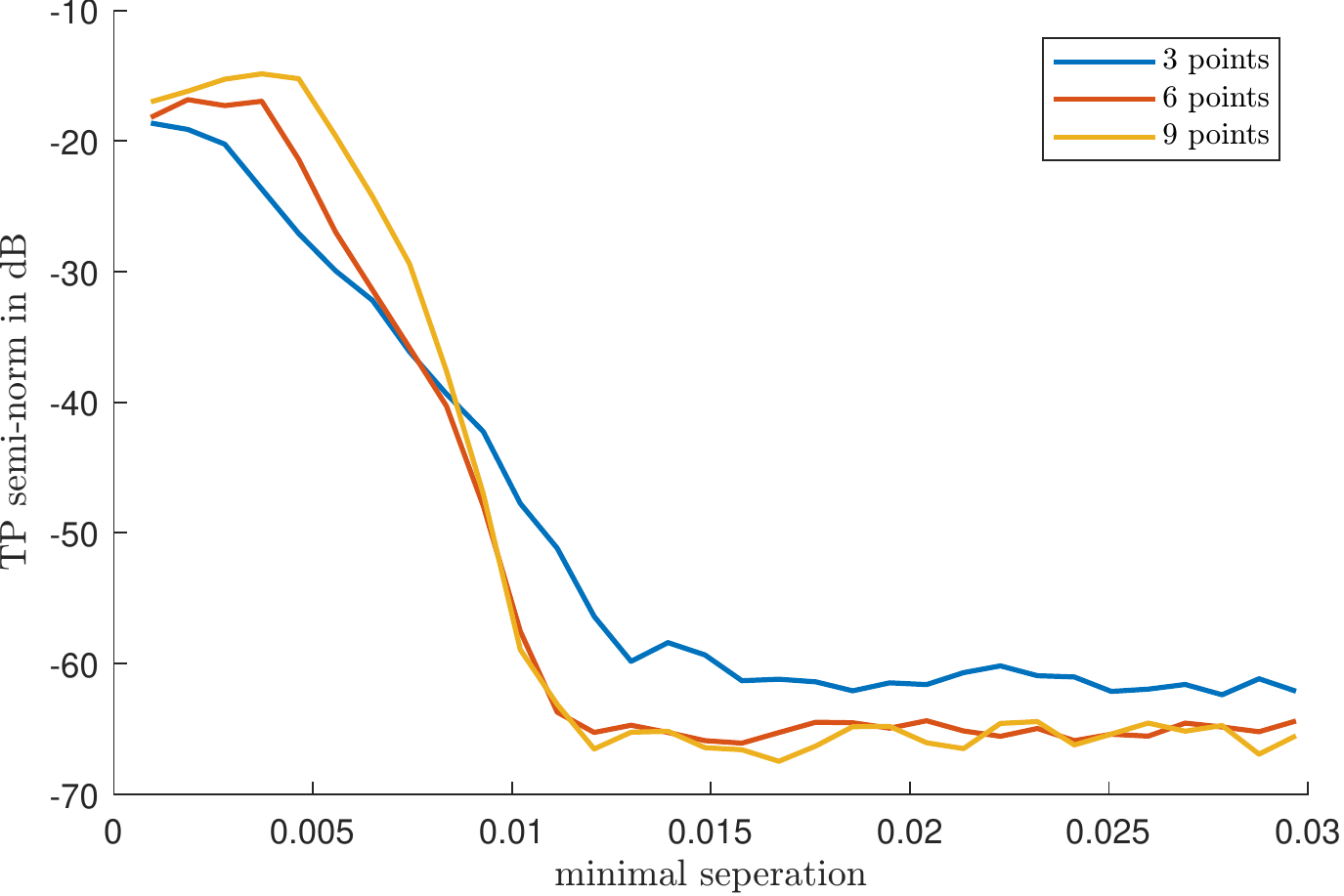}
  \caption{The recovery error for unknown channels with certain
    relative minimal separation between the 10 features.  The employed
    101 measurements have been free of noise.}
  \label{fig:min-sep}
\end{figure}

{\bfseries Importance of the identifier model.}
Finally, we study how the chosen identifier model affects the recovery
quality.  During the entire paper, we used trigonometric polynomials
as identifier $w$ for the unknown channel.  On the basis of $w$, the
given samples $Hw(\Tcal d/L_2)$ are related to the unknown parameters
by Theorem~\ref{thm:heckel_formula}, which are then determined by
solving the Tikhonov functional \eqref{eq:tv-tik} with respect to the
total variation norm for measures.  In \cite{HMS14}, for the special
case $L := L_1 = L_2$, $N := N_1 = N_2$, and odd $L = \Tcal \Omega$,
Heckel, Morgenshtern, and Soltanolkotabi have suggested to solve an
atomic norm problem based on a model approximation where the
identifier is chosen as sum of shifted sinc functions
\begin{equation}
  \label{eq:sinc-sum}
  w(x) =
  \sum_{k = -L -N}^{L + N} c_k \sinc(x \Omega - k).
\end{equation}
The real coefficients are chosen partially periodic as
$c_k = c_{k+L} = c_{k-L}$ for $k = -N, \dots, N$.  We denote the
$L$-dimensional span of the sinc functions \eqref{eq:sinc-sum} by
$\mathcal S_L$.  The given samples are then only approximated by
\eqref{eq:heckel_formula:1} in Theorem~\ref{thm:heckel_formula}.

\begin{figure}
  \centering
  \begin{subfigure}[c]{0.4\linewidth}
  \includegraphics[width=\linewidth]{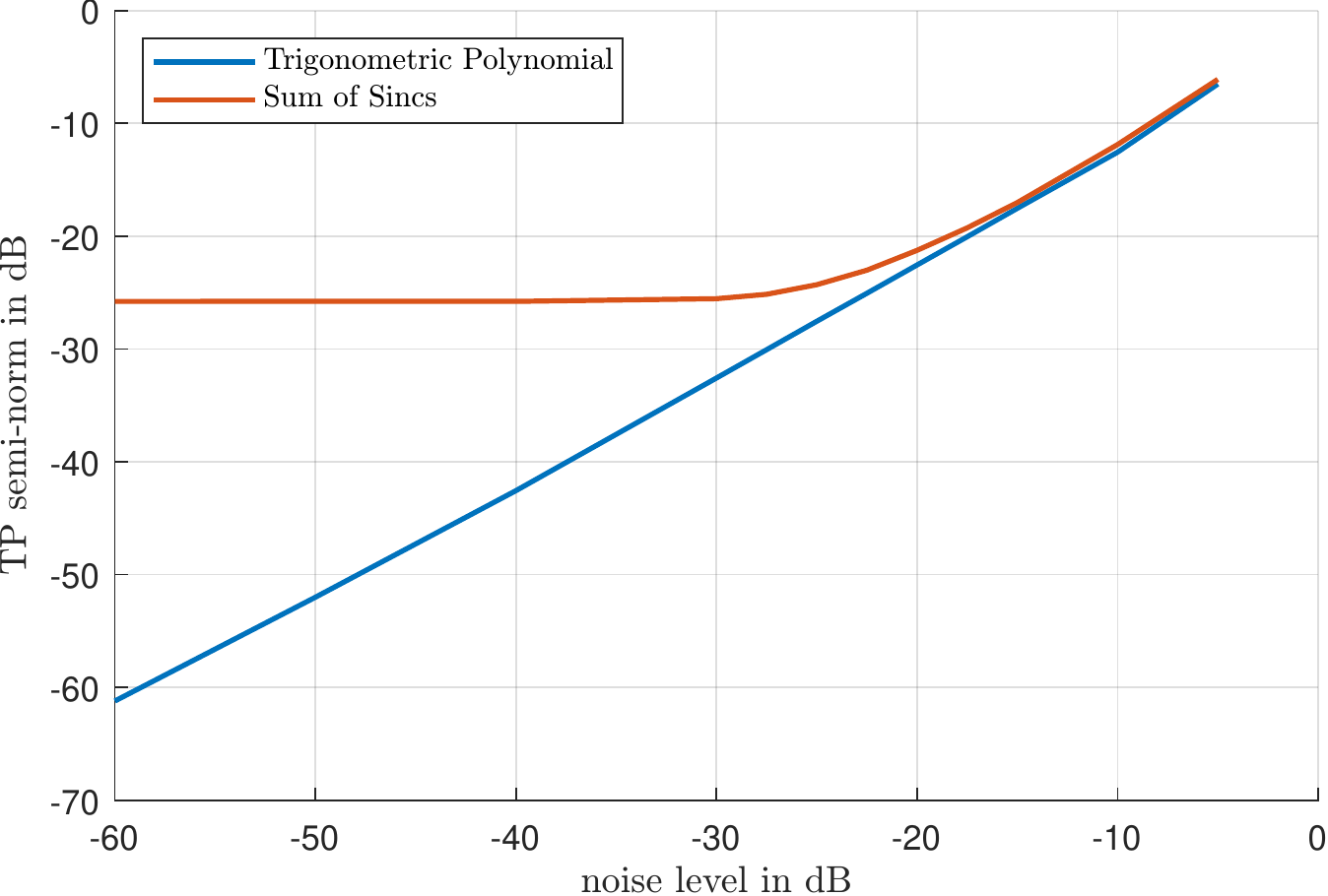}
    \subcaption{Operator norm on $\mathcal P_{N}$.}
  \end{subfigure}
  \qquad
  \begin{subfigure}[c]{0.4\linewidth}
  \includegraphics[width=\linewidth]{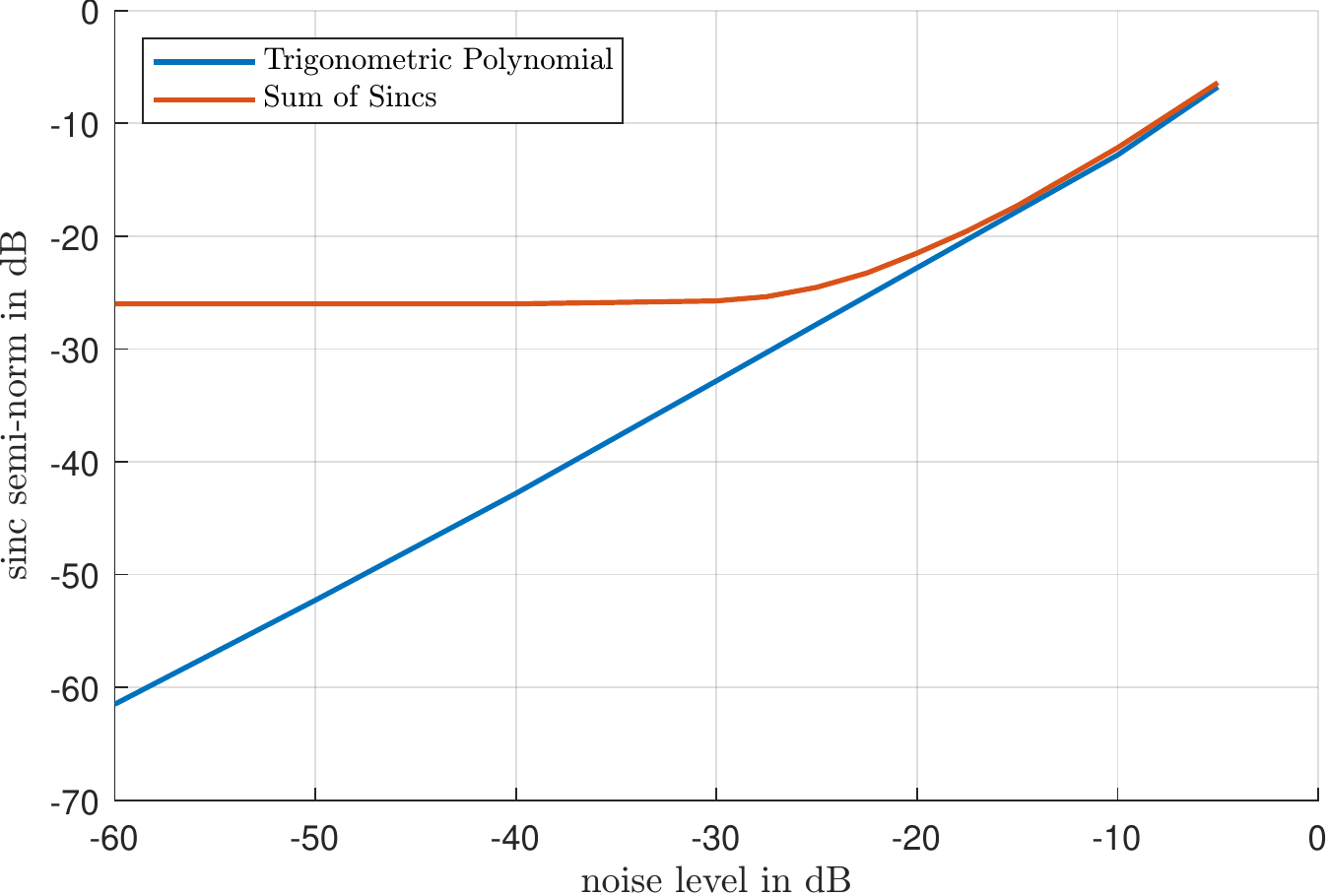}
    \subcaption{Operator norm on $\mathcal S_L$.}
  \end{subfigure}
  \caption{Influence of the identifier model on the reconstruction
    error depending on additive Gaussian noise.  For each noise level,
    50 channels with 10 features have been recovered from 101
    samples. The regularization parameter has been chosen proportional
    to the noise level.}
  \label{fig:model-miss}
\end{figure}

The replacement of the trigonometric polynomial by a sum of sinc
functions leads to a model error.  Considering a channel with 10
features and 101 samples as before, and studying the recovery error of
Algorithm~\ref{alg:mp_gd} measured in the operator norm, we see that
the model mismatch corresponds to a noise level of around
$-25~\mathrm{db}$.  Notice that the comparison with respect to
trigonometric polynomials is somehow subjective.  For this reason, we
also compute the relative reconstruction error based on the subspace
of sinc functions \eqref{eq:sinc-sum}.  Numerically, the difference
between both error terms is negligible.  The clearly visible
approximation for sinc functions does not occur for trigonometric
identifiers.

\vspace{13pt}
\centerline{ACKNOWLEDGEMENT}
\vspace{13pt}

\noindent 
We thank Götz Pfander and Dae Gwan Lee for inspiring discussions.
This work was supported by Deutsche Forschungsgemeinschaft (DFG) grant
JU 2795/3.

\bibliographystyle{abbrv}
\bibliography{references}

\appendix

\section{Proof of Theorem \ref{lem:bandlimit_periodic_stieltjes}} 
\label{app:identifier}
\noindent
To prove Theorem \ref{lem:bandlimit_periodic_stieltjes} we need the
following auxiliary lemmata.  We start with Poisson's summation
formula for bandlimited functions. Since we have not found it directly
in the literature we give the proof for convenience.

\begin{lemma}[Poisson Summation Formula for Bandlimited
  $L^1$-Functions]\label{thm:poisson_bandlimited}
Let $f \in L^1(\R) \cap C_0(\mathbb R)$ be bandlimited.
Then, for $\alpha > 0$, the $\alpha$-periodic function $F_\alpha$ given by
\begin{equation}
F_\alpha(x) := \sum_{k \in \Z} f(x - \alpha k), \qquad x \in \R,
\end{equation}
converges absolutely for all $x \in \R$, and we have
\begin{equation}
\label{eq:poisson_bandlimited:1}
F_\alpha(x) = \frac{1}{\alpha} \sum_{k \in \Z} \hat f \left(\frac{k}{\alpha}\right) \e^{2 \pi i \frac{k x}{\alpha}}.
\end{equation}
In particular, $F_\alpha$ is a trigonometric polynomial for all $\alpha > 0$.
\end{lemma}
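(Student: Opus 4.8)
The plan is to recognize $F_\alpha$ as a continuous $\alpha$-periodic function, to compute its Fourier coefficients by the classical periodization trick, and to observe that the band-limitation forces all but finitely many of them to vanish.

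First I would check that $f$ lies in a Bernstein space so that Theorem~\ref{thm:nikolskiis_inequality} is applicable. Since $f\in L^1(\R)$ we have $\hat f\in C_0(\R)$, and by assumption $\hat f$ has compact support, say $\supp\hat f\subseteq[-\sigma,\sigma]$; hence $\hat f\in L^1(\R)$ and the Fourier inversion formula gives $f(x)=\int_{-\sigma}^{\sigma}\hat f(\xi)\e^{2\pi i x\xi}\intd\xi$, which extends to an entire function with $|f(z)|\le\|\hat f\|_1\e^{2\pi\sigma|z|}$, i.e.\ of exponential type $2\pi\sigma$. Together with $f\in L^1(\R)$ this means $f\in B^1_{2\pi\sigma}$, and Nikol'skii's inequality with $p=1$ and $a=\alpha$ yields
\[
  \sup_{x\in\R}\Bigl\{\alpha\sum_{k\in\Z}|f(x-\alpha k)|\Bigr\}
  \le(1+2\pi\alpha\sigma)\,\|f\|_1<\infty .
\]
Thus the defining series converges absolutely, and in fact uniformly in $x$, so $F_\alpha$ is a well-defined bounded continuous function which is $\alpha$-periodic by construction.

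Next I would compute the Fourier coefficients $c_m:=\frac1\alpha\int_0^\alpha F_\alpha(x)\e^{-2\pi i mx/\alpha}\intd x$ for $m\in\Z$. The uniform bound on $\sum_k|f(\cdot-\alpha k)|$ over the bounded interval $[0,\alpha]$ permits interchanging summation and integration; after the substitution $x\mapsto x+\alpha k$ (which leaves $\e^{-2\pi i mx/\alpha}$ unchanged because $mk\in\Z$) the translated intervals $[-\alpha k,\alpha(1-k))$, $k\in\Z$, tile $\R$, so
\[
  c_m=\frac1\alpha\sum_{k\in\Z}\int_0^\alpha f(x-\alpha k)\e^{-2\pi i mx/\alpha}\intd x
     =\frac1\alpha\int_{\R}f(x)\e^{-2\pi i mx/\alpha}\intd x
     =\frac1\alpha\,\hat f\!\Bigl(\frac m\alpha\Bigr).
\]
Since $\hat f$ vanishes outside $[-\sigma,\sigma]$ we get $c_m=0$ whenever $|m|>\alpha\sigma$, so the Fourier series of $F_\alpha$ is the finite trigonometric polynomial $\frac1\alpha\sum_{|m|\le\alpha\sigma}\hat f(m/\alpha)\,\e^{2\pi i mx/\alpha}$.

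Finally I would conclude that $F_\alpha$ coincides with this polynomial: both are continuous $\alpha$-periodic functions with identical Fourier coefficients, so their difference has all Fourier coefficients zero and therefore vanishes identically (by the uniqueness theorem for Fourier series, or Fej\'er's theorem applied to the difference). This proves \eqref{eq:poisson_bandlimited:1} together with the concluding claim that $F_\alpha$ is a trigonometric polynomial. I expect the only genuine obstacle to be the first step --- pinning down that a band-limited $L^1$ function belongs to $B^1_{2\pi\sigma}$ so that Nikol'skii's inequality is at our disposal; once absolute and uniform convergence of the periodized series is secured, the periodization computation and the Fourier-uniqueness argument are entirely routine.
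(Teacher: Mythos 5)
Your proof is correct and follows essentially the same route as the paper: identify $f$ as an element of $B^1_{2\pi\sigma}$, use Nikol'skii's inequality for the absolute (and uniform) convergence of the periodization, compute the Fourier coefficients by the standard tiling argument, and observe that the band-limitation leaves only finitely many nonzero coefficients. Your version is in fact slightly more explicit than the paper's at two points --- the justification that a band-limited $L^1\cap C_0$ function lies in the Bernstein space via Fourier inversion, and the final identification of $F_\alpha$ with its Fourier series via uniqueness --- but these are refinements of, not departures from, the same argument.
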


\begin{proof}
  By assumption, we have $\supp \hat f \subseteq [-\sigma, \sigma]$
  for some $\sigma > 0$ so that we may identify $f$ as an element in
  $B^1_{2 \pi \sigma}$.  By Theorem \ref{thm:nikolskiis_inequality},
  we know that
\begin{equation}
\sum_{k \in \Z} | f(x - \alpha k)|
\leq \frac{1 + 2 \pi \sigma \alpha}{\alpha} \|f\|_1
\label{eq:poisson_bandlimited:2}
\end{equation}
for all $x \in \R$.  This shows that $F_\alpha$ is indeed well-defined
and bounded.  In particular,
$F_\alpha \in L^\infty(\R / \alpha \Z) \subset L^1(\R / \alpha \Z)$
and we can compute the Fourier coefficients
\begin{align}
\hat F_\alpha(k)
&= 	\frac{1}{\alpha} \int_0^\alpha 
		\sum_{\ell \in \Z}
			f(x - \alpha \ell)
		\e^{-2 \pi i x \frac{k}{\alpha}} 
	\intd x
= 	\frac{1}{\alpha} \sum_{\ell \in \Z}
		\int_0^\alpha 
			f(x - \alpha \ell)
		\e^{-2 \pi i x \frac{k}{\alpha}} 
\intd x
\label{eq:poisson_bandlimited:3}\\
&= 	\frac{1}{\alpha} \sum_{\ell \in \Z}
		\int_{\alpha \ell}^{\alpha (\ell + 1)} 
			f(x)
			\e^{-2 \pi i x \frac{k}{\alpha}}
		\intd x  
		= 	\frac{1}{\alpha} \int_\R
		f(x) \e^{-2 \pi i x \frac{k}{\alpha}}
	\intd x
  = \tfrac 1\alpha \hat f \left(\tfrac{k}{\alpha}\right).
\end{align}
Interchanging the series and integral in \eqref{eq:poisson_bandlimited:3} 
is allowed by the theorem of Fubini--Tonelli since $x \mapsto \sum_{\ell \in \Z} |f(x - \alpha \ell)|$ 
is uniformly bounded by \eqref{eq:poisson_bandlimited:2} and thus integrable on $[0, \alpha]$.

Since $\hat f$ has compact support, only finitely many Fourier
coefficients are non-zero, so the Fourier series
\begin{equation}
F_\alpha(x) = \frac{1}{\alpha} \sum_{k \in \Z} \hat f \left(\frac k\alpha \right) \e^{2 \pi i \frac{k x}{\alpha}}
\end{equation}
converges uniformly and is indeed a trigonometric polynomial.
\end{proof}

\begin{lemma}\label{lem:stieltjes_bandlimit_smoothness}
  Let $f = \hat \mu_f$, where $\mu_f \in \M(\R)$ fulfills
  $\supp \mu_f \subseteq [-\sigma, \sigma]$ for some $\sigma > 0$.
  Then $f$ is infinitely often differentiable and
  $f^{(n)} = \hat \mu_{f^{(n)}} \in \M(\R)$ for all $n \in \mathbb N$
  with
\begin{equation}\label{eq:stieltjes_bandlimit_smoothness:1}
\mu_{f^{(n)}}
= (-2 \pi i \cdot)^n \mu_f.
\end{equation}
In particular,  $\supp \mu_{f^{(n)}} \subseteq [-\sigma, \sigma]$.
\end{lemma}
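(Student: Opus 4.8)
The plan is to differentiate under the integral sign, exploiting that the compact support of $\mu_f$ makes the dominating function a constant. Since $\supp \mu_f \subseteq [-\sigma, \sigma]$, we may write $f(\xi) = \int_{-\sigma}^{\sigma} \e^{-2\pi i x \xi}\intd\mu_f(x)$ for every $\xi \in \R$. For each fixed $x$ the integrand $h(x,\xi) := \e^{-2\pi i x\xi}$ is smooth in $\xi$ with $\partial_\xi h(x,\xi) = -2\pi i x\, \e^{-2\pi i x\xi}$, and on $\supp\mu_f$ we have the uniform bound $|\partial_\xi h(x,\xi)| = 2\pi|x| \le 2\pi\sigma$. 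As $|\mu_f|$ is a finite measure, this constant bound is $|\mu_f|$-integrable, so (reducing to the total-variation measure via $|\int \cdot\, \intd\mu_f| \le \int |\cdot|\,\intd|\mu_f|$, or splitting $\mu_f$ into its Jordan pieces) the standard differentiation-under-the-integral theorem applies and yields $f'(\xi) = \int_{-\sigma}^{\sigma} (-2\pi i x)\,\e^{-2\pi i x\xi}\intd\mu_f(x)$.

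Next I would identify this right-hand side as the Fourier transform of a measure. Put $g(x) := -2\pi i x\, \chi_{[-\sigma,\sigma]}(x)$, which is bounded and Borel measurable; by the remark on products of a bounded measurable function with a measure, $g\mu_f \in \M(\R)$ with $\|g\mu_f\|_{\mathcal M} \le 2\pi\sigma\,\|\mu_f\|_{\mathcal M}$, and since $g$ vanishes off $[-\sigma,\sigma]$ we get $\supp(g\mu_f) \subseteq [-\sigma,\sigma]$. Because $\mu_f$ is concentrated on $[-\sigma,\sigma]$, the measures $g\mu_f$ and $(-2\pi i\,\cdot)\,\mu_f$ coincide, and $\widehat{g\mu_f}(\xi) = \int_\R \e^{-2\pi i x\xi} g(x)\intd\mu_f(x) = f'(\xi)$. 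Hence $f' = \hat\mu_{f'}$ with $\mu_{f'} = (-2\pi i\,\cdot)\,\mu_f \in \M(\R)$ supported in $[-\sigma,\sigma]$, and in particular $f' \in C_b(\R)$, so $f \in C^1(\R)$.

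Finally I would iterate: $\mu_{f'}$ satisfies exactly the hypotheses placed on $\mu_f$ (regular, finite, supported in $[-\sigma,\sigma]$), so an induction on $n$ gives $f^{(n)} = \hat\mu_{f^{(n)}}$ with $\mu_{f^{(n)}} = (-2\pi i\,\cdot)\,\mu_{f^{(n-1)}} = (-2\pi i\,\cdot)^n\mu_f \in \M(\R)$ and $\supp \mu_{f^{(n)}} \subseteq [-\sigma,\sigma]$; since each $f^{(n)}$ is a Fourier transform of a measure and hence continuous, $f \in C^\infty(\R)$, which is the claim.

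I do not expect a genuine obstacle here: the compact support collapses the domination to a constant. The only points requiring care are the measure-theoretic bookkeeping — that multiplying the compactly supported finite measure $\mu_f$ by the globally \emph{unbounded} function $x \mapsto -2\pi i x$ still yields a finite measure, and that this does not enlarge the support — both of which follow at once from truncating the multiplier to $\supp\mu_f$ and invoking the stated bound $\|g\mu\|_{\mathcal M} \le \|g\|_\infty \|\mu\|_{\mathcal M}$.
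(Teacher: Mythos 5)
Your proof is correct and follows essentially the same route as the paper: differentiation under the integral sign justified by the uniform bound $2\pi\sigma$ on the support of $\mu_f$ together with dominated convergence (the paper spells this out via difference quotients and the mean value theorem), followed by induction on $n$. Your extra remark on truncating the multiplier $x \mapsto -2\pi i x$ to $[-\sigma,\sigma]$ to see that $(-2\pi i\,\cdot)^n\mu_f$ is a finite measure is a detail the paper leaves implicit, but it is the same argument.
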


\begin{proof}
Consider the difference quotients 
$g_h(x, \xi) \coloneqq \frac{1}{h}(\e^{-2\pi i (x + h) \xi} - \e^{-2 \pi i x
  \xi})$ for $x \in \R$, $\xi \in [-\sigma, \sigma]$ and $h \neq 0$.
Due to the mean value theorem,
they are uniformly bounded by
\begin{equation}
|g_h(x, \xi)|
= | \tfrac{1}{h}(\e^{-2\pi i (x + h) \xi} - \e^{-2 \pi i x \xi}) |
\leq \sup_{x^\ast \in [x, x + h]} |(-2 \pi i \xi) \e^{-2 \pi i x^\ast \xi} |
\leq 2 \pi \sigma.
\end{equation}
Since constant functions are integrable w.r.t.\ $\mu_f \in \M(\R)$, 
it follows from the dominated convergence theorem that 
\begin{equation}\label{eq:stieltjes_bandlimit_smoothness:2}
f'(x) 
= \lim_{h \to 0} 
	\int_{-\sigma}^{\sigma} 
		g_h(x, \xi) 
	\intd \mu_f(\xi)
=  	\int_{-\sigma}^{\sigma} 
		\lim_{h \to 0} g_h(x, \xi) 
	\intd \mu_f(\xi)
= \int_{-\sigma}^{\sigma} 
		(-2 \pi i \xi)
		\e^{-2\pi i x \xi} 
	\intd \mu_f(\xi).
\end{equation}
Repeating the above argument starting with $f'$, then $f^{(2)}$, and
so forth, we obtain the claim inductively for all $n \in \N$.
\end{proof}
                                                               
\begin{lemma}[{\cite[Thm~4.4, p.\ 25]{Katznelson04}}]
\label{thm:fourier_decay}
Let $f$ be an infinitely often differentiable, $T$-periodic function for $T > 0$.
Denote the Fourier coefficients of $f$ by
$$\hat f(k) = \tfrac1T \int_0^T f(x) \e^{-2\pi i \frac{k x}{T}} \intd x.$$
Then for all $j \in \N_0$ there exists $C_j > 0$ such that
\begin{equation}
|\hat f(k)| \leq C_j |k|^{-j} \qquad \text{for all } k \in \Z.
\end{equation}
\end{lemma}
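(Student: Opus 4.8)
The plan is to use repeated integration by parts, exploiting the $T$-periodicity to kill the boundary terms at each step. Since this is a classical result (as indicated by the citation to Katznelson), the argument is short. First I would dispose of the trivial cases: for $j = 0$ the claim is just $|\hat f(k)| \le \|f\|_\infty =: C_0$, which is immediate from the integral definition; and for $j \ge 1$ the bound holds vacuously at $k = 0$ under the convention $|0|^{-j} = +\infty$. Hence it suffices to treat $k \in \Z \setminus \{0\}$ with $j \ge 1$.

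The core computation is a single integration by parts in the definition of $\hat f(k)$. Taking $u = f(x)$ and $\intd v = \e^{-2\pi i k x/T}\intd x$, the antiderivative of the exponential contributes a factor $\tfrac{T}{-2\pi i k}$, and the boundary term
$$\Bigl[ \tfrac{T}{-2\pi i k}\, f(x)\, \e^{-2\pi i k x/T} \Bigr]_0^T = \tfrac{T}{-2\pi i k}\bigl( f(T) - f(0) \bigr)$$
vanishes, since $\e^{-2\pi i k} = 1$ and $f(T) = f(0)$ by $T$-periodicity. This yields the recursion $\hat f(k) = \tfrac{T}{2\pi i k}\,\widehat{f'}(k)$, where $\widehat{f'}$ denotes the Fourier coefficients of $f'$ (which exist because $f$ is smooth). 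Iterating this identity $j$ times — legitimate because every derivative $f^{(m)}$ is again smooth and $T$-periodic, so the boundary terms vanish at every step — gives
$$\hat f(k) = \Bigl( \tfrac{T}{2\pi i k} \Bigr)^j \widehat{f^{(j)}}(k).$$

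To finish, I would bound the remaining Fourier coefficient crudely: since $f^{(j)}$ is continuous on the compact interval $[0,T]$, it is bounded, so
$$\bigl| \widehat{f^{(j)}}(k) \bigr| = \Bigl| \tfrac1T \int_0^T f^{(j)}(x)\, \e^{-2\pi i k x/T}\intd x \Bigr| \le \| f^{(j)} \|_\infty.$$
Combining the last two displays and setting $C_j := \bigl( \tfrac{T}{2\pi} \bigr)^j \| f^{(j)} \|_\infty$ yields $|\hat f(k)| \le C_j |k|^{-j}$ for all $k \ne 0$, which completes the proof. The only point requiring genuine care is the vanishing of the boundary terms, which is precisely where both hypotheses — periodicity and smoothness — are used; every other step is a routine estimate.
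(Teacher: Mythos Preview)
Your proof is correct and is the standard integration-by-parts argument. Note that the paper does not actually supply its own proof of this lemma: it is stated with a citation to Katznelson and used as a black box in the proof of Theorem~\ref{lem:bandlimit_periodic_stieltjes}. Your argument is precisely the one found in the cited reference, so there is nothing to compare.
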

\medskip

\noindent
\emph{Proof of Theorem} \ref{lem:bandlimit_periodic_stieltjes}.  By
Lemma \ref{lem:stieltjes_bandlimit_smoothness} we know that $f$ is
infinitely often differentiable and by Lemma \ref{thm:fourier_decay}
we have for all $j \in \N$ that $|\hat f(k)| \leq C_j |k|^{-j}$ for
some $C_j > 0$, so in particular
$(\hat f(k))_{k \in \Z} \in \ell^1(\Z)$.  Define the Borel measure
$\mu$ by
\begin{equation}
\mu \coloneqq \frac{1}{n} \sum_{k \in \Z} \hat f(k) \delta \Bigl(\cdot + \frac{k}{n} \Bigr).
\end{equation}
We have to show that $\mu \in \M(\R)$ and we will use that $(\M(\R), \|\cdot\|_\mathrm{TV})$ 
is the dual space of $(C_0(\R), \|\cdot\|_\infty)$.
Let $\varphi \in C_0(\R)$ be arbitrary, then
\begin{equation}
\Bigl| \int_\R \varphi(\xi) \intd \mu(\xi) \Bigr|
= \Bigl| \frac{1}{n} 
	\sum_{k \in \Z} 
		\hat f(k) 
		\varphi \Bigl(-\frac{k}{n}\Bigr)
\Bigr|
\leq \frac{1}{n} 
	\bigl\| (\hat f(k))_{k \in \Z} \bigr\|_{\ell^1} 
	\bigl\| \varphi \bigr\|_\infty.
\end{equation} 
This shows that $\mu$ indeed defines a continuous linear functional on $C_0(\R)$.
The Fourier transform of $\mu$ is
\begin{equation}
\hat \mu(x)
= \frac{1}{n} \sum_{k \in \Z} \hat f(k) \e^{2\pi i \frac{kx}{n}}
= f(x),
\qquad x\in \R.
\end{equation}
Since the Fourier transform is unique this implies $\mu = \mu_f$.
Finally, by assumption $\supp \mu = \supp \mu_f \subseteq [-\sigma, \sigma]$, so that 
$\hat f(k) = 0$ for all $k \in \Z$ satisfying $|k| > \sigma n$ and 
we obtain \eqref{eq:bandlimit_periodic_stieltjes:1}.
This concludes the proof.  
\hfill $\Box$

\section{Proof of Theorem \ref{lem:sampling_theorem_l1}}
\label{app:b}
\noindent
\begin{proof}[Proof of Theorem \ref{lem:sampling_theorem_l1}]
  The first part can be proved exactly following the lines of the
  classical sampling theorem of Shannon, Whittaker, Kotelnikov, see
  \cite[Thm.~2.29] {PPST2019} for instance.  It remains to show the
  convergence in $L^1(\mathbb R)$.  Applying
  Theorem~\ref{thm:nikolskiis_inequality} to $\phi$, we obtain
  \begin{align}
    \sum_{|k| > M} |f(ak)||\phi(x-ak)|
    &\le
      \sup_{|k|>M} \{|f(ak)|\} \,
      \sum_{k \in \mathbb Z} |\phi(x-ak)|
    \\
    &\le
      \sup_{|k|>M} \{|f(ak)|\} \,
      \tfrac{1 + \pi}a \, \| \phi \|_1.
  \end{align}
  Since the right-hand side vanishes for $M \to \infty$ independently
  of $x$ due to $f \in C_0 (\mathbb R)$, the pointwise convergent
  series $\sum_{k \in \Z} |f(ak)||\phi(x-ak)|$ also converges
  uniformly.  The partial sums are continuous functions such that the
  limit is continuous too and, in particular, measurable.
  Using Levi's monotone convergence theorem \cite[Thm.~2.4.1]{Cohn13},
  we have
  \begin{align}
    \int_\R \Bigr| \sum_{|k| > M} f(ak) \phi(x - ak) \Bigr| \intd x
    &\leq  \int_\R \sum_{|k| > M} |f(ak)| |\phi(x - ak)| \intd x\\
    &= \sum_{|k| > M} |f(ak)| \int_\R |\phi(x - ak)| \intd x\\
    &= \|\phi\|_1 \sum_{|k| > M} |f(ak)|.
  \end{align}
  Since Theorem~\ref{thm:nikolskiis_inequality} ensures
  $\bigl(f(a k) \bigr)_{k \in \Z} \in \ell^1(\Z)$, the last expression converges to
  zero as $M \rightarrow \infty$, which establishes the
  $L^1$-convergence.
\end{proof}

\end{document}